\setlist[enumerate]{topsep=4pt,itemsep=4pt,partopsep=1ex,parsep=0pt}
\setlist[enumerate,1]{label=\emph{(\alph*)}}
\setlist[enumerate,2]{label=\emph{(\roman*)}}
\setlist[enumerate,3]{label=\emph{(\arabic*)}}
\setlist[itemize]{topsep=4pt,itemsep=4pt,partopsep=1ex,parsep=0pt}
\theoremstyle{plain}
\newtheorem{theorem}{Theorem}
\newtheorem{proposition}{Proposition}
\newtheorem{definition}{Definition}
\newtheorem{example}{Example}
\newtheorem{lemma}{Lemma}
\theoremstyle{definition}
\newtheorem{assumption}{Assumption}
\newtheorem{exampleu}[example]{Example}
\renewcommand{\hat}{\widehat}
\renewcommand{\tilde}{\widetilde}
\renewcommand{\paragraph}[1]{{\flushleft \textbf{\color{DarkRed}#1 }}}
\renewcommand{\succeq}{\succcurlyeq}
\renewcommand{\preceq}{\preccurlyeq}
\newcommand{\players}{N}
\newcommand{\coalitions}{\mathcal{C}}
\newcommand{\alternatives}{A}
\newcommand{\feasiblet}{\mathcal{U}}
\newcommand{\outcomesNTU}{\mathcal{O}}
\newcommand{\outcomesTU}{\overline{\mathcal{O}}}
\renewcommand{\Re}{\mathbb{R}}
\newcommand{\hchi}{\raise0.4ex\hbox{$\chi$}}
\newcommand{\abs}[1]{ \left| #1 \right| }
\newcommand{\norm}[1]{ \left| \left| #1 \right| \right| }
\newcommand{\conv}{\text{co}}
\newcommand{\feasible}{\mathcal{V}}
\newcommand{\fcr}{\feasible_{CR}}
\newcommand{\firt}{\mathcal{U}_{IR} }
\newcommand{\fcrt}{\mathcal{U}_{CR}}
\newcommand{\cf}{\underline{u}}
\newcommand{\cm}{\underline{v}^{\circ}}
\newcommand{\im}{\underline{v}}
\newcommand{\cma}{\underline{a}^{\circ}}
\newcommand{\ima}{\underline{a}}
\newcommand{\transfers}{T}
\DeclareMathOperator*{\argmin}{arg\,min}
\DeclareMathOperator*{\argmax}{arg\,max}
\newcommand{\histories}{\mathcal{H}}
\newcommand{\historiesTU}{\overline{\mathcal{H}}}
\newcommand{\seccoal}{\mathcal{S}}
\newcommand{\sincoal}{\mathcal{N}}
\newcommand{\esscoal}{\mathcal{E}}
\renewcommand{\equiv}{:=}
\newcommand{\firms}{\mathcal{F}}
\newcommand{\workers}{\mathcal{W}}
\newcommand{\payoffs}{V}
\newenvironment{rtheorem}[1]{%
  \manualtheoreminner
}{\endmanualtheoreminner}
\newenvironment{rexample}[1]{%
  \manualexampleinner
}{\endmanualexampleinner}
\begin{document}


\title{Coalitions in Repeated Games\thanks{We thank anonymous referees, Dan Barron, Federico Echenique, Jon Eguia, Matt Elliott, Drew Fudenberg, Ben Golub, Yingni Guo, Navin Kartik, Scott Kominers, Maciej Kotowski, Elliot Lipnowski, George Mailath, Francesco Nava, \href{refine.ink}{Refine.ink}, Ziwei Wang, Alex Wolitzky, Nathan Yoder, and various audiences for helpful comments.}}

\author{S. Nageeb Ali\thanks{Department of Economics, Pennsylvania State University. Email: \href{mailto:nageeb@psu.edu}{nageeb@psu.edu}.}\and Ce Liu\thanks{Department of Economics, Michigan State University. Email: \href{mailto:celiu@msu.edu}{celiu@msu.edu}.} }

\date{April 6, 2026}

\maketitle

\vspace{1.4\baselineskip}
\begin{abstract}

This paper proposes a framework and solution concept for repeated coalitional behavior. We model history-dependent schemes that deter coalitions from blocking using continuation promises and punishments. We evaluate the effectiveness of these schemes across a range of settings, and apply our results to repeated matching and negotiations.

\end{abstract}

\vfill

\thispagestyle{empty} 
\clearpage

 \setcounter{page}{1}
\setstretch{1.25}

\section{Introduction}

The study of repeated games models  history-dependent schemes that enable players to cooperate even if each myopically favors defection. 
This canonical approach focuses on non-cooperative play in which actions are chosen only by individuals. 
But, in many contexts, analysts have found it more useful to allow groups of players to act jointly.  For instance, matching and network theory model ``pairwise stable arrangements'' from which no pair of players can profitably deviate. Political economy models emphasize the ``Condorcet Winner,'' a policy preferred by a majority of voters to all others. More broadly, the study of cooperative games studies the ``core,'' an arrangement that no group of players would find it profitable to block. These notions are all modeled for static interactions, without harnessing the power of promises and punishments. 

A natural question is how to marry these two approaches to group behavior. In this regard, we make two contributions. First, we develop a tractable and portable framework for studying repeated interaction in matching, voting, and other coalitional games. Second, we identify settings in which dynamic incentives deter group defection and those in which they fail to do so.

We illustrate our framework using the \emph{Roommates Problem}. Ann, Bella, and Carol decide who will room together. The hitch is that only two people can share a room, leaving at least one person out. Each person prefers to have a roommate, and each has a favorite; \Cref{Table-Roommates} depicts their payoffs.
\begin{table}[h]\centering
	\begin{tabular}{cccc} 
	\toprule   
    & {Ann}  & {Bella}  & {Carol}  \\ 
Ann & $1$ & $3$ & $2$ \\ 
Bella & $2$ & $1$ & $3$ \\ 
Carol & $3$ & $2$ & $1$ \\ \bottomrule
 \hline
\end{tabular}
\caption{Payoffs of Row Player from matching with Column Player (or remaining unmatched).}\label{Table-Roommates}\vspace{-.1in}
\end{table}

As is well known, no arrangement is pairwise stable. For instance, were Ann and Bella paired as roommates, Bella and Carol would each gain if they defected and roomed together instead. Our point of entry is to see how punishment and rewards can ``solve'' this problem.

Suppose that instead of a one-time decision, the trio made choices monthly. Each accrues the flow payoffs described above, and weights them by the per period  discount factor $\delta$. As in repeated games, no player can commit to her future behavior on- or off-path. What long-run stable matches can be supported through continuation play? 

\Cref{Figure-Automaton-Roommate} depicts a stable scheme. On the path of play, Ann and Bella room together each month, leaving Carol out. While Bella and Carol might like to defect and share a room this month, the scheme assures that they refrain from doing so if $\delta$ exceeds $1/2$. For Bella anticipates that after the deviation, starting from next month, Ann and Carol would room together and she would then be left out. Her short-term gain from rooming with Carol would not offset her long-term loss, since $(1-\delta)3+\delta(1)\leq 2$. Moreover, the punishment is itself credible because the prescription following every history, including those off-path, is self-enforcing. 

\begin{figure}[t]
\begin{center}
\includegraphics[width=3.5in]{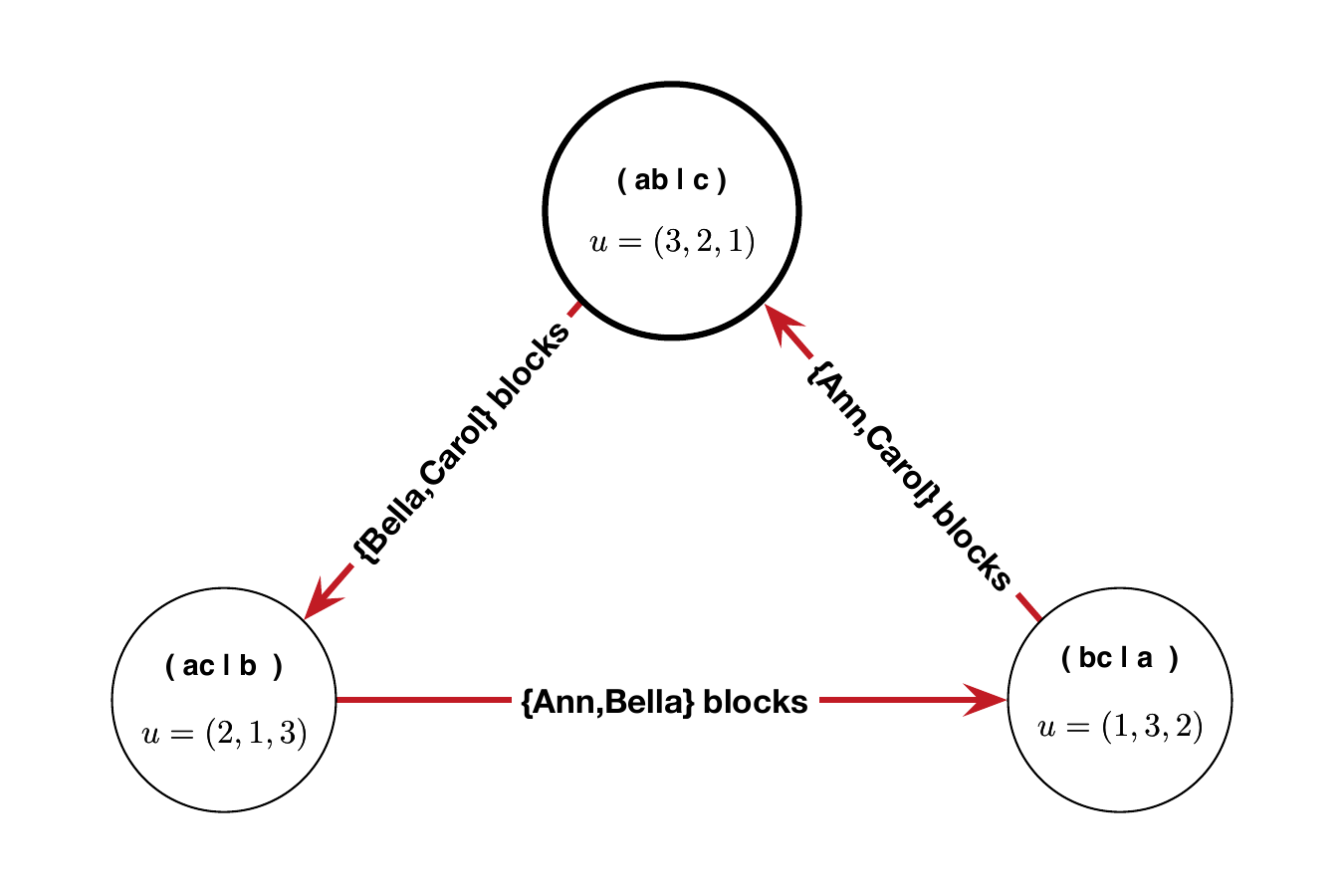}	
\vspace{-0.2in}
\caption{A perfect coalitional equilibrium for the Roommates Problem if $\delta\geq 1/2$.}	
\label{Figure-Automaton-Roommate}
\end{center}\vspace{-.3in}
\end{figure}

We model schemes of this form in general games. We consider the repeated play of an abstract stage game that accommodates many settings: (i) a strategic form game in which players choose action profiles, (ii) a characteristic function game in which players obtain payoffs based on a partitional structure, (iii) matching games with and without transferable utility, and (iv) voting games in which different coalitions have varying power to push through policies. In this repeated game, we propose history-dependent schemes such that no coalition profits from blocking at any history given how it affects continuation play. We call such schemes \emph{perfect coalitional equilibria} (PCE). 

PCEs are inherently recursive. In strategic form games, PCE refines subgame perfect equilibria. More generally, PCE offers a tractable way to model how continuation promises shape coalitional behavior; PCE's recursive nature implies that its payoff set can be obtained via self-generation approaches developed by \cite{abreu1990toward}. 
In some settings, it can be even simpler: all PCE-supportable payoffs can be supported by stationary PCE if the stage game exhibits \emph{default-independent power}. This property holds in the characteristic function games studied in cooperative game theory as well as in matching and voting models.

Our main results characterize when continuation promises and punishments deter coalitional blocking. We offer conditions under which history dependence thwarts coalitional deviations so that the set of PCE-supportable payoffs is large. Conversely, we also identify settings in which coalitional deviations choke the possibility for cooperation, resulting in ``anti-folk'' theorems. Underlying our results is a simple principle: a coalition can withstand punishments if and only if its members have highly aligned interests. Then, all members of that coalition enjoy a high minmax, considerably above their individual minmax. However, if there is any wedge in members' incentives, player-specific punishments can splinter coalitions. Then, the set of PCE-supportable payoffs virtually coincides with those that are feasible and \textit{individually} rational. 

Building on this principle, we explore features that align coalitions' interests. One such feature, we find, are \emph{strongly symmetric} schemes in which players behave symmetrically after every history. Given their tractability, these schemes feature often in the study of repeated games, such as grim-trigger punishments used to support cooperation in a repeated prisoner's dilemma or to sustain collusion among oligopolists. Although these schemes constitute subgame perfect equilibria, we show that they typically are not PCE. The reason is that once players' incentives are aligned, punishments are no longer credible. By contrast, schemes that feature asymmetric play off-path can credibly deter blocking coalitions and support a larger payoff set.

We also study whether transferable utility aligns incentives within a coalition. One might have expected the answer to be yes: if a coalition achieves a net gain by blocking, transferable utility allows it to distribute those gains among its members. However, we show that if all transfers are publicly observed, a PCE can break coalitions apart by conditioning its continuation play on who pays whom. Only if a coalition can make transfers ``secretly''---that is, without the transfers being publicly observed---can it entirely align its incentives. Such a coalition is then difficult to punish, which then limits what a PCE can enforce.

We consider a detailed application of our results to repeated labor-market matching, building on \cite{kelsocrawford1982}'s seminal work. Here, we study the kinds of matches and wages that can be supported through repeated play. It turns out that the set of supportable outcomes hinges crucially on the transparency of past wages. Under wage transparency, a vast range of outcomes can be supported, enabling workers or firms to capture much of the surplus. By contrast, if wage terms are observed only by the employer and employee, there is a complete collapse of intertemporal incentives: the supportable payoff set reduces to the core of the stage game. As to who then benefits from wage transparency depends on economic primitives. Workers benefit if they are plentiful or their marginal returns fall quickly. Absent transparency, competitive forces bid down their wages; by contrast, wage transparency enables them to enforce higher wages through collective-bargaining schemes. If workers are scarce or their marginal returns fall slowly, then it is firms who profit from wage transparency because that enables them to collusively suppress wages. Thus, our application highlights a new dimension to the debate on wage transparency, connecting it to the side of the market that it empowers to collude. 

In the Supplementary Appendix, we consider a second application to repeated negotiations when some players have veto power. The core of this stage game entails that veto players capture the entire surplus, making them de facto dictators. Against that backdrop, we show that history dependence can promote egalitarianism but that these schemes destabilize once players can make secret side-payments. 

This paper proceeds as follows. \Cref{Section-Model} describes the basic framework. \Cref{Section-NTU} identifies structural properties of PCE and characterizes its payoff set. \Cref{Section-Transfers} studies the game augmented with transfers. \Cref{Section-Applications} applies our results to labor market matching. \Cref{Section-Discussion} concludes. All proofs are in appendices. The remainder of this introduction briefly discusses the related literature. 

\paragraph{Related Literature:} We build on results in repeated games, particularly \cite{fudenberg1986folk}, \cite{abreu1994folk}, and \cite{wen1994folk}; see \cite{mailathrepeated} for a textbook treatment. Our result on how secret side-payments can undermine intertemporal incentives also aligns with the findings of  \cite{barron2021use}, who model how it disrupts collective punishment in relational contracting.

Several precursors model coalitional deviations in repeated strategic form games. \cite{aumann1959acceptable} and \cite{rubinstein1980strong} study Strong Nash and Strong Perfect Equilibria of infinitely repeated games. These concepts presume that deviating players can commit to long-term plans, even if those plans are not credible. \cite{demarzo1992coalitions} adds interim incentive constraints and identifies benefits of scapegoat policies in finite horizon games. More closely related, \cite{greenberg1990theory} develops a general theory of social situations, encompassing strategic form, extensive form, and characteristic function games. In Chapter 9 of this book and \cite{greenberg1989application}, he applies this theory to repeated strategic form games and defines a set-valued concept,  \textit{Conservative Stable Standard of Behavior} (CSSB). 
Intuitively, a CSSB models worst-case decisionmaking: a player contemplates deviating only if she is guaranteed to gain for every continuation play allowed by the standard, and a CSSB describes behavior that is immune to such deviations.
\citeauthor{greenberg1989application} focuses on individual deviations and finds that the largest ``nondiscriminating'' CSSB coincides with the set of paths supported by subgame perfect Nash equilibria. He also proposes an extension that permits coalitional deviations but does not obtain general results for this setting. Instead, he illustrates using a specific common-interest game that every nondiscriminating ``coalitional'' CSSB selects the efficient action profile. This prediction matches PCE's for this game, as seen in our \Cref{Theorem-NTU}. While we formulate our solution concept differently---our approach is rooted in rational expectations rather than conservatism---we establish a connection between CSSB and PCE in \cite{aliliu2026note}. Therein, we show that, for every discount factor, the largest nondiscriminating CSSB for \citeauthor{greenberg1989application}'s coalitional repeated game coincides with the set of paths supportable by a PCE.

Outside of strategic form games, \cite{BS2009} develop the notion of \textit{Dynamic Condorcet Winners} (DCW) for repeated elections; our solution concepts coincide in that setting. While they characterize some properties of DCW, they do not identify its limits. Hence, apart from our more general focus, many issues central to our study do not feature in their work.

Like the work above, our baseline approach takes as given the exogenously specified blocking technology in a cooperative game and studies its implications for repeated play. A different route would instead demand that blocking be coalition-proof in the sense of \cite{bernheim1987coalition}. They define inductively a solution concept for one-shot and finite-horizon games that is immune to ``credible'' coalitional deviations, where credibility requires that such deviations themselves be immune to credible sub-coalitional deviations. In the Supplementary Appendix, we develop the notion of \textit{perfect coalition-proof equilibrium} (PCPE) that combines ideas of PCE and coalition proofness. Because coalition proofness imposes a more demanding criterion for which deviations are admissible, every PCE is necessarily a PCPE. However, we show that the two concepts result in identical coalitional minmaxes and that the sets of PCE and PCPE payoffs coincide as $\delta\rightarrow 1$.

Also related is the work on renegotiation-proofness initiated by \cite{bernheim1989collective} and \cite{farrell1989renegotiation}; also see  \cite{miller2013theory} and \cite{safronov2018contestable}. Renegotiation embodies the idea that players can collectively replace an agreed upon continuation with another that they all prefer. A PCE, by contrast, studies the complementary question of when coalitions refrain from profitably blocking \emph{given} the continuation plan. Like \cite{greenberg1989application}, our approach presupposes that players have a shared understanding of future play, can form temporary agreements to block, but cannot coordinate a shift in their beliefs about continuation play. This shared understanding must be internally consistent in that players can see that no coalition can profitably block at any future stage, given the continuation. An implication of \Cref{Theorem-NTU} is that if players' interests are fully aligned---i.e., a common-interest game---then the unique PCE selects the efficient action profile, a prediction  shared by \citeauthor{farrell1989renegotiation}'s strong renegotiation-proofness. 
However, the two notions diverge when interests are misaligned. A further difference is that the renegotiation literature largely emphasizes renegotiation by the \textit{grand} coalition alone, whereas our results stem from allowing other coalitions to deviate. Two exceptions in this regard are \cite{genicotray2003} and \cite*{mailath2024trust}, who evaluate risk-sharing arrangements that are immune to subcoalitional renegotiation.

Our work connects to that on farsighted coalition formation. This literature has two interrelated strands. One evaluates farsighted coalitional negotiations over behavior within a static game; see, for example, \cite{chwe1994farsighted}, and \cite{ray2015farsighted}. A more closely related strand, exemplified by \cite{KR2003} and \cite{gomes2005dynamic}, studies dynamic games in which payoffs accrue in real time and coalitions evaluate moves based on discounted continuation values. Our work makes two departures from this strand. One is that we study a repeated game, in which the past behavior has no direct payoff relevance for the future; by contrast, this literature studies stochastic games in which the alternative chosen in period $t$ is the designated default for period $(t+1)$. The second is that we model the implications of history dependence whereas the prior work largely concerns Markov perfect behavior. An exception is \cite{vartiainen2011dynamic} who models history-dependence with limit-of-means discounting, and his focus is on existence of deterministic absorbing processes.

Dynamic considerations feature in studies of matching in which players account for future play when deciding with whom to match, e.g., \cite{corbae2003directed},  \cite{damiano2005stability}, \cite{kadam2018multiperiod,kadam2018time}, \cite{doval2022dynamically}, and \cite{kotowski2024}. \cite{rostek2024matching} propose a stability notion for static matching in which similar considerations emerge from players' strategic thinking. 

Our application to labor market matching and wage transparency contribute to a growing literature, surveyed in \cite{cullen2024jep}. In a bargaining model with incomplete information, \cite{cullen2023equilibrium} show that wage transparency disadvantages workers. We offer a complementary perspective, studying the implications of wage transparency for the repeated game. Beyond this application, we view incorporating long-run incentives in \cite{kelsocrawford1982}'s workhorse model to be of independent interest. In the static setting, this framework has been enriched in various directions \citep[e.g.,][]{hatfield2005matching} but relatively little is known about how carrots and sticks affect these matching markets. 

Since our initial draft, \cite{bardhi2024early} use our approach to model stability in labor markets in which firms learn about workers' types, evaluating how early-career discrimination can result in persistent wage gaps. \cite{liu2023stability} and \cite{liu2024self} also apply the framework here to study repeated matching between long-run firms and short-run workers. The former studies matching without transfers and shows that repetition overturns the Rural Hospital Theorem; the latter shows that repetition in a setting with transferable utility can lead to stable solutions despite peer effects and complementarities. Neither of these papers considers matching between long-run players nor the effect of wage transparency.

\vspace{-.15in}
\section{Model}\label{Section-Model}
Players $\players \equiv \{1,2, \ldots, n \}$ interact in a repeated game at $t=0,1,2,\ldots$. A coalition is a nonempty subset of $\players$, and we denote the set of coalitions by $\coalitions \equiv 2^{\players}\backslash\{\emptyset\}$. We also denote the set of singleton coalitions by $\sincoal$.

\paragraph{The Stage Game.}
In each period, the players collectively choose an \emph{alternative} $a$ from  $\alternatives$, a compact metrizable space. The alternative $a$ generates payoff vector $v(a)\equiv (v_1(a),\ldots,v_n(a))\in \Re^{n}$ for players, where the mapping $v:\alternatives\rightarrow \Re^n$ is continuous.

Given an alternative, a coalition of players can choose to block or participate in it. Our specification allows for blocking by either a single coalition or multiple disjoint coalitions, but the former is what matters when evaluating stability. If a generic coalition $C$ alone blocks alternative $a$, then it can choose any alternative in $E_C(a)$. The correspondence $E_C:\alternatives\rightrightarrows  \alternatives$ is $C$'s \emph{effectivity correspondence} and offers a standard approach to model coalitional power.\footnote{The use of effectivity functions to model the cooperative game dates back to \cite{rosenthal1972cooperative}, \cite{moulin1982cores}, \cite{greenberg1989application,greenberg1990theory}, and \cite{chwe1994farsighted}.}
For every coalition $C$, $E_C(\cdot)$ is continuous, compact-valued, and reflexive  (i.e., $a\in E_C(a)$). In our analysis, we also assume that larger coalitions can do more: for each alternative $a$, $E_{C'}(a)\subseteq E_C(a)$ for $C'\subseteq C$. This assumption is for notational convenience; we detail in footnotes, when necessary, how to adapt notation if this assumption fails.

We interpret alternatives as a description of behavior, for instance, an action profile in a strategic form game, the matches that form in a matching game, or the partition in a characteristic function game. Given an alternative $a$ that players anticipate will occur, if a coalition $C$ blocks, the set $E_C(a)$ reflects the set of alternatives it can freely choose \textit{given the behavior of others}.

To see what this formulation captures, we revisit the Roommates Problem described in the introduction. An alternative is a rooming arrangement, and the set of alternatives, $A=\{ab|c,bc|a,ac|b,a|b|c\}$ is that of all arrangements, where $ij|k$ denotes $i$ and $j$ rooming together leaving $k$ out. For the alternative that puts Ann and Bella together, Bella could block as an individual and choose an arrangement in  $E_{\{\text{Bella}\}}(ab|c)= \{ab|c,a|b|c\}$. This specification is tantamount to her only choice \emph{as an individual} being whether to accept or reject Ann as a roommate. Carol has even less power---$E_{\{\text{Carol}\}}(ab|c)= \{ab|c\}$---because she cannot room with someone else without that player's consent. But by teaming up and blocking as a pair, Bella and Carol could choose any alternative in $E_{\{\text{Bella},\text{Carol}\}}(ab|c)= \{bc|a,ab|c,a|b|c\}$ where the first element denotes the pair rooming together. 

The Roommates Problem is a specific illustration but the abstract form is considerably more general. We formalize below how to embed strategic form, voting, and characteristic function games in this setup.

\begin{example}\label{Example-NormalFormGame}
Consider a strategic form game in which player $i$'s action set, $\alternatives_i$, is compact: $A_i$ can be either the set of pure actions or the set of mixtures over finite actions. The set of alternatives comprises all action profiles $\alternatives\equiv \times_{i=1}^n \alternatives_i$. The effectivity correspondence is $E_C(a)\equiv\left\{a'\in \alternatives: a'_j=a_j\text{ for all }j\notin C\right\}$, modeling the possibility for a blocking coalition to choose action profiles in which players outside the coalition do not change their actions. This formulation extends the standard definition for individual deviations that are used to define Nash equilibria.\footnote{We note two conceptual considerations. First, this example assumes that when coalition $C$ blocks action profile $a$, the actions of players outside the coalition, $a_{-C}$, are held fixed. Our general formulation is agnostic about what players outside coalition $C$ do; alternative specifications of how these players respond correspond to alternative specifications of $E_C(\cdot)$. Second, we  do not require that the actions chosen by coalition $C$ be individual best responses for each of its members. As we describe later, we develop a coalition-proof variant of our solution concept in the Supplementary Appendix that imposes stronger requirements on this margin.} 
\end{example}

\begin{example}\label{Example-Voting}
Consider majority voting, as in \cite{BS2009}. Let $\mathcal W$ be the set of coalitions that have at least $\left \lceil{\frac{n}{2}}\right \rceil $ players. The effectivity correspondence specifies that for every $a$, $E_C(a)= \alternatives$ if $C\in \mathcal W$, and $E_C(a)=\{a\}$ otherwise. 
\end{example}

\begin{example}\label{Example-CoalitionalGame}
Consider a characteristic function game $(N,U)$, where for each coalition
$C \subseteq N$, the set $U(C) \subseteq \mathbb{R}^{|C|}$ specifies the payoff
vectors that coalition $C$ can feasibly secure if it forms. An alternative is a
pair $a=(\pi,u)$, where $\pi$ is a partition of $N$ and $u \in \mathbb{R}^n$
is a payoff vector satisfying $u_C \in U(C)$ for every coalition $C \in \pi$.
Payoffs are projections of the alternative onto its second coordinate space, that is, $v((\pi,u))=u$. For every coalition $C$ and alternative $a$, the effectivity correspondence $E_C(a)$ specifies the set of alternatives to which coalition $C$ may move.\footnote{Special cases of this example include the Roommates Problem and two-sided matching.}
\end{example}

\paragraph{Outcomes, Histories, and Plans.}
We develop our notation recursively. A plan specifies a default alternative, say $a$, at the beginning of period $t=0$. This default is chosen if no coalition blocks it, in which case we record the stage-game outcome as $(a,\emptyset)$. If coalitions $\{C_1, \ldots, C_k\}$ block the default, and their moves result in the alternative $a'$, we record the stage-game outcome as $(a', \{C_1, \ldots, C_k\})$. Based on the outcome at $t=0$, a plan specifies a default at $t=1$, and the game continues recursively.\footnote{We assume that coalitional blocking is directly observable so as to hew closely to repeated games with perfect monitoring, which we view to be the natural starting point. In some settings, the identity of a blocking coalition is implied by the chosen alternative. But, in other settings (e.g., matching), the chosen alternative alone might not be enough to encode who initiated the block. For instance, in the Roommates Problem, if the alternative $ab|c$ is blocked and we only record that $a|b|c$ is chosen instead, it does not distinguish which of Ann and Bella blocked.}

Proceeding abstractly, let $\mathcal{P}$ be the set of all partitions over players, and define $\mathcal{B}\equiv \{B\subseteq 2^{\players}:  B \subseteq \pi \text{ for some }\pi\in \mathcal{P}\}$, so that each $B$ in $\mathcal B$ is a collection of disjoint coalitions. We denote the set of stage-game outcomes by $\outcomesNTU\equiv \alternatives\times \mathcal{B}$; an outcome specifies an alternative $a$ and a collection of disjoint blocking coalitions. At the beginning of period $t$, the history $h \equiv (a^\tau, B^\tau)_{\tau=0}^{t-1}$ records the stage-game outcomes up to the start of period  $t$. We denote the set of all $t$-period histories by $\histories^t$ for $t\geq 1$. The set of all histories is $\histories\equiv \bigcup_{t=0}^\infty \histories^t$, where $\histories^0=\{\emptyset\}$. A \emph{plan} $\sigma:\histories\rightarrow \alternatives$ specifies a default alternative following each history.

\paragraph{Payoffs.}
A {path} $(a^t)_{t=0,1,2,\ldots}$ is an infinite sequence  of alternatives; from that path, player $i$ accrues a normalized discounted payoff of $(1-\delta) \sum_{t=0}^{\infty} \delta^t v_i(a^t)$, in which $\delta$ in $[0,1)$ is a common discount factor.\footnote{In this repeated game, the history affects future payoffs only through the continuation plan; the alternative chosen in period $t$ has no direct bearing on future payoffs or the set of feasible alternatives. In this way, our setup departs from the stochastic games studied by \cite{KR2003} and \cite{gomes2005dynamic} in which the alternative chosen at the end of period $t$ becomes the default in period $(t+1)$ and thereby directly restricts the set of feasible alternatives in that period.} After a history $h$, a plan $\sigma$ recursively specifies a path of default alternatives and we denote by $U_i(h|\sigma)$ player $i$'s payoff from that path.\footnote{To be explicit, after a period-$t$ history $h$, the plan $\sigma$ specifies a default alternative $\sigma(h)$ for period $t$. If no coalition blocks this default, the resulting period-$(t+1)$ history is $(h,\sigma(h),\emptyset)$, where $\emptyset$ indicates that no blocking occurred in period $t$. Given this extended history, the plan then specifies the next default alternative $\sigma(h,\sigma(h),\emptyset)$ for period $(t+1)$. Iterating this construction yields the path of default alternatives induced by plan $\sigma$ after history $h$: $(\sigma(h),\sigma(h,\sigma(h),\emptyset),\ldots)$.}

\paragraph{Solution Concept.}
Before describing our solution concept, we restate the ``static {core}'' in the language of this model. In the stage game, coalition $C$ profitably blocks alternative $a$ if there exists $a'\in E_C(a)$ such that $v_i(a') > v_i(a)$ for all $i\in C$. An alternative $a$ is a \textit{core alternative} if it cannot be profitably blocked by any coalition. A payoff vector $\tilde{v}$ is in the \textit{core} if there exists a core-alternative $a$ such that $\tilde{v}=v(a)$. For example, in the Roommates Problem, $ab|c$ fails to be a core alternative because Bella and Carol together can profitably block it.

We build on this notion in the repeated game: when players contemplate blocking the alternative $\sigma(h)$ specified by plan $\sigma$ at history $h$, they care not only about their instantaneous payoffs but also about how their choices today affect future behavior. 
\begin{definition} \label{Definition-PCE1}
Coalition $C$ \textbf{profitably blocks} plan $\sigma$ at history $h$ if there exists $a'\in E_C(\sigma(h))$ such that for all $i\in C$, 
\begin{align*}
    {(1-\delta)} v_i(a') \;{+} \;{\delta}{U_{i}(\, h, (a',\{C\}) \mid \sigma)} > {U_i(h\mid \sigma)}.
\end{align*}
\end{definition}

\begin{definition} \label{Definition-PCE2}
A plan $\sigma$ is a \textbf{perfect coalitional equilibrium} (PCE) if it cannot be profitably blocked by any coalition {at any history}.	
\end{definition}

A PCE is a ``self-enforcing'' plan in that, given the continuation play, no coalition finds it profitable to block.\footnote{An alternative definition of profitable blocking might stipulate that each coalition member gains weakly and at least one does so strictly. This modification would not affect our results.} In the language of self-generation, the alternative specified at each history is \emph{enforced} by continuation promises that themselves are credible given that the requirement is imposed at every history (including those off-path). Thus, the continuation of a PCE at every history must itself be a PCE. This recursive form implies that the set of PCE supportable payoffs is amenable to dynamic programming \`a la \cite{abreu1990toward}. If $\delta=0$, PCEs implement only core alternatives of the stage game. Hence, a PCE may not exist if the stage game has an empty core  and players are sufficiently myopic. If the core is nonempty, then a plan that specifies a core alternative $a^*$ after every history is a PCE for every $\delta\geq 0$.\footnote{The Bondareva-Shapley Theorem establishes that a cooperative game has a nonempty core if it is balanced. Thus, under these conditions, a PCE is also guaranteed to exist for every $\delta\geq 0$.} 

Before characterizing PCE, we briefly compare it to other solution concepts. Fundamentally, a PCE captures the idea that coalitions can block through temporary agreements but coalition partners cannot commit in advance to the agreements they will sign in the future. By contrast, Strong Nash Equilibrium \citep{aumann1959acceptable} and Strong Perfect Equilibrium \citep{rubinstein1980strong} allow deviating coalitions to commit to a strategy in the repeated game, including how they will behave in every future period. Relative to those concepts, PCE therefore takes a conservative stance on the agreements that blocking coalitions can form. 

From a different perspective, however, PCE may endow deviating coalitions with too much power, since a blocking coalition need not worry about further deviations by a subcoalition in the same period. To address this concern, we formalize perfect coalition-proof equilibrium (PCPE) in the Supplementary Appendix,  combining the ideas above with coalition proofness \citep{bernheim1987coalition}. Although the resulting solution concept is distinct, we find that it yields identical coalitional minmaxes and supports the same payoff set when players are patient.

\section{What Payoffs Are Supported by PCE?}\label{Section-NTU}
\vspace{-.05in}
\subsection{Coalitional Minmaxes}\label{Section-Alignment}

In the introduction, we mentioned how coalitions can withstand punishments if they share aligned interests, which in turn limits the scope of PCE-supportable payoffs. We illustrate this phenomenon using the common-interest game depicted in \cref{table:EU}, where we adopt the specification of coalitional moves stipulated in \Cref{Example-NormalFormGame}.

\begin{table}[ht]
\centering
\setlength{\extrarowheight}{2pt}
\begin{subtable}{0.4\textwidth}
\begin{center}
\hspace{-0.5in}
    \begin{tabular}{cc|c|c|}
    & \multicolumn{1}{c}{} & \multicolumn{1}{c}{$L$}  & \multicolumn{1}{c}{$R$} \\\cline{3-4}
    & $U$ & $1,1$ & $0,0$ \\\cline{3-4}
    & $D$ & $0,0$ & $0,0$ \\\cline{3-4}
    \end{tabular}
\end{center}\vspace{-.1in}
\caption{\label{table:EU}}
\end{subtable}
\begin{subtable}{0.4\textwidth}
\begin{center}
\hspace{-0.5in}
    \begin{tabular}{cc|c|c|}
    & \multicolumn{1}{c}{} & \multicolumn{1}{c}{$L$}  & \multicolumn{1}{c}{$R$} \\\cline{3-4}
    & $U$ & $1,1$ & $-\epsilon, \epsilon$ \\\cline{3-4}
    & $D$ & $0,0$ & $0,0$ \\\cline{3-4}
    \end{tabular}
\end{center}\vspace{-.1in}
\caption{\label{table:NEU}}
\end{subtable}
 \caption{Payoffs in (A) are perfectly aligned while those in (B) are slightly misaligned ($\epsilon >0$).}\label{table:2player-equivalent}
\vspace{-.1in}
\end{table}
This game has a unique PCE, which prescribes the efficient action profile \((U, L)\) at every history guaranteeing each player a payoff of \(1\). To see why, let $\underline{w}$ denote the infimum of the normalized discounted payoffs from all PCEs, and consider an arbitrary PCE in which each player accrues $w\in [0,1]$. Since the continuation of a PCE at any history must itself be a PCE, if the pair blocks the alternative in the first period and chooses $(U,L)$ instead, each player receives at least $(1-\delta) + \delta \underline{w}$. The pair profits from the deviation unless $w \ge (1-\delta) + \delta \underline{w}$. Because this inequality must hold for $w$ arbitrarily close to $\underline{w}$, it then follows that $\underline{w}\ge 1$.

In this common-interest game, there is a gap between PCE and subgame perfect equilibrium (henceforth SPE) payoffs: as $(D,R)$ is  a Nash equilibrium of the stage game, all payoffs in $[0,1]$ can be supported by SPE of the repeated game. It turns out that the complete alignment of preferences is both sufficient and necessary for this gap. We find that coalitions of perfectly ``like-minded'' players can guarantee themselves a high coalitional payoff across all PCEs, regardless of discount factors. But misaligning preferences ever so slightly disrupts coalitional power. 

For instance, our analysis shows that in the stage game of \cref{table:NEU}, PCE can support payoffs arbitrarily close to $0$ when players are patient. We discuss why the logic used for the common-interest game does not apply here. Let $\underline{w}_i$ denote the infimum of player $i$'s PCE payoffs. In a PCE, the pair do not block and choose $(U,L)$ if at least one of them finds it unprofitable. Hence, a PCE payoff $(w_1,w_2)$ must satisfy at least one, but not necessarily both, of the following inequalities: 
\begin{align*}
w_1 \ge (1-\delta) + \delta \underline{w}_1 \;\;\text{or}\;\;
w_2 \ge (1-\delta) + \delta \underline{w}_2.    
\end{align*}
Because players' payoffs are not perfectly aligned, there exists a sequence of PCE payoff profiles $(w_1,w_2)$ such that $w_1 \to \underline{w}_1$ while $w_2$ stays bounded away from $\underline{w}_2$, and along this sequence the first inequality fails but the second holds. In these PCE, player $1$ would profit from blocking with player $2$ but player $2$ does not. By using continuation play that punishes one player and not the other, such PCE can push players' payoffs below $1$ without inviting a coalitional deviation. In contrast, both players are simultaneously rewarded and punished in the common-interest game, which makes it impossible to push payoffs below $1$. 

Generalizing these ideas, the right formulation of alignment here uses \cite{abreu1994folk}'s notion of \emph{equivalent utilities}: players $i$ and $j$ have {equivalent utilities} if there exist $k>0$ and $c\in \Re$ such that $v_j(a) =  k v_i(a) +c \text{ for all } a\in \alternatives$; otherwise, their utilities are not equivalent. We partition the set of players on this basis; let $C(i)$ be the set of players with whom player $i$ shares equivalent utilities.

This set leads to player $i$'s \textit{coalitional minmax}, namely the lowest payoff that she can be pushed down to when coalition $C(i)$ collectively best responds.
\begin{equation}
    \tag{Player $i$'s coalitional minmax}
    \label{Equation-coalitionminmax}
	\cm_i \equiv \min_{a\in \alternatives} \max_{a'\in E_{C(i)}(a)} v_i(a').
\end{equation}
The pure minmax is relevant because plans are pure, specifying an alternative following every history.\footnote{The term above is also well-defined as $A$ is compact, $v(\cdot)$ is continuous, and $E_C$ is continuous and compact-valued. Assuming that $E_C$ is monotone in $C$ simplifies the expression; absent monotonicity, the coalitional minmax is 
$\cm_i \equiv \min_{a\in \alternatives} \max_{C\subseteq C(i)} \max_{a'\in E_{C}(a)} v_i(a')$. This expression coincides with that above if $E_C(a)\subseteq E_{C(i)}(a)$ for every player $i$, coalition $C\subseteq C(i)$, and alternative $a$.} Using $\feasible$ to denote the convex hull of stage-game payoffs, we define $\fcr\equiv \left\{v\in\feasible:v_i>\cm_i\text{ for every }i=1,\ldots,n\right\}$ as the set of \emph{strictly coalitionally rational  payoffs}.\footnote{Although our setup does not have public randomization, the convex hull is relevant because these payoffs may be reached through intertemporal averaging \citep{sorin1986,fudenbergmaskin1991}.} We distinguish this from the individual minmax:
\begin{equation}\tag{Player $i$'s individual minmax}
\im_i\equiv \min_{a \in \alternatives}\max_{a' \in E_{\{i\}}(a)}v_i(a').
\end{equation}
Generally, $\cm_i$ is higher than $\im_i$. The two coincide if player $i$ does not share equivalent utilities with any other player, i.e., $C(i)=\{i\}$.  More strongly, $\fcr$ coincides with the set of strictly individually rational payoffs if no two players have equivalent utilities. The Roommates Problem lies in this class as do non-cooperative games that satisfy the NEU condition or full-dimensionality.

\subsection{The Power of Scapegoat Schemes}\label{Section-FolkTheoremNTU}

With these preliminaries in place, we state our first result. 
\begin{theorem} \label{Theorem-NTU}
For every $\delta \geq 0$, every PCE gives each player $i$ a payoff of at least $\cm_i$. Moreover, for every $v\in {\fcr}$, there is a $\underline{\delta}<1$ such that for every $\delta\in(\underline{\delta},1)$, there exists a PCE with discounted payoff equal to $v$.
\end{theorem}

The first part of the result identifies the coalitional minmax, $\cm_i$, as relevant when coalitions can block. The second part  shows that every strictly coalitionally rational payoff vector can be supported if players are sufficiently patient. An ancillary implication of this result is that, if $\fcr$ is nonempty, PCE are guaranteed to exist in the repeated game when players are patient, even for stage games with an empty core.

We compare \Cref{Theorem-NTU} to the folk theorem for SPE in repeated games with perfect monitoring. For this comparison, suppose that no two players share equivalent payoffs. Then \Cref{Theorem-NTU} has the same implication as the folk theorems of \cite{fudenberg1986folk} and \cite*{abreu1994folk}. We highlight two differences. First, the result applies for both repeated cooperative and non-cooperative games, including settings such as repeated matching. Second, a PCE is robust to both coalitional and individual deviations. Thus, when players are patient and have misaligned preferences, deterring  coalitional deviations is no harder than deterring individual deviations.

Why? The proof constructs punishments that crack coalitions. Since a coalition blocks only if all of its members agree, it suffices to make just one member unwilling to participate. We do so by singling out and punishing only one member of each coalition as though she were the sole deviator, while granting amnesty to the rest. Specifically, consider the construction in \cite{fudenberg1986folk} in which individual deviations are deterred by minmaxing a player before shifting to her player-specific punishment; such a construction is feasible whenever players have non-equivalent utilities. We adapt that scheme by assigning to each history and each (non-singleton) coalition $C$ a designated member---a ``scapegoat''---who is punished as if she were the sole deviator. This punishment makes the scapegoat unwilling to block with coalition $C$. By associating each coalition with a scapegoat, a PCE can ensure that no coalition finds it profitable to block whenever players are sufficiently patient.\footnote{We observe that a coalition can be cracked even if all coalition members share the same \emph{ordinal} rankings over alternatives; a PCE can nevertheless create player-specific punishments and isolate a coalition member as a scapegoat through its choice of how to sequence alternatives.}

This divide-and-conquer scheme fails if all members of a coalition $C$ share equivalent utilities. A higher minmax then applies for those players. To see why, consider such a coalition $C$ and suppose towards a contradiction that some PCE $\sigma$ could push the payoffs of its coalition members below their coalitional minmax. Members of coalition $C$ could guarantee their coalitional minmax if they could commit to a long-run plan in which they collectively best respond to the alternative specified by the plan after every history. Because payoffs are equivalent, all their gains and losses move in sync so we can effectively treat coalition $C$ as a unitary actor. Using an argument like that for the One-Shot Deviation Principle, it follows that there then exists a history at which coalition $C$ could profitably block, which precludes $\sigma$ from being a PCE.

Thus, \cref{Theorem-NTU} highlights that a coalition withstands the force of repeated games if and only if its members share \emph{completely} aligned preferences. On the one hand, this conclusion might appear to emphasize a ``knife-edge'' consideration. Yet, it applies to common-interest games, a commonly studied setting, in which we find that only efficient action profiles are chosen in a PCE.\footnote{For these games, the coalitional minmax therefore departs from---and is generally higher than---\cite{wen1994folk}'s \emph{effective minmax} for SPE, which would be $\min_{a\in A} \max_{j\in C(i)}\max_{a'_j\in A_j}v_i(a_{-j},a'_j)$.} More importantly, as we show in our study of strongly symmetric equilibria (\cref{Theorem-Symmetric}) and secret side-payments (\cref{Theorem-Secret}), this theme of alignment emerges even if players do not have equivalent utilities.

\subsection{Structural Properties}\label{Section-Structure}

\subsubsection{When Do Stationary PCEs Suffice?}\label{Section-Stationary}

Herein, we highlight how, in a rich class of games, all PCE payoffs can be achieved using stationary PCE. A plan $\sigma$ is \emph{stationary} if following every history $h$, the plan $\sigma$ specifies the same alternative in each period so long as the plan is not blocked, i.e., $\sigma(h,(\sigma(h),\emptyset))=\sigma(h)$.
After a block, a stationary plan may transition to a different alternative but would prescribe that alternative in every subsequent period. We prove that it suffices to study stationary PCE in \emph{convex} games with \emph{default-independent power}. As the latter notion is new, we turn to it first. 

For a coalition $C$ and alternative $a$, let $v_C(a) \equiv \{ (v_i(a))_{i\in C} \}$ be the projection of $v(a)$ onto $C$'s payoff space. Let $\payoffs_C(a)\equiv \{v_C(a'):a'\in E_C(a)\}$ be the set of $|C|$-dimensional payoff vectors that coalition $C$ can obtain from blocking that alternative.  
\begin{definition}\label{Definition-DIP}
    A stage game exhibits \textbf{default-independent power} if for every coalition $C$, there exists a $|C|$-dimensional payoff set $D_C$ such that $\payoffs_C(a)=D_C\cup \{v_C(a)\}$  for every alternative $a$.
\end{definition}
\Cref{Definition-DIP} asserts that the payoffs a coalition can achieve through blocking an alternative does not depend on that alternative. Although this notion rules out strategic form games (\cref{Example-NormalFormGame}), several well-studied coalitional games exhibit this property.

For instance, consider any characteristic function game studied in the classical cooperative game theory literature (\Cref{Example-CoalitionalGame}). In such a game, if a coalition blocks a default partition $\pi$, the set of payoffs it can achieve is independent of  the partition. An important application is matching  without externalities; there too, the set of payoffs that a coalition obtains from blocking an assignment does not hinge on the assignment. 

Our result identifies how stationary PCEs suffice in default-independent games if the stage game is \emph{convex}, i.e., $\{v(a):a\in \alternatives\}$ is a convex set.\footnote{We view convexity to be suitable for applications in which players can transfer utility or face a pure distribution problem. Alternatively, the set may be convex if players can access public correlation devices and make a choice to block before the realization of those lotteries.} 
\begin{theorem} \label{Theorem-Stationary}
If the stage game is convex and exhibits default-independent power, then for every $\delta\geq 0$, the set of PCE-supportable payoffs coincides with that supported by stationary PCEs.
\end{theorem}
\Cref{Theorem-Stationary} offers a conclusion that would be unexpected of subgame perfect equilibria of repeated games; optimal penal codes often involve non-stationary play \citep{abreu1988opc}. The proof invokes both convexity and default-independent power: the former enables us to replace a non-stationary path of play with a stationary path and the latter assures that the replacement does not affect any coalition's incentives. Our result generalizes \cite{BS2009} who obtain this conclusion for Dynamic Condorcet Winners. By clarifying that default-independent power is the key underlying property, \Cref{Theorem-Stationary} establishes that this conclusion holds more broadly.

\subsubsection{An Anti-Folk Theorem for Strongly Symmetric PCE}\label{Section-Symmetric} 
\cite{greenporter1984} and \cite*{fudenberg1994folk} elucidate how with  monitoring imperfections, strongly symmetric equilibria are inefficient but asymmetric play can support near-efficient payoffs. The theory here offers a complementary rationale for asymmetric play that applies even with perfect monitoring: a strongly symmetric PCE cannot credibly punish players because it aligns their interests. 

To make this point, we study a strategic form stage game (\cref{Example-NormalFormGame}) that is \emph{symmetric}: $A_i=A_j$ for all players $i$ and $j$, and for each permutation $\mu$ of $\{1, \ldots, n\}$, $v_i(a_{\mu(1)}, \ldots, a_{\mu(n)})=v_{\mu(i)}\left(a_1, \ldots, a_n\right)$ for every action profile $a$ and player $i$.
The set of symmetric action profiles is $A^S \equiv \{a\in \alternatives: a_i = a_j \text{ for all } i,j\in N \}$ and $V^S\equiv \{v(a): a\in A^S \}$ denotes the associated symmetric payoffs. Given that $V^S$ is compact and totally ordered, a maximal element exists denoted $\hat{v}$, the highest symmetric payoff. 

A plan $\sigma$ is \emph{strongly symmetric} if it specifies a symmetric action profile, $\sigma(h)$ in $A^S$, after every history $h$. \Cref{Theorem-Symmetric} characterizes strongly symmetric PCE.

\begin{theorem} \label{Theorem-Symmetric}
A strongly symmetric PCE exists if and only if the stage game has a symmetric core alternative $\hat{a}$ that attains the highest symmetric payoff $(v(\hat{a})= \hat v)$; then, $\hat v$ is the unique payoff supported by a strongly symmetric PCE.
\end{theorem}

This result reflects a collapse of intertemporal incentives: a strongly symmetric PCE exists if and only if the highest symmetric payoff lies in the core and thus can be supported without any carrots or sticks. This requirement is demanding, excluding games like the prisoner's dilemma or oligopolistic collusion. In such games, a PCE must instead rely on asymmetric punishments.

\begin{table}[h!]
\centering
\setlength{\extrarowheight}{2pt}
\begin{subtable}{0.4\textwidth}
\begin{center}
\hspace{-0.5in}
    \begin{tabular}{cc|c|c|}
    & \multicolumn{1}{c}{} & \multicolumn{1}{c}{$E$}  & \multicolumn{1}{c}{$S$} \\\cline{3-4}
    & $E$ & $2,2$ & $-1,3$ \\\cline{3-4}
    & $S$ & $3,-1$ & $0,0$ \\\cline{3-4}
    \end{tabular}
\end{center}\vspace{-.1in}
\caption{\label{table:PD}}
\end{subtable}
\begin{subtable}{0.4\textwidth}
\begin{center}
\hspace{-0.5in}
    \begin{tabular}{cc|c|c|}
    & \multicolumn{1}{c}{} & \multicolumn{1}{c}{$E$}  & \multicolumn{1}{c}{$S$} \\\cline{3-4}
    & $E$ & $2,2$ & $-1,1$ \\\cline{3-4}
    & $S$ & $1,-1$ & $0,0$ \\\cline{3-4}
    \end{tabular}
\end{center}\vspace{-.1in}
\caption{\label{table:notPD}}
\end{subtable}
 \caption{(A) is a Prisoner's Dilemma, which fails the condition of \Cref{Theorem-Symmetric}; (B) is a strategic form game that satisfies it.}\label{table:2player-SS}
\vspace{-.1in}
\end{table}

We illustrate this finding in \Cref{table:2player-SS}. In the Prisoner's Dilemma depicted in (A), the highest symmetric payoff $(2,2)$ can only be generated by the action profile $EE$, which is not a core alternative. Hence, the game admits no strongly symmetric PCE for any discount factor. Grim trigger, a strongly symmetric plan (and subgame perfect equilibrium if $\delta\geq 1/3$), fails to be a PCE because players would find it profitable to block during the punishment phase. By contrast, in the game depicted in (B), the payoff $(2,2)$ is in the core; \Cref{Theorem-Symmetric} asserts then that the \textit{unique} strongly symmetric PCE prescribes $EE$ after every history.

The key step in the proof of \Cref{Theorem-Symmetric} shows that strongly symmetric PCE payoffs, if they exist, must be unique. The logic is that once punishments are required to be symmetric, the game effectively becomes a common-interest game: unless players are already at the highest symmetric payoff profile $\hat{v}$, they would profitably block and move to it.\footnote{Suppose $\underline{v}$ is the lowest strongly symmetric PCE profile. Then players do not profitably block to an action that generates the  payoff $\hat{v}$ only if $\underline{v}\geq (1-\delta)\hat{v}+\delta\underline{v}$, which implies that $\underline{v}\geq \hat{v}$.} Therefore, the continuation payoffs for any strongly symmetric PCE are pinned down independently of history, which implies that such equilibria can enforce only those actions that are myopically optimal.

\section{Do Transfers Align Incentives?} \label{Section-Transfers}

\subsection{Transferable Utility Framework} \label{Section-TUFramework}

We turn to the question of whether transferable utility aligns incentives. We model transfers separately from alternatives because we vary their observability. \Cref{Section-PublicTransfers} studies publicly observed transfers and \Cref{Section-SecretTransfers} models secret side-payments. Like the bonus in relational contracting \citep[e.g.,][]{levin2003relational}, transfers are discretionary and not contracted upon ahead of time. In introducing transfers, our primary goal is to evaluate the extent to which coalitions can use transfers to profitably block. 

We represent transfers by \(\transfers \equiv [T_{ij}]_{i,j \in \players}\) where \( \transfers_{ij} \in [0,\infty)\) is the utility that player \(i\)  transfers to player \(j\). A player's \emph{experienced payoff} is the sum of the payoff from the chosen alternative and net transfers: $u_i(a, \transfers) \equiv v_i(a) + \sum_{j\in \players} \transfers_{ji} -  \sum_{j\in \players} \transfers_{ij}$. Let $\mathcal{T}$ be the set of all $(n\times n)$ transfer matrices in which entries along the main diagonal equal $0$ (so that a player cannot  transfer utility to herself). We denote transfers paid by members of coalition $C$ by \(\transfers_C \equiv [\transfers_{ij}]_{i \in C,j\in \players}\); $\mathcal{T}_C$ is the set of $|C|\times n$ transfer matrices.

An outcome of the stage game now includes the chosen alternative, the identity of blocking coalitions (if any), and the chosen transfers. The set of stage-game outcomes is \(\outcomesTU \equiv  \alternatives\times \mathcal{B} \times \mathcal{T}\). Histories and paths are defined analogous to the NTU case with the addition of transfers. We denote the set of histories with transfers by \(\historiesTU\). A plan \(\sigma: \historiesTU \rightarrow A \times \mathcal{T}\) specifies an alternative and configuration of transfers, based on history. We use \(a(h | \sigma)\) and \(\transfers(h | \sigma)\) to denote the default alternative and transfers in \(\sigma(h)\). We modify the definition of \(U_i(h | \sigma)\) to reflect the influence of transfers.

By blocking the default $(a,T)$, a coalition $C$ can choose a different alternative $a'\in E_C(a)$ and change their transfers to any $\transfers'_{C}$. A question we have to tackle is: \emph{if a coalition blocks, what transfers do players outside the coalition make?} Two distinct answers strike us as reasonable. The first hews to a ``simultaneous noncooperative'' formulation in which coalition $C$'s block surprises players outside the coalition, who therefore make transfers $T_{-C}$ as was specified by the plan. The second models a ``cooperative'' approach in which if a coalition blocks, its members can transfer utility among themselves but players outside that coalition do not transfer any utility to them. To accommodate both answers, we formulate the transfers of others abstractly. 
\begin{assumption} \label{Assumption-Bounded-Transfers}
For each coalition $C$, if $C$ blocks a default transfers matrix $T$, the transfers made by players outside of $C$ is $\hchi^{C}(T)$ where $\hchi^{C}: \mathcal{T} \rightarrow \mathcal{T}_{\players\backslash C}$ satisfies:
    \begin{enumerate}
        \item For each bounded set $S\subseteq \mathcal{T}$, the image $\hchi^{C}(S) \subseteq \mathcal{T}_{\players\backslash C}$ is also bounded.
        \item If $T$ satisfies $T_{ij}=0$ for all $i\notin C, j\in C$, then $\hchi^{C}_{ij}(T) = 0$ for all $i\notin C, j\in C$.        
    \end{enumerate}
\end{assumption}
\Cref{Assumption-Bounded-Transfers} encompasses the two specifications described above: the former corresponds to $\hchi^{C}(T) = T_{-C}$ whereas the latter corresponds to $\hchi^{C}_{ij}(T) = T_{ij}$ for all $i,j\in \players\backslash C$, and $\hchi^{C}_{ij}(T) = 0$ for all $i\in \players\backslash C$ and $j\in C$.

Thus, if coalition $C$ blocks, chooses actions $a'$ and changes transfers to $\transfers'_{C}$,  the realized outcome is then $(a', \{C\}, \transfers'_{C}, \hchi^{C}(\transfers))$. We now define the versions of profitable blocking and PCE appropriate for this setting. 

\begin{definition} \label{Definition-PCET1}
Coalition $C$ \textbf{profitably blocks} plan $\sigma$ at history $h$ if there exists an alternative $a'\in E_{C}(\,a(h|\sigma)\,)$ and transfers $\transfers '_C = [\transfers '_{ij} ]_{i\in C, j\in \players }$ such that for all $i\in C$,
\begin{align*}
    (1-\delta ) u_i\big(a', \transfers '_C , \hchi^{C}(T(h|\sigma)) \big) + \delta U_i\big( h, a', \{C\}, \transfers '_C , \hchi^{C}(T(h|\sigma)) \mid \sigma  \big) > U_i(h | \sigma).
\end{align*}
\end{definition}

\begin{definition}\label{Definition-PCET2}
A plan $\sigma$ is a \textbf{perfect coalitional equilibrium} if it cannot be profitably blocked by any coalition at any history.    
\end{definition}

To rule out Ponzi schemes, we make the following technical assumption.
\begin{assumption}\label{Assumption-Bounded}
	We consider plans $\sigma$ such that continuation values are bounded across histories: $\left\{U(h|\sigma): h\in \historiesTU \right\}$ is a bounded subset of $\Re^n$.
\end{assumption}

\subsection{Publicly Observed Transfers}\label{Section-PublicTransfers}

Transfers allow blocking coalitions to distribute gains among their members. One might intuit that transfers would then align coalition members' incentives. However, we find that public transfers have the opposite effect, undermining coalitions. Even those coalitions whose payoffs would be aligned absent transfers can now be splintered. Our result below establishes that all payoffs that are feasible and strictly \emph{individually} rational can be supported. 

To state this result, we re-define the set of feasible payoffs to account for transfers: 
\begin{align*}\feasiblet\equiv \conv \Big( \Big \{ u\in \Re^n:\exists a\in\alternatives\text{ such that }\sum_{i\in \players} u_i= \sum_{i\in \players} v_i(a)\Big\}\Big) .
\end{align*} 
The set of feasible and strictly individually rational payoffs is 
\begin{align*}
	\firt\equiv\left\{u\in \feasiblet:u_i>\im_i\text{ for every }i=1,\ldots,n\right\}.
\end{align*}

\begin{theorem} \label{Theorem-PTU}
For every $\delta\ge 0$, every PCE gives each player $i$ a payoff of at least $\im_i$. Moreover, for every $u \in {\firt}$, there is a $\underline{\delta}<1$ such that for every $\delta\in(\underline{\delta},1)$, there exists a PCE with discounted payoff equal to $u$.
\end{theorem}

We omit a formal proof of \Cref{Theorem-PTU} because it follows as a special case of \Cref{Theorem-Secret} in \Cref{Section-SecretTransfers}. Comparing \cref{Theorem-NTU,Theorem-PTU} reveals that rather than aligning coalition members' incentives, public transfers undermine any existing preference alignment that was present before the transfers. The key idea is that public transfers make the distribution of utilities within any blocking coalition transparent to all. Therefore, a PCE can tailor the selection of a scapegoat in a blocking coalition to the post-transfer payoffs of coalition members so as to deter blocking alongside all transfer choices. We illustrate this logic in \Cref{Example-Procurement} below. Moreover, players $i$ and $j$ have misaligned interests (or non-equivalent utilities) once one can pay transfers to the other. This misalignment allows us to construct player-specific punishments.

\begin{exampleu} \label{Example-Procurement}
Consider a setting in which, in each period, a single firm $F$ procures up to 10 units of perfectly divisible goods each period from two suppliers, $S_1$ and $S_2$, each of whom faces zero marginal cost. Each unit is worth \$1 to the firm, and prices can be tailored to each supplier. Thus, the total surplus each period is \$$10$. 

\begin{figure}[ht]
\centering
\begin{subfigure}{.45\linewidth}
\centering
\begin{tikzpicture}[scale=0.6]

\tikzset{every path/.append style={line width=1pt}}

\path (0,2) node[draw, shape=circle, line width=1pt, inner sep=0pt, minimum size=20pt] (u) {$F$};

\path (3,4) node (d1) [draw, shape=circle, line width=1pt, inner sep=0pt, minimum size=20pt]{$S_1$};
\path (3,0) node (d2) [draw, shape=circle, line width=1pt, inner sep=0pt, minimum size=20pt]{$S_2$};

\draw[thick] (u) -- (d1) node[pos=0.4, above=1.8mm] {};
\draw[thick] (u) -- (d2) node[pos=0.4, below=2.8mm] {};


\draw (u) node[left=4.5mm] {$10$};
\draw (d1) node[right=4.5mm] {$0$};
\draw (d2) node[right=4.5mm] {$0$};
\end{tikzpicture}
\caption{}\label{figure: core}
\end{subfigure}
\begin{subfigure}{.45\linewidth}
\centering
\begin{tikzpicture}[scale=0.6]

\tikzset{every path/.append style={line width=1pt}}

\path (0,2) node[draw, shape=circle, line width=1pt, inner sep=0pt, minimum size=20pt] (u) {$F$};

\path (3,4) node (d1) [draw, shape=circle, line width=1pt, inner sep=0pt, minimum size=20pt]{$S_1$};
\path (3,0) node (d2) [draw, shape=circle, line width=1pt, inner sep=0pt, minimum size=20pt]{$S_2$};

\draw[thick] (u) -- (d1) node[pos=0.4, above=1.8mm] {};
\draw[thick] (u) -- (d2) node[pos=0.4, below=2.8mm] {};


\draw (u) node[left=4.5mm] {$4$};
\draw (d1) node[right=4.5mm] {$3$};
\draw (d2) node[right=4.5mm] {$3$};
\end{tikzpicture}
\caption{}\label{figure: surplus extraction}
\end{subfigure}
\caption{(A) shows the unique core of the stage game, where the firm keeps the entire surplus; (B) shows an allocation where the suppliers retain some surplus.}
\end{figure}

Because the firm $F$ can meet its demand through either supplier individually, each pair $\{F,S_1\}$ and $\{F,S_2\}$ must secure a total surplus of $10$ in the stage-game core. Otherwise, at least one pair could profitably block the allocation by dividing the share of the other supplier between themselves.\footnote{We consider the specification of transfers from \Cref{Assumption-Bounded-Transfers} in which players outside a blocking coalition make no transfers to members of a blocking coalition.} Thus, the firm captures the entire surplus in the stage-game core, as depicted in \Cref{figure: core}.

A PCE captures how repeated play could distribute surplus to the suppliers. Suppose $\delta = 2/3$. Then the payoffs in \Cref{figure: surplus extraction} can be supported by a PCE using a modified core-reversion scheme: players play the allocation in (B) indefinitely unless some pair blocks. Should the coalition $\{F, S_1\}$ block and the resulting allocation give $F$ a stage-game payoff of $x\geq 4$, then play reverts permanently to the stage-game core in every subsequent period; otherwise, their block is ignored and play continues as in (B). A similar response applies to blocking by $\{F,S_2\}$. Observe that this scheme conditions the continuation play not only on the identity of the blocking coalition but also on the distribution of surplus within it. Finetuning punishments in this manner allows the scheme to ensure that at least one player is made worse off in each (potential) blocking coalition. For example, if $\{F,S_1\}$ contemplates a block where $x \ge 4$, then $S_1$ suffers from participating in this block because her payoff can be no more than $(1-\delta)(6) +\delta(0)$, which is less than $3$ given that $\delta=2/3$; by contrast, if $x < 4$, the scheme deters firm $F$ from blocking because doing so leads to a lower payoff today without affecting its continuation payoff. \qed
\end{exampleu}

\vspace{-.15in}
\subsection{Secret Transfers}\label{Section-SecretTransfers} 

In light of the analysis above, we ask: \emph{what if some coalitions can make secret side-payments when they block? Could their incentives then be aligned?} We view this question to be of conceptual and practical import given that, in many contexts, transfers within coalitions are not public. For instance, a firm when poaching another firm's employees might offer a contract whose terms are observed by the worker and firm alone. These contracts are often confidential, a point to which we return in \Cref{Section-Applications} in our discussion of wage transparency. More broadly, groups of players often seek and find ways to transfer money under the table when defecting from a social arrangement. Our analysis here identifies the benefits that coalitions accrue from making secret transfers even if their blocking decision is observable.

We consider a setting in which some but not all coalitions can make secret transfers; denote the set of all non-singleton coalitions that can by $\seccoal \subseteq \coalitions$. In our leading application, we consider firms that can offer contracts to workers with private wage terms. A secret side-payment is observed within a coalition but not outside it. Aligned with this idea, we define the outcomes that are publicly observed by all parties.

\begin{definition}\label{Definition-SecretTransfersSome}
Given the set $\seccoal \subseteq \coalitions$ and a stage-game outcome $o=(a, B, \transfers) \in \outcomesTU$, the public transfers, denoted $T^p$, exclude those  made within any blocking coalition in $\seccoal$:
\begin{equation*}
    T^p_{ij} =
\begin{cases}
\,\verb|#| & \text{\normalfont if } \exists \,C \in \seccoal \cap B \text{ \normalfont such that } \{i,j\}\subseteq C,\\
 T_{ij} & \text{\normalfont otherwise.} 
\end{cases}
\end{equation*}
The public component of $o$, denoted by $o_{p}$, is $o_p \equiv (a, B, T^p)$. For any history $h=(o^{\tau})_{\tau=0}^{t}$, the public component of $h$ is ${h}_{p}=(o_p^{\tau})_{\tau=0}^{t}$. 
\end{definition}

The definition stipulates that if coalition $C$ can make secret transfers, then transfers within it are not recorded in the public history whenever it blocks; instead, those transfers are recorded as $\verb|#|$, indicating that they are missing. In this setting with imperfect public monitoring, we consider the analog of a \emph{perfect public equilibrium} \citep*{abreu1990toward,fudenberg1994folk}.

\begin{definition}
    A plan $\sigma$ is \textbf{public} if  $\sigma(h)=\sigma(h')$ for all $h,h'\in \historiesTU $ satisfying $h_p = h'_p$. A \textbf{public PCE} is a public plan $\sigma$ that constitutes a PCE of the repeated game.
\end{definition}

We argue below that secret transfers empower a coalition to act as if it were a single party, which in turn limits what a public PCE can support. In particular, a public PCE can condition only on the identity of a blocking coalition and not on how its members divide their surplus. Before proceeding to our results, we illustrate this tension by returning to \Cref{Example-Procurement}.

\begin{rexample}{\ref{Example-Procurement} (Continued)}
The PCE supporting the allocation in \Cref{figure: surplus extraction} relies on continuation play that conditions on transfers within the blocking coalition. Suppose transfers are instead secret: if the coalition $\{F,S_i\}$ blocks, transfers between $F$ and $S_i$ are unobserved by the other seller $S_{-i}$. We show that, in this setting, simple core reversion cannot support the allocation in (B). Consider a plan in which, on-path, the players obtain the payoffs in (B) but any block triggers permanent reversion to the stage-game core. Firm $F$ and seller $S_1$ can then find it profitable to block, generate the full surplus of $10$, and use an appropriate choice of transfer to make sure that each comes out ahead, given continuation play.\footnote{E.g., firm $F$ could transfer all of the current-period surplus to $S_1$, which results in a normalized discounted payoff of $(1-\delta)(10)+\delta(0)>3$ to $S_1$ and $(1-\delta)(0)+\delta(10)>4$ to $F$ (recall $\delta=2/3$).} Therefore, this plan is not a PCE. 

Would other plans work? The challenge is that every plan can be defeated by some choice of transfers between $F$ and $S_1$. In particular, \cref{Theorem-Secret} below implies that every public PCE must deliver a total payoff of $10$ to each of the coalitions $\{F,S_1\}$ and $\{F,S_2\}$. Only the stage-game core, wherein the firm captures the entire surplus, is compatible with this division of surplus. \qed
\end{rexample}

To formalize the general conclusion, suppose a coalition $C\in {\seccoal}$ acted as a unitary actor that maximizes the total utility $\sum_{i\in C} v_i(\cdot)$ equipped with an effectivity function $E_C(\cdot)$. We would then define its minmax as  
\begin{align}\tag{Coalition $C$'s minmax}\label{Equation-SecretCoalitionminmax}
	\cf_C \equiv \min_{a\in \alternatives} \max_{a'\in E_{C}(a)} \sum_{i\in C} v_i (a') 
\end{align}
Treating each coalition $C$ in $\seccoal$ in this way would lead to the set of feasible and strictly $\seccoal$-coalitionally rational  payoffs
\begin{align*}
\fcrt(\seccoal) \equiv \left\{ u \in \feasiblet: 
\begin{array}{r}
u_i > \im_i \text{ for every } i \in N, \\
\sum_{i \in C} u_i > \cf_C \text{ for every } C \in \seccoal
\end{array}
\right\}.
\end{align*}
The above set derives the set of feasible and ``individually'' rational payoffs in a fictitious game in which the set of players is $N\cup\seccoal$. Our result below shows that this set characterizes the limits of public PCE.
\begin{theorem}\label{Theorem-Secret}
For every $\delta \ge 0$, every public PCE gives each coalition $C\in {\seccoal}$ a total payoff of at least $\cf_C$ and every player $i$ a payoff of at least $\im_i$. Moreover, for every $u\in \fcrt(\seccoal)$, there is a $\underline{\delta}<1$ such that for every $\delta\in(\underline{\delta},1)$, there exists a public PCE with a discounted payoff equal to $u$.
\end{theorem}

\Cref{Theorem-Secret} identifies the significant gains that coalitions accrue from finding a channel to transfer utility secretly; all those in such a coalition can collectively enjoy a higher minmax while those outside a secret coalition can be pushed towards their individually rational payoffs. One might view these high coalitional minmaxes as conveying an ``anti-folk'' flavor. Indeed, in the example above and both applications, secret transfers reduce the supportable payoff set to the core of the stage game.

To see why \Cref{Theorem-Secret} holds, we first explain why each coalition $C\in \seccoal$ can guarantee its minmax for every $\delta\geq 0$. Consider a plan $\sigma$ and suppose towards a contradiction that coalition $C$ failed to achieve $\cf_C$. Were coalition $C$ a unitary actor, it could guarantee a payoff no lower than $\cf_C$ by executing a long-run plan where it best responds to the default alternative in each period, blocking when necessary. An argument similar to the one-shot deviation principle then establishes that the total utility of members of coalition $C$ must increase by blocking once at some history $h$. By apportioning that gain across the members of $C$ through secret side-payments, it can then be assured that each member simultaneously profits from the block at that history without affecting continuation play.\footnote{In addition to secrecy, this step uses the one-to-one transferability of utility. If utility were costly to transfer, then the coalition $C$ as a whole may be on a higher utility frontier from blocking but have no way to apportion those gains so that each player profits.} Such a block is then profitable, which means $\sigma$ is not a PCE.

We turn to why every payoff in $\fcrt(\seccoal)$ can be attained for patient players. Consider the fictitious game in which the set of players is $N\cup \seccoal$. In this game, we directly construct ``player-specific'' punishments for each player; one can see that such punishments exist because the payoffs across the players in this fictitious game satisfy the NEU condition. Using these punishments, the payoff of each player in $N\cup \seccoal$ can be pushed arbitrarily close to its minmax.

We contrast this result with \Cref{Theorem-PTU}, which corresponds to the special case in which $\seccoal$ is empty. Therein, we cracked coalitions by fine-tuning the selection of the scapegoat to the details of who pays whom. Such an approach fails here because the continuation play cannot vary the identify of the scapegoat with the transfers within coalitions in $\mathcal S$. Not only do these scapegoat schemes unravel but so do any other that pushes the payoff of one of these coalitions  below its minmax.

\vspace{-.15in}
\section{An Application to Labor Market Matching} \label{Section-Applications}

Many practices in labor markets, such as collective wage bargaining and firms' collusive wage-setting, are fundamentally driven by long-run incentives. We incorporate these considerations into the canonical model of \cite{kelsocrawford1982} (henceforth KC82). In the process, we obtain a new perspective on when and how wage transparency benefits workers. 

In this application, the set of players is $\players \equiv \firms \cup \workers$, where $\firms$ is the set of firms and $\workers$ is the set of workers. We use $f$ to denote a generic firm, $w$ to denote a generic worker; $i$ and $j$ denote generic players who could be workers or firms. We use $C\subseteq N$ to denote a generic coalition. We call a coalition \textit{essential} if it comprises a single firm and a nonempty set of workers; let $\esscoal \equiv \big \{\{f\}\cup W: f\in \firms, W\subseteq \workers, W\ne \emptyset \big \}$ be the set of all essential coalitions.

Each firm can hire multiple workers. An \textit{assignment} is a mapping $\phi: \firms \cup \workers \rightarrow 2^{\firms\cup\workers}$ such that (i) every worker $w$ is assigned to at most one firm; (ii) every firm $f$ is assigned to a (potentially empty) set of workers; and (iii) $w \in \phi(f)$ if and only if $f\in \phi(w)$.\footnote{Formally, the first two conditions stipulate that $\phi(w) \subseteq \firms $ with $|\phi(w)|\le 1$ and that $\phi(f)\subseteq {\workers}$. Throughout, we write $\phi(i)=\emptyset$ if player $i$ is unassigned.} The set of alternatives $A$ comprises all assignments between firms and workers. A \textit{matching} is an assignment of workers to firms and a specification of transfers made between players. Following KC82, we allow non-zero transfers to occur only between employers and their employees. Therefore, the set of matchings is $\mathcal{M} \equiv \big\{(\phi,T) \in A\times \mathcal{T}: T_{ij}\ne 0 \text{ only if } i\in \phi(j)\big\}$.

Each firm $f$ has a revenue function ${v}_f$ defined over all subsets of $\workers$, with ${v}_{f}(\emptyset)$ normalized to $0$; similarly, worker $w$ has a pre-renumeration utility function ${v}_{w}$ defined over singleton or empty subsets of $\firms$, where the payoff of being unemployed, ${v}_w(\emptyset)$, is normalized to $0$. Abusing notation, we use $v_i$ to also denote the utility player $i$ receives from an assignment, so $v_i(\phi) = v_i(\phi(i))$. Given a matching $(\phi,T)$, player $i$'s experienced payoff is $u_i(\phi,T) \equiv v_i(\phi) +\sum_{j\ne i} T_{ji} - \sum_{j\ne i} T_{ij}$.

If a coalition $C$ blocks, its members sever their links with players outside $C$ and may arbitrarily rearrange the matching within $C$.\footnote{KC82 study a more restrictive notion of blocking, allowing only essential coalitions to block. Our results apply also to that setting.}  Formally, for all $\phi\in \alternatives$, 
\[
E_C(\phi) = \{\phi'\in A: \phi'(i)\subseteq C  \text{ for all } i\in C, \text{ and }\phi'(i)= \phi(i)\backslash C \text{ for all } i \notin C\}.
\] 
This formulation specifies that if coalition $C$ blocks assignment $\phi$, any match between agents both outside $C$ remains intact while those involving a member of $C$ and a nonmember is severed. Because transfers happen only between matched players, those outside of $C$ make no transfers to those in $C$ if $C$ blocks. This specification adheres to the ``budget-balance'' case of \cref{Section-TUFramework}; the mapping $\{\hchi^C\}_{C \in \coalitions}$ denotes the transfers made across players outside of coalition $C$. 

We now state the definitions of profitable blocking and core.

\begin{definition}\label{Definition-KC-Core}
A matching $(\phi,T)$ is \textbf{profitably blocked by coalition $C$} if there exists an  assignment $\phi'\in E_C(\phi)$ and transfers $T'_C = [T'_{ij}]_{i\in C, j\in \players}$ such that 
all in $C$ are better off from the matching $(\phi',T'_C,\hchi^C(T))$: 
\[
    u_i(\phi',T'_C,\hchi^C(T)) > u_i(\phi,T) \text{ for all } i\in C.
\]
A matching $(\phi,T)$ is a \textbf{core allocation} if it cannot be profitably blocked by any coalition. The \textbf{stage-game core}, denoted by $\mathcal{K}$, are the payoffs of core allocations.
\end{definition}

KC82 show that when firms' revenue functions satisfy the \textit{gross substitutes} condition, formally defined in the footnote,\footnote{For a vector of wages from firm $f$, $T_f = (T_{fw})_{w\in \workers}$, define $Ch_f(T_f ) \equiv \argmax_{W \subseteq \workers} ( v_f(W ) - \sum_{w \in W} T_{fw}  )$. For every set of workers $W$ and pair of wage vectors $T_f$ and $T'_f$ such that $T'_{fw} \ge  T_{fw}$ for all $w \in \workers$, define $E(W, T_f, T'_f) \equiv \{w \in W: T'_{fw} = T_{fw} \}$. Firm $f$'s revenue function satisfies \emph{gross substitutes}  if $\hat{W} \in Ch_f(T_f)$ implies that there exists $\hat{W}' \in Ch_f(T'_f)$ such that $E(\hat{W}, T_f, T'_f) \subseteq \hat{W}'$.} the core is nonempty. We impose the same assumption and, although we permit blocking by non-essential coalitions, we prove that the resulting core remains unchanged.

Having described the stage game, we now consider the implications of repetition, using the framework and analyses of \Cref{Section-Transfers}. The concept of PCE defined in \cref{Definition-PCET2} naturally extends to this setting, where a plan specifies a stage-game matching at every history.\footnote{Our approach models firms and workers who interact via spot contracts and studies how intertemporal incentives discipline which spot contracts are signed over time. An alternative modeling approach might allow firms and workers to sign long-term contracts. \cite{Diamantoudi2015decentralized} and \cite{zhang2016} study two-period matching models with commitment, emphasizing how the availability of long-term contracts influences equilibrium outcomes.} The set of feasible payoffs in this repeated game is $\feasiblet^{\mathcal{M}}\equiv  \conv \Big( \Big \{ u\in \Re^n:\exists (\phi,T) \in\mathcal{M}\text{ such that }u=u(\phi,T) \Big\}\Big)$. 
Player $i$'s individual minmax payoff is $\im_i=0$, which is achieved through a matching that ostracizes her. Thus, the set of feasible and individually rational payoffs is $\firt^{\mathcal{M}}\equiv  \Big \{ u\in \feasiblet^{\mathcal{M}}:u_i> 0 \text{ for all } i\in \players\Big\}$. 

\paragraph{Public vs. Private Wages.} The set of matchings that can be supported in the repeated game hinges on whether past wage terms are publicly or privately observed. In the former, we find that many outcomes may be supported; in the latter, we see a collapse of intertemporal incentives leading to only payoffs in the core being tenable.

Suppose all transfers are public. Using \cref{Theorem-PTU}'s argument, we show that any feasible and individually rational payoff can be supported when players are patient.

\begin{proposition} \label{Proposition-KC-Public}
For every $\delta\ge 0$, every PCE gives each player $i$ a payoff of at least $0$. Moreover, for every $u \in \mathcal{U}_{IR}^{\mathcal{M}}$, there is a $\underline{\delta}<1$ such that for every $\delta\in(\underline{\delta},1)$, there exists a PCE with discounted payoff equal to $u$.
\end{proposition}

While we assume gross substitutes to maintain comparability with KC82, \Cref{Proposition-KC-Public} itself does not require gross substitutes. Even if the stage-game core is empty,  stable schemes exist in the repeated game if players are sufficiently patient.

Now suppose each firm can hire and offer private wage terms to a group of workers. Recall that a coalition $C$ is {essential} if it comprises a single firm and a nonempty set of workers, and $\mathcal E$ is the set of all essential coalitions. For the next result, we suppose that the set of secret coalitions, $\mathcal S$, includes all essential coalitions, $\mathcal E$. In this setting, we obtain a conclusion sharper than \Cref{Theorem-Secret}: all payoff vectors outside the core are untenable regardless of the players' patience.

\begin{proposition} \label{Proposition-KC-Secret}
    Suppose $\esscoal \subseteq \seccoal$. For every $\delta\ge 0$, a public PCE supports a discounted payoff vector if and only if that payoff vector is in the stage-game core, $\mathcal{K}$.
\end{proposition}
\Cref{Proposition-KC-Secret} is an anti-folk theorem that asserts that empowering essential coalitions to make secret transfers cripples a PCE's ability to go beyond the stage-game core. The ``if'' direction is immediate as the infinite repetition of a core allocation constitutes a public PCE. For the ``only if'' direction, observe that by the same logic as in \cref{Theorem-Secret}, every essential coalition is assured its minmax payoff in a public PCE. In other words, every firm $f$ and group of workers $W$ must achieve a total utility of at least what they would get from matching together, namely, $v_f(W)+\sum_{w\in W} v_w(f)$, which is this coalition's value. In our proof, we show that all payoff vectors that assure that each coalition obtains at least its value lie in the stage-game core.

\paragraph{Who Benefits from Wage Transparency?} To answer this question, we specialize to a setting in which workers are homogeneous (which KC82 also consider). Suppose that all workers have the same payoff function, $v_w(f)= \lambda(f)$ for each firm $f$ and worker $w$. Additionally, each firm's revenue depends only on the number of workers it hires: $v_f(W) = \tilde{v}_f(|W|)$. Let $\rho(f,l) \equiv \lambda(f) + \tilde{v}_f(l) - \tilde{v}_f(l-1)$ be the surplus generated from assigning the $l^{\text{th}}$ worker to firm $f$. We continue to assume that firm revenues satisfy gross substitutes, which KC82 show translates into a condition on {diminishing marginal returns}: $\rho(f,l)$ is then weakly decreasing in $l$ for each $f$. 

In this setting, an assignment $\phi^*$ that maximizes total social surplus is found by greedily assigning workers to firm slots in order of their contribution to total surplus. Formally, let $L\equiv |\mathcal {W}|$ be the total labor supply and $\eta(\ell)$ be the $\ell^{\text{th}}$ highest value of $\{\rho(f,l): f\in \firms, l\ge 1 \}$ for $\ell$ in $\{1,\ldots,L\}$, which represents the marginal value of assigning the $\ell^{\text{th}}$ worker optimally. To simplify our exposition, we assume that the set $\{\rho(f,l): f\in \firms, l\ge 1 \}$ has no ties and excludes $0$; every efficient assignment must then fills ``slots'' $\{(f,l):\rho(f,l)\ge  \max\{0, \eta(L)\}\}$ leaving all others vacant. We use $A^*$ to denote the set of efficient assignments and note that all its elements are identical up to a relabeling of workers. Finally, we assume that it would be inefficient for a single firm to hire all workers so that each firm faces some competition.

Let $\mathcal{U}^{\circ} \equiv \conv\{u(\phi^*,T): (\phi^*,T)\in \mathcal{M} \text{ and }\phi^*\in A^*\}$ denote the efficient frontier of the set of feasible payoff profiles. Within this set, let $\mathcal{U}^+ \equiv \{\tilde{u} \in \mathcal{U}^{\circ}: \tilde{u}_i> 0 \text{ for all } i\in \firms\cup\workers \}$ 
denote those surplus divisions under which every player obtains more than her minmax. We also consider the surplus divisions in which each worker obtains a net utility of approximately the ``marginal product'' of the last employee in the economy while firms are residual claimants. Let $\eta(L+1)$ be the $(L+1)^{\text{th}}$ highest value of $\rho(f,l)$ assuming that there were an additional worker in the economy. Then we define:
\[
\mathcal{U}^* \equiv \{\tilde{u}\in \mathcal{U}^{\circ}: \max\{0,\eta(L+1)\}\le \tilde{u}_w=\tilde{u}_{w'} \le \max\{0,\eta(L)\}\text{ for all }w,w'\in \workers \}.
\]
We show that $\mathcal{K} = \mathcal{U}^*$, which yields the following conclusion.

\begin{proposition} \label{Proposition-Wages}
If wages are public, for each $u\in \mathcal{U}^+$, there exists $\underline{\delta}<1$ such that for every $\delta\in (\underline{\delta},1)$, there exists a PCE with discounted payoff equal to $u$. By contrast, if wages are private, for every $\delta\geq 0$, the set of payoffs supported by public PCEs is $\mathcal{U}^*$.
\end{proposition}

\cref{Proposition-Wages} asserts that any surplus division from the efficient assignment $\phi^*$ in which individual rationality conditions hold can be supported if wages are public. Firms could collude to extract nearly all surplus from workers; alternatively, workers can collectively bargain to retain almost the entire surplus. By contrast, if wages are private, workers accrue the value of the marginal product of the least productive employee, and firms capture the remaining surplus.

Given \Cref{Proposition-Wages}, workers favor wage transparency if they are plentiful---i.e., $\eta(L)<0$---or their marginal product falls quickly. Without transparency, workers compete intensely for slots and thereby drive their earnings to near $0$. By contrast, wage transparency enables them to use collective bargaining to obtain higher wages for them all. In such a scheme, were a firm to try to poach workers in a way that is mutually profitable, a PCE would deter workers from accepting those offers by reverting to the stage-game core from the next period onwards. Thus, workers recognize that the future promise of high wages---and the continued success of their collective bargaining efforts---requires them to reject offers that are tempting today. 

By contrast, if workers are scarce or the marginal product of workers falls slowly---i.e., $\eta(1)\approx \eta(L)$---it is firms who favor wage transparency. All PCEs under private wages result in high wages, as firms compete heavily for workers. Wage transparency allows firms to collusively suppress wages, with all of them setting low wages and agreeing not to poach each other's workers. Such an agreement is viable given the continuation play in which poaching today triggers a ``salary war'' tomorrow. 
 
We depict this prediction in \Cref{Figure-Matching}: (A) shows the distribution of worker and firm surplus under private wages when the marginal product of labor falls slowly and (B) shows the same when the marginal product falls quickly. As the figure shows, workers are worse off absolutely and relatively in the latter case. Were wages transparent, workers or firms could obtain better terms. In (A), workers have little to gain but much to lose from wage transparency as firms could then suppress wages; by contrast, in (B), it is workers who can use wage transparency to secure a larger share of the pie.

\begin{figure}[t]
\centering
\begin{subfigure}{.45\linewidth}
        \centering
        \begin{tikzpicture}[
        declare function={
        f(\x)=
        0.041148938207776*(\x)^7
        -0.345651080945326*(\x)^6
        +1.152170269817777*(\x)^5 
        -1.935646053293930*(\x)^4
        +1.580534083292260* (\x)^3 
        -0.274538084465002*(\x)^2 
        -0.615313510019283* (\x)
        + 1.5;},
        >=stealth,
        scale=2.5
        ]
        
        \newcommand{\WD}{2.1};
        \draw[->] (0,0) -- ({\WD+0.15},0) node[anchor=west] { $\ell$}; 
        \draw[->] (0,0) -- (0,1.65) node[anchor=east] {$\eta(\ell)$}; 
        
        
        \pgfmathsetmacro{\WDD}{\WD+0.1};
        \foreach \x in {0.1, 0.2,...,\WD} {
            \draw (\x,0.01) -- (\x,-0.01);
        }
        
        \node[below] at (0, 0) {\footnotesize $0$};
        \node[below] at (\WD, 0) {\scriptsize $L$};
        \node at (\WD/2,-0.15) {\text{\scriptsize Number of Workers}};
        \draw (-.15,0.825) node[rotate=90]{\text{\small \scriptsize{Marginal Product}}};
        
        \draw[line width=1.1pt, black, dashed, domain=0.1:\WD] plot (\x, {f(\x-0.1)+0.005}) node[right] {};

        \pgfmathsetmacro{\WDD}{\WD-0.1};
        \foreach \x in {0.1, 0.2, ...,\WDD} {
        \draw[red,thick] (\x-0.1,{f(\x-0.1)}) -- (\x,{f(\x-0.1)}) -- (\x, {f(\x)});
        }
        \draw[red,thick] (\WD-0.1,{f(\WD-0.1)}) -- (\WD,{f(\WD-0.1)});
        
        \foreach \x  in {0.1, 0.2, ...,\WD}
            {
                \filldraw[red,opacity=0.5] (\x-0.1,{f(\x-0.1)}) -- (\x,{f(\x-0.1)}) -- (\x, {f(\WD-0.1)})  -- (\x-0.1, {f(\WD-0.1)}) -- cycle;
            }
        
        
        \filldraw[thick, blue,opacity=0.3] (0,{f(\WD-0.1)}) -- (\WD,{f(\WD-0.1)}) -- (\WD,0) -- (0,0);
        
        \node (WS) at (\WD/2, {f(\WD)/2}) {\footnotesize Worker Surplus};
        \node (FS) at (\WD/2, {1.4}) {\footnotesize Firm Surplus};
        \draw[thick,->] (FS) -- (2*\WD/5+0.1, 0.98);
\end{tikzpicture}
\caption{Marginal product falls slowly}\label{Figure-Scarce}
\end{subfigure}
\begin{subfigure}{.45\linewidth}
        \centering
        \begin{tikzpicture}[
        declare function={
        f(\x)=
        0.041148938207776*(\x)^7
        -0.345651080945326*(\x)^6
        +1.152170269817777*(\x)^5 
        -1.935646053293930*(\x)^4
        +1.580534083292260* (\x)^3 
        -0.274538084465002*(\x)^2 
        -0.845313510019283* (\x)
        + 1.5;},
        >=stealth,
        scale=2.5
        ]
        
        \newcommand{\WD}{2.1};
        \draw[->] (0,0) -- ({\WD+0.15},0) node[anchor=west] { $\ell$}; 
        \draw[->] (0,0) -- (0,1.65) node[anchor=east] {$\eta(\ell)$}; 
        
        \pgfmathsetmacro{\WDD}{\WD+0.1};
        \foreach \x in {0.1, 0.2,...,\WD} {
            \draw (\x,0.01) -- (\x,-0.01);
        }
        
        \node[below] at (0, 0) {\footnotesize $0$};
        \node[below] at (\WD, 0) {\scriptsize $L$};
        \node at (\WD/2,-0.15) {\text{\scriptsize Number of Workers}};
        \draw (-.15,0.825) node[rotate=90]{\text{\small \scriptsize{Marginal Product}}};

        \draw[line width=1.1pt, black, dashed, domain=0.1:\WD] plot (\x, {f(\x-0.1)+0.005}) node[right] {};

        \pgfmathsetmacro{\WDD}{\WD-0.1};
        \foreach \x in {0.1, 0.2, ...,\WDD} {
        \draw[red,thick] (\x-0.1,{f(\x-0.1)}) -- (\x,{f(\x-0.1)}) -- (\x, {f(\x)});
        }
        \draw[red,thick] (\WD-0.1,{f(\WD-0.1)}) -- (\WD,{f(\WD-0.1)});
        
        \foreach \x  in {0.1, 0.2, ...,\WD}
            {
                \filldraw[red,opacity=0.5] (\x-0.1,{f(\x-0.1)}) -- (\x,{f(\x-0.1)}) -- (\x, {f(\WD-0.1)})  -- (\x-0.1, {f(\WD-0.1)}) -- cycle;
            }
        
        \filldraw[thick, blue,opacity=0.3] (0,{f(\WD-0.1)}) -- (\WD,{f(\WD-0.1)}) -- (\WD,0) -- (0,0);
        
        \node (WS) at (\WD/2, {f(\WD)/2+0.03}) {\footnotesize Worker Surplus};
        \node (FS) at (\WD/2, {1.4}) {\footnotesize Firm Surplus};
        \draw[thick,->] (FS) -- (2*\WD/5, 0.65);
\end{tikzpicture}
\caption{Marginal product falls quickly}\label{Figure-Plentiful}
\end{subfigure}\vspace{0.1in}
\caption{(A) and (B) show the distribution of surplus under private wages when the marginal productivity falls slowly or quickly. In the latter case, workers have more to gain from wage transparency.}\label{Figure-Matching}
\end{figure}

Formalizing this comparative statics prediction, consider two markets $M_1$ and $M_2$ that are identical in all respects but one: they differ in the productivity of labor as captured in firms' revenues. In market $i$, firm $f$'s revenue function is $\tilde{v}_{f,i}$, and the marginal value of assigning worker $\ell$ optimally is then $\eta_{i}(\ell)$. We assume that labor is valuable in each market, in that $\eta_i(L+1)>0$ for each $i$, and that the two markets accrue the same gain from hiring the first worker, $\eta_1(1)=\eta_2(1)$. 
\begin{definition}\label{Definition-SteeplyDecreasing}
Market $M_2$ exhibits \textbf{more steeply decreasing returns to labor} than market $M_1$ if $   \eta_2(\ell)-\eta_2(\ell+1) \ge \eta_1(\ell)-\eta_1(\ell+1)$ for every $\ell$ in $\{1,\ldots,L\}$.\footnote{If the inequality is strict for some $\ell$, then we add the qualifier ``strictly.''}
\end{definition}
In each market, given gross substitutes, the marginal product of labor falls with each incremental worker; \Cref{Definition-SteeplyDecreasing} asserts that this fall is always more pronounced in $M_2$. Modulo integer issues, this definition translates into a standard condition on the second derivative of the total product being more negative in $M_2$.\footnote{The ``total product'' in each market with $\ell$ units of labor would be $\hat\Pi_i(\ell)\equiv\sum_{l=1}^\ell \eta_i(l)$. As the first worker in markets $M_1$ and $M_2$ generates the same gain, \Cref{Definition-SteeplyDecreasing} implies that $\hat\Pi_2(\cdot)$ must be a concave transformation of $\hat\Pi_1(\cdot)$, which is tantamount to the standard Arrow-Pratt comparison.}

We turn to the implications for how surplus is divided between workers and firms. Let $\Pi_i\equiv \sum_{\ell=1}^L \eta_i(\ell)$ denote the total surplus from the efficient assignment in market $M_i$. Let $\Pi^{\mathcal W}_{i}\equiv [L\eta_i(L+1),L\eta_i(L)]$ denote the set of potential workers' total surplus under private wages; recall from \Cref{Proposition-Wages} that each worker is paid the same, which is around the marginal product of the least productive worker. Firms capture the gap between total and workers' surplus; let $\Pi_i^{\mathcal F}$ denotes the set of potential firms' total surplus. We compare these surplus divisions between the two markets; when comparing sets, we use the strong set order denoted $\succeq_{SSO}$.\footnote{For sets $A,B\subseteq \Re$, $A\succeq_{SSO} B$ if for every $a\in A$ and $b\in B$, $\max\{a,b\}\in A$ and $\min\{a,b\}\in B$.}
\begin{proposition} \label{Proposition-Comparative-Wages}
Suppose $M_2$ exhibits more steeply decreasing returns to labor than $M_1$. Then the following hold about the distribution of surplus under private wages:
\begin{enumerate}[label=\emph{(\alph*)}]
    \item The total surplus in market $M_1$ is higher: $\Pi_1\geq \Pi_2$.
    \item Worker surplus in market $M_1$ must be higher: $\Pi^{\mathcal W}_{1}\succeq_{SSO} \Pi^{\mathcal W}_2$.
    \item Firm surplus in market $M_1$ must be lower: $\Pi^{\mathcal F}_{1}\preceq_{SSO} \Pi^{\mathcal F}_2$.
\end{enumerate}
Furthermore, all the orders above are strict if $M_2$ exhibits strictly more steeply decreasing returns to labor.
\end{proposition}
Although more steeply decreasing returns reduce both total surplus and worker surplus under private wages, \Cref{Proposition-Comparative-Wages} shows that this effect is more pronounced for workers.  Hence, as seen in (c), the residual surplus captured by firms is actually higher in $M_2$ than in $M_1$. If wage transparency enables workers to capture firms' profits, then workers have more to gain (and less to lose) in $M_2$ than $M_1$.

\section{Conclusion}\label{Section-Discussion}

This paper develops a portable framework for coalitional repeated games, which enables us to evaluate the role of dynamic incentives in coalitional behavior across a range of settings. Our analysis uncovers the importance of alignment: history dependence keeps coalitions in line if coalition members' interests are even slightly misaligned. Simple scapegoat schemes then deter coalitional deviations. However, if players in a coalition have completely aligned interests, they can secure a higher minmax payoff by effectively acting as a unitary agent. 

This perspective delivers additional insights. Strongly symmetric schemes do not deter coalitional deviations, thereby pushing towards the use of asymmetric punishments. Being able to transfer utility alone does not align interests; to the contrary, publicly observed transfers create a wedge between coalition partners and thereby undermine coalitions. However, the ability to make transfers under the table forges strong ties: a coalition that can do so is assured a high net payoff across PCE. 

In our applications, these secret side-payments cripple intertemporal incentives, reducing the set of supportable outcomes to the stage-game core. We use these results to identify conditions under which workers favor wage transparency in repeated labor-market matching. In the Supplementary Appendix, we also study repeated negotiations and show that history dependence can counter the tendency of veto players to become de facto dictators; we show therein that secret transfers restore their dictatorial power.

{ 
	\addcontentsline{toc}{section}{References}
	\setlength{\bibsep}{0.25\baselineskip}
	\bibliographystyle{jpe}
	\bibliography{coalitions}
}

\addtocontents{toc}{\protect\setcounter{tocdepth}{1}}
\appendix
\renewcommand{\theequation}{\arabic{equation}}
  \renewcommand{\thesection}{\Alph{section}}

\section{Appendix} \label{Section-MainProofs}

The main appendix contains proofs for Theorems \ref{Theorem-NTU}, \ref{Theorem-Stationary}, \ref{Theorem-Symmetric}, and \ref{Theorem-Secret}. All other proofs are in the Supplementary Appendix. Throughout our analysis, we use sequences of play to convexify payoffs, following standard arguments from \cite{sorin1986} and \cite{fudenbergmaskin1991}. Below, we reproduce the statement that we invoke in our arguments.
\begin{lemma} \label{lemma:NTU-folk-sequence-approx}
	\textbf{(Lemma 2 of \citealp{fudenbergmaskin1991})} Let $X$ be a convex polytope in $\Re^n$ with vertices $x^1, \ldots, x^K$. For all $\epsilon>0$, there exists a $\underline{\delta}<1$ such that for all $\underline{\delta} <\delta<1$, and any $x \in X$, there exits a sequence  $(x_\tau )_{\tau =0}^\infty$ drawn from $\{x^1, \ldots, x^K\}$, such that $(1-\delta) \sum_{\tau =0}^\infty \delta^\tau x_\tau = x$ and at any $t$, $\norm{x - (1-\delta )\sum_{\tau = t}^\infty \delta^{\tau-t} x_\tau} < \epsilon$.
\end{lemma}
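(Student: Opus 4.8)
This statement is Lemma~2 of \citet{fudenbergmaskin1991}, so the plan is to reproduce their recursive ``tracking'' construction rather than invent something new. The natural object to track is the tail discounted average $y_t \equiv (1-\delta)\sum_{\tau=t}^{\infty}\delta^{\tau-t}x_\tau$, which satisfies the one-step relation $y_t = (1-\delta)x_t + \delta y_{t+1}$, equivalently $y_{t+1} = \tfrac{1}{\delta}\bigl(y_t - (1-\delta)x_t\bigr)$. The two requirements in the lemma translate into: $y_0 = x$ exactly, and $\|y_t - x\| < \epsilon$ for every $t$. So I would fix $x = \sum_{k}\lambda_k x^k$ with convex weights $\lambda_k$, set $y_0 = x$, and then define the sequence $(x_t, y_{t+1})$ recursively from $y_t$.

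The construction I would use is greedy on barycentric coordinates. Given $y_t \in X$, write $y_t$ as a convex combination of the vertices with at most $n+1$ of them active (Carath\'eodory), and let $x_t$ be a vertex carrying the largest weight in that representation; that weight is then at least $\tfrac{1}{n+1}$. Define $y_{t+1}$ by the recursion above. Two facts are then routine. First, $y_{t+1}\in X$ whenever $1-\delta \le \tfrac{1}{n+1}$: subtracting $1-\delta$ from the heavy vertex's weight and dividing all weights by $\delta$ yields coefficients that are nonnegative (because that weight is $\ge \tfrac{1}{n+1}\ge 1-\delta$) and sum to $\tfrac{1-(1-\delta)}{\delta}=1$. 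Second, telescoping the recursion over $\tau = 0,\ldots,T-1$ gives $y_0 = (1-\delta)\sum_{\tau=0}^{T-1}\delta^\tau x_\tau + \delta^T y_T$; since $\{y_t\}\subseteq X$ is bounded, $\delta^T y_T \to 0$, so $(1-\delta)\sum_{\tau=0}^{\infty}\delta^\tau x_\tau = y_0 = x$ exactly, and more generally $y_t = (1-\delta)\sum_{\tau=t}^{\infty}\delta^{\tau-t}x_\tau$, which is precisely the tail average appearing in the statement.

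The main obstacle is the uniform-closeness bound $\|y_t - x\| < \epsilon$ for \emph{all} $t$. From the recursion, $\|y_{t+1}-y_t\| = \tfrac{1-\delta}{\delta}\,\|y_t - x_t\| \le \tfrac{1-\delta}{\delta}\,\mathrm{diam}(X)$, so consecutive tails move only $O(1-\delta)$ per step; but this by itself permits the tail to drift arbitrarily far over many periods. The real content is that the greedy rule is self-correcting: whenever a vertex has accumulated ``too much'' recent weight, its barycentric coordinate in $y_t$ is depressed and it is not selected, while underrepresented vertices are; consequently the coordinates of $y_t$ — equivalently the deviation $y_t - x$ — can never wander more than a fixed constant multiple of $(1-\delta)\,\mathrm{diam}(X)$ away from those of $x$, uniformly in $t$. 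Establishing this uniform bound is where the argument of \citet{fudenbergmaskin1991} does its real work. Once it is available, I would choose $\underline\delta<1$ large enough that simultaneously $1-\underline\delta \le \tfrac{1}{n+1}$ (so the sequence stays in $X$) and the uniform drift bound falls below $\epsilon$; for any $\delta\in(\underline\delta,1)$ the constructed sequence then satisfies both conclusions of the lemma.
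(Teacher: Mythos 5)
The paper does not actually prove this statement: it is reproduced verbatim as Lemma 2 of \cite{fudenbergmaskin1991} and invoked as a black box, so there is no in-paper argument to compare against. Judged on its own terms, your sketch correctly handles the easy parts --- the recursion $y_{t+1}=\tfrac{1}{\delta}\bigl(y_t-(1-\delta)x_t\bigr)$, the Carath\'eodory/Sorin argument that $y_{t+1}\in X$ once $1-\delta\le \tfrac{1}{n+1}$, and the telescoping that yields exactness --- but it leaves the only substantive claim, the uniform bound $\norm{y_t-x}<\epsilon$ for all $t$, unproven, and the mechanism you assert for it is false. The largest-weight greedy rule is \emph{not} self-correcting toward $x$; it is self-correcting toward the uniform mixture over the active vertices. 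Concretely, take $n=1$, vertices $x^1=0$ and $x^2=1$, and target $x=0.9$, so $y_t$ equals the weight $\mu_t$ on $x^2$. Greedy selects $x^2$ whenever $\mu_t\ge\tfrac12$, and then $\mu_{t+1}-\mu_t=-\tfrac{(1-\delta)(1-\mu_t)}{\delta}\approx -0.1(1-\delta)$, so $\mu_t$ declines monotonically from $0.9$ until it crosses $\tfrac12$ after on the order of $4/(1-\delta)$ periods, and thereafter hovers near $\tfrac12$. The head-weighted sum still equals $0.9$ (exactness holds), but $\sup_t\norm{y_t-x}\approx 0.4$ no matter how large $\delta$ is, so the conclusion you need fails for this construction.

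The missing idea is to make the selection rule track the \emph{target} weights $\lambda_k$ rather than compare the current weights to one another: at each date choose a vertex whose current barycentric coordinate in $y_t$ (equivalently, whose discounted frequency in the initial segment) is most over-represented (respectively, under-represented) relative to $\lambda_k$, with some care for vertices whose $\lambda_k$ is below $1-\delta$ so that coordinates stay nonnegative. That rule is genuinely mean-reverting toward $x$: each coordinate of $y_t$ stays within a constant multiple of $(1-\delta)$ of $\lambda_k$ uniformly in $t$, which is exactly the discrepancy estimate where Fudenberg and Maskin's proof does its work and which your write-up defers without supplying. As it stands, the proposal establishes Sorin's exact-representation result but not the tail-closeness that this lemma is cited for --- and it is precisely the tail-closeness that the paper's folk-theorem constructions rely on when they bound continuation values by $\norm{v^i-(1-\delta)\sum_{\tau\ge t}\delta^{\tau-t}v(a^{i,\tau})}<\epsilon$.
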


\subsection{Proof of \cref{Theorem-NTU} on p. \pageref{Theorem-NTU}} \label{Section-Proof-NTU}
\paragraph{A Preliminary Result.} A \textit{blocking plan} by coalition ${C}$ from a plan $\sigma$ is a function $\alpha:{\histories} \rightarrow \alternatives$ such that $\alpha(h) \in E_{C} ( \sigma (h) ) $ for every history ${h}\in {\histories}$. After each history $h$, the blocking plan $\alpha$ generates a path $( \alpha(h), \alpha(h,\alpha(h), \{C\} ),\ldots  )$ that is distinct from the one generated by $\sigma$. We use $U_i(h|\alpha)$ to denote player $i$'s normalized discounted payoff from that path. The blocking plan $\alpha$ is {profitable} if there exists a history ${h}$ such that $U_i({h} |\alpha) > U_i({h}|\sigma)$ for all $i\in C$. Below, we say that a coalition is in the alignment partition if it corresponds to a coalition $C(i)$ for some player $i$.

\begin{lemma}\label{Lemma:One-Shot-Deviation-NTU}
If $\sigma$ is a PCE, then no coalition in the alignment partition has a profitable blocking plan.
\end{lemma}

\begin{proof}
Consider a  plan $\sigma$ from which coalition $C(i^*)$ has a profitable blocking plan $\alpha$ for some $i^*\in \players$. In particular, there exists a history $h \in \histories$ such that $U_{i}(h|\alpha) > U_i(h|\sigma)$ for every $i\in C(i^*)$. We show that coalition $C(i^*)$ must then have a profitable block from the plan $\sigma$ at some history, so $\sigma$ is not a PCE.

Since the set of alternatives $\alternatives$ is compact and $v:A\rightarrow \Re^n$ is continuous, the plan $\sigma$ has bounded continuation values for all players.
Given discounting, the standard one-shot deviation principle applies. Therefore, there exists a  history $\hat{h} \in \histories$ such that 
\begin{align*}
(1-\delta)  v_{i^*} \big (\alpha(\hat{h}) \big ) + \delta U_{i^*} \big ( \hat{h}, \alpha (\hat{h}), \{C(i^*)\} \big| \sigma \big ) > U_{i^*}(\hat{h}|\sigma).
\end{align*}
Since players in $C(i^*)$ have equivalent payoffs, for each $j \in C(i^*)$ there exists $\lambda_{ji^*}>0$ and $\mu_{ji^*}\in \Re$ such that $v_j(a) = \lambda_{ji^*} v_{i^*}(a) + \mu_{ji^*}$ and all alternatives $a\in \alternatives$; in addition, for every $j\in C(i^*)$, the discounted payoffs satisfy $U_j(h|\sigma) = \lambda_{ji^*} U_{i^*}(h|\sigma) + \mu_{ji^*}$ at every history $h\in \histories$. Substituting into the inequality above, we have
\begin{align*}
(1-\delta)  v_{j } \big (\alpha(\hat{h}) \big ) + \delta U_{j } \big ( \hat{h}, \alpha (\hat{h}), \{C(i^*)\} \big| \sigma \big ) > U_{j }(\hat{h}|\sigma) \text{ for all } j\in C(i^*).
\end{align*}
Therefore, coalition $C(i^*)$ has a profitable block at history $\hat{h}$.
\end{proof}

\paragraph{Proof of \cref{Theorem-NTU}.}

{ \noindent \underline{Part 1}: \emph{For every $\delta \geq 0$, every PCE gives each player $i$ a payoff of at least $\cm_i$.}}

\noindent We establish this claim by proving its contrapositive: let $\sigma$ be a plan, and suppose there exists a player ${i}^*$ that satisfies $U_{i^*}(\emptyset|\sigma) < \cm_{i^*}$. We show that $\sigma$ cannot be a PCE. Given \cref{Lemma:One-Shot-Deviation-NTU},  it suffices to show that coalition  $C(i^*)$ has a profitable blocking plan.

Consider the following {blocking plan} $\alpha$ for coalition $C(i^*)$: at every history $h$, coalition $C(i^*)$ chooses its myopic best response to the default alternative, $\alpha(h) \in \argmax_{a'\in E_{C(i^*)}(\sigma(h))}  v_{i^*}(a')$. By the definition of $\cm_{i^*}$, $v_{i^*}( \alpha (h) )  \ge  \cm_{i^*}$ for every history $h$, so player $i^*$'s continuation value from period $0$ must be higher: $U_{i^*} (\emptyset|\alpha ) > U_{i^*} (\emptyset|\sigma )$. Given that all players $j \in C(i^*)$ have equivalent utilities,  $U_j(\emptyset|\alpha) > U_{j}(\emptyset|\sigma)$ for all $j\in C(i^*)$, so $\alpha$ is a profitable blocking plan for coalition $C(i^*)$. 
\medskip

{\noindent \underline{Part 2}: \emph{For every $v\in {\fcr}$, there is a $\underline{\delta}<1$ such that for every $\delta\in(\underline{\delta},1)$, there exists a PCE with discounted payoff equal to $v$.}}

\noindent Fix $v^* \in \fcr$. First, observe that for any pair of players $i,j$ such that $j\notin C(i)$, their payoffs satisfy the non-equivalent utilities (NEU) condition. By Lemma $1$ and Lemma $2$ of \cite*{abreu1994folk}, we can find \textit{coalition-specific punishments} for $v^*$: there exist payoff vectors $\{v^{C(i)}\}_{i=1}^{n}\subseteq \fcr $ such that $v_{i}^{C(i)}<v^*_{i}$ for all $i\in \players$, and $v_{j}^{C(i)} > v_{j}^{C(j)}$ for all $j\notin  C(i)$.

Second, let us define \emph{coalitional minmaxing alternatives}: for each coalition $C(i)$, let $\cma_{C(i)} \in \argmin_{a\in \alternatives} \max_{a'\in E_{C(i)}(a)} v_j(a')$ for some $j\in C(i)$---note that the specific choice of $j\in C(i)$ in the definition does not matter given the equivalent payoffs within $C(i)$---as the alternative that will be used to minmax coalition $C(i)$. Since $A$ is compact, $v$ is continuous, and $E_{C(i)}(\cdot)$ is continuous and compact-valued, by Berge's maximum theorem, $\cma_{C(i)}$ is well-defined for each $i\in \players$. By construction, for all $i\in \players$ and $a'\in E_{C(i)}(\cma_{C(i)})$, $v_i(a') \le \cm_i$. Observe that the reflexivity of effectivity correspondences implies that $\cma_{C(i)}~\in~E_{C(i)}(\cma_{C(i)})$ and therefore, ${v}_{i}(\cma_{C(i)} ) \le \cm_i$.

Given these payoffs and punishments, let $\kappa\in (0,1)$ be such that for every $\tilde\kappa \in [\kappa,1]$, the following is true for every $i$:  
\begin{align}
	(1-\tilde\kappa)v_i(\cma_{C(i)})+\tilde\kappa v_i^{C(i)}&>\cm_i \label{Equation-Kapppa}\\
\text{For every $i\in \players$ and $j\notin C(i)$:}\hspace{0.5cm}(1-\tilde\kappa)v_{j}(\cma_{C(i)})+\tilde\kappa v_j^{C(i)} &> (1-\tilde\kappa)\cm_j+\tilde\kappa v_j^{C(j)}	\label{Equation-Kappa}
\end{align}

Inequality \eqref{Equation-Kapppa} implies that every player $j\in C(i)$ is willing to bear the cost of $v_j(\cma_{C(i)})$ with the promise of transitioning into their coalition-specific punishment rather than staying at their coalitional minmax, where the promise is discounted at $\tilde\kappa$. Similarly, inequality \eqref{Equation-Kappa} implies that player $j$ is willing to bear the cost of minmaxing any coalition with whom $j$ does not share equivalent utilities, given the promise of transitioning into coalition $C(i)$'s specific punishment rather than her own, when the post-minmaxing phase payoffs are discounted at $\tilde\kappa$. Each inequality holds at $\tilde\kappa=1$ for all $i$ and $j\notin C(i)$. Since the set of players is finite, there exists a value of $\kappa \in (0,1)$ such that the inequality holds for all $\tilde\kappa\in [\kappa,1]$, $i\in \players$ and $j\notin C(i)$. Let $L(\delta)\equiv \left\lceil{\frac{\log \kappa}{\log \delta}}\right\rceil$ where $\left\lceil\cdot\right\rceil$ is the ceiling function. Observe that $ \delta^{L(\delta)} \in [\delta^{ \frac{\log \kappa}{ \log \delta} +1 } , \delta^{ \frac{\log \kappa}{ \log \delta}}] = [\delta \kappa, \kappa ]$. Therefore, $\lim_{\delta\rightarrow 1} \delta^{L(\delta)}=\kappa$.\medskip 

Since $\{v^{C(i)}\}_{i=1}^n\cup\{v^*\} \subseteq \conv{\{v(a): a\in \alternatives\} } \subseteq \Re^n$, by Carath\'{e}odory's theorem, there exist $\{\hat{a}^1,\ldots, \hat{a}^K\} \subseteq A$ for some integer $K$, such that $\{v^{C(i)}\}_{i=1}^n\cup\{v^*\} \subseteq \conv{\{v(\hat{a}^k): k=1,\ldots,K \} }$. Define $ \mathcal{I} \equiv \{C(i)\}_{i=1}^n,\; \text{ and }\; \hat{\mathcal{I}} \equiv \{C(i)\}_{i=1}^n \cup \{ * \}$.
\cref{lemma:NTU-folk-sequence-approx} then guarantees that for any $\epsilon >0$, there exists $\underline\delta\in (0,1)$ such that for all $\delta \in (\underline\delta,1) $, there exist sequences $\big \{ (a^{S,\tau})_{\tau =0}^\infty : S \in \hat{\mathcal{I}} \big \}$ such that for each $S \in \hat{\mathcal{I}}$ and $t$, $(1-\delta) \sum_{\tau =0}^\infty \delta^\tau v(a^{S, \tau}) = v^S$ and $\norm{v^S - (1-\delta )\sum_{\tau = t}^\infty \delta^{\tau-t} v(a^{S,\tau}) } < \epsilon$. We fix an 
\[
\epsilon<\left(1-\kappa\right)\min\Big\{\min_{S\in \hat{\mathcal{I}}, i\in \players \backslash S} (v_i^{S}-v_i^{C(i)}),\; \min_{i \in \players}v^{C(i)}_i-\cm_i \Big\},
\]
and given that $\epsilon$, consider $\delta$ exceeding the appropriate $\underline\delta$.\medskip

\noindent We now describe the plan that supports $v^*$. Consider the automaton $(W, w(*,0),f,\gamma )$: 
\begin{itemize}
	\item $W\equiv \big \{ w(d, \tau) | d  \in  \hat{\mathcal{I}}, \tau \ge 0 \big \} \cup\{ \underline{w}(S, \tau )|S \in \mathcal{I}, 0\le \tau <L(\delta)\}$ is the set of possible states and $w(*,0)$ is the initial state;
	\item $f:W \rightarrow \outcomesNTU $ is the output function, where $f(w(d, \tau )) = (a^{d,\tau},\emptyset )$ and $f(\underline{w}(S, \tau ))= ( \cma_{S}, \emptyset )$.
	\item $\gamma: W \times \outcomesNTU \rightarrow W$ is the transition function. For any collection of blocking coalitions $B\in \mathcal{B}$, let $\hat{C}(B) = \cup_{C\in B}C$ denote their union. For states of the form $w(d,\tau)$, the transition is
    	\begin{align*}
    	\gamma \big( w(d,\tau), (a, B) \big) =
    	\begin{cases}
    	\underline{w}(C(j^{*}),0) & \text{if $B\ne \emptyset$ , where $j^{*} = \min_{ }\hat{C}(B) $ } \\
    		w(d, \tau+1) & \text{otherwise}
    	\end{cases}
    	\end{align*}		
For states of the form $\{\underline{w}(S,\tau)|S\in \mathcal{I}, 0\le \tau <L(\delta)-1\}$,
    	\begin{align*}
    	\gamma \big( \underline{w} (S, \tau), (a, B) \big) =
    	\begin{cases}
    	\underline{w}(C(j^{*}),0) & \text{if $\hat{C}(B) \nsubseteq S$ , where $j^{*} = \min_{} (\hat{C}(B)\backslash S) $ } \\
    	\underline{w}(S,0) & \text{if $\hat{C}(B) \subseteq S$ and $\hat{C}(B)\ne \emptyset$}\\
    	\underline{w}(S, \tau+1) & \text{otherwise}
    	\end{cases}
    	\end{align*}
For states of the form $\{\underline{w}(S,L(\delta)-1)|S\in \mathcal{I}\}$, the transition is
		\begin{align*}
		\gamma \big( \underline{w}(S,L(\delta)-1), (a, B) \big) =
		\begin{cases}
		\underline{w}(C(j^{*}),0) & \text{if }\hat{C}(B) \nsubseteq S,\\
        &\; \text{ where } j^{*} = \min_{}(\hat{C}(B)\backslash S)   \\
		\underline{w}(S,0) & \text{if $\hat{C}(B) \subseteq S$ and $\hat{C}(B)\ne \emptyset$}\\
		w(S,0) & \text{otherwise}
		\end{cases}
		\end{align*}
	\end{itemize}
	The plan represented by the above automaton yields payoff profile $v^*$. 
	By construction, $\norm{v^d - V(w(d,\tau ))}  < \epsilon$ for all $(d,\tau)$; in addition, for $\tau = 0, 1, \ldots, L(\delta)$,
\begin{align*}
	 V(\underline{w}(S, \tau)) = (1-\delta^{L(\delta) - \tau})  v(\cma_S ) & + \delta^{L(\delta)-\tau}V( w(S, 0) ).
\end{align*}

\noindent Below, we show that this plan is a PCE by showing that no coalition can profitably block in any state of this automaton.
	
{\flushleft \textbf{States of the form $w(d, \tau )$:}}
Set $b > \max_{ a\in \alternatives,i\in \players} v_{i} (a)$. Consider coalition $C$ blocking and implementing the alternative $a$. Let $j^{*} = \min C $. For all $\tau$, without the blocking $j^{*}$ obtains a payoff greater than $v_{j^{*}}^{d} - \epsilon$. By participating in the blocking, $j^{*}$ obtains a payoff less than
$
        (1-\delta)b+\delta V_{j^{*}}( \underline{w} (C(j^{*}),0)) =  (1-\delta)b+\delta[(1-\delta^{L(\delta)}) {v}_{j^*}(\cma_{C(j^*)} ) +\delta^{L(\delta)} v^{C(j^*)}_{j^*}].
$
The blocking is not profitable if the preceeding term is no more than $v_{j^{*}}^{d}  -\epsilon$. We prove that this is true in two separate cases.

First consider the case where $d\in \hat{\mathcal{I}}\backslash \{C(j^*)\}$. Observe that 
\begin{align*}
	&\lim_{\delta\rightarrow 1} (1-\delta)b + \delta\left[(1-\delta^{L(\delta)}) {v}_{j^*}(\cma_{C(j^*)} ) +  \delta^{L(\delta)} v^{C(j^*)}_{j^*}\right]\\
 = & \lim_{\delta\rightarrow 1}\left[(1-\delta^{L(\delta)}) {v}_{j^*}(\cma_{C(j^*)}) +\delta^{L(\delta)} v^{C(j^*)}_{j^*}\right] <  v^{C(j^*)}_{j^*},
\end{align*}
where the inequality follows from ${v}_{j^*}(\cma_{C(j^*)} ) \le \cm_{j^*} < v^{C(j^*)}_{j^*}$. Because $\epsilon$ by construction is strictly less than $v_{j^{*}}^{d}-	v^{C(j^*)}_{j^*}$, it follows that payoff from blocking is less than $v_{j^{*}}^{d}-\epsilon$ when $\delta$ is sufficiently large. 

Now suppose that $d = C(j^*)$. The blocking payoff being less than $v_{j^{*}}^{C(j^*)}-\epsilon$ can be re-written as $(1- \delta )( b - v_{j^*}^{C(j^*)} ) + \epsilon \le \delta(1- \delta^{L(\delta)} ) (v^{C(j^*)}_{j^*} - {v}_{j^*}(\cma_{C(j^*)} ) )$.
As $\delta \rightarrow 1$, the LHS converges to $\epsilon$. Because $\lim_{\delta\rightarrow 1}\delta^{L(\delta)}=\kappa$, the RHS converges to $(1-\kappa)(v^{C(j^*)}_{j^*} - {v}_{j^*}(\cma_{C(j^*)} ) )$. By definition of $\epsilon$, the above inequality holds, and therefore, no coalition can profitably block if $\delta$ is sufficiently high.
	
{\flushleft \textbf{States of the form $\underline{w}(S, \tau )$:}} We first consider the case where $C \subseteq S$ and $C\ne \emptyset$. Choose an arbitrary $i\in C$ and we will show that the blocking is not profitable for $i$. By the definition $\cma_S$, coalition $C$ cannot generate a payoff of more than $\cm_i$ for player $i$, so $i$ finds the blocking to be unprofitable if 
\begin{align}\label{Equation-iminmaxnodeviate}
    (1-\delta^{L(\delta)- \tau}) v_{i} (\cma_{S})  + \delta^{L(\delta) - \tau}v_{i}^{S}\geq 	(1-\delta) \cm_i + \delta(1-\delta^{L(\delta)})v_{i} (\cma_S)+\delta^{L(\delta)+1}v_{i}^{S}.
\end{align}
Because $v_i^S>	\cm_i \geq v_{i} (\cma_S)$, it suffices to show that the inequality above holds at $\tau=0$. Re-arranging terms yields that $(1-\delta)(1-\delta^{L(\delta)}) v_{i} (\cma_S)  + (1-\delta)\delta^{L(\delta) }v_{i}^{S}\geq 	(1-\delta) \cm_i$, and then dividing by $(1-\delta)$ yields $(1-\delta^{L(\delta)}) v_{i} (\cma_S)  + \delta^{L(\delta) }v_{i}^{S}\geq 	\cm_i$. Let us verify that this inequality holds for sufficiently high $\delta$. Taking $\delta\rightarrow 1$ yields \eqref{Equation-Kapppa}, which is true. Hence \eqref{Equation-iminmaxnodeviate} holds for sufficiently high $\delta$.
	
Next we consider the case where $C\nsubseteq S$. By construction, $j^*\notin S$. Player $j^{*}$ finds blocking to be unprofitable if
\begin{equation}\label{Equation-MinmaxingIC}
    (1-\delta^{L(\delta)- \tau}) v_{j^{*}} (\cma_S)  + \delta^{L(\delta) - \tau}v_{j^{*}}^{S}\geq 	(1-\delta) b + \delta(1-\delta^{L(\delta)}) {v}_{j^*}(\cma_{C(j^*)} ) + \delta^{L(\delta)+1}v_{j^{*}}^{C(j^{*})}.
    \end{equation}
We prove that this inequality is satisfied if $\delta$ is sufficiently high. Examining the LHS, observe that for all $\tau$ such that $0\leq\tau\leq L(\delta)-1$, 
\begin{align*}
	\lim_{\delta\rightarrow 1}\left[	(1-\delta^{L(\delta)- \tau}) v_{j^{*}} (\cma_S)  + \delta^{L(\delta) - \tau}v_{j^{*}}^{S}\right]&=\lim_{\delta\rightarrow 1}	\left[\left(1-\frac{\kappa}{\delta^\tau}\right) v_{j^{*}} (\cma_S)  + \frac{\kappa}{\delta^\tau}v_{j^{*}}^{S}\right]\\
	&=(1-\tilde{\kappa}) v_{j^{*}} (\cma_S)  + \tilde\kappa v_{j^{*}}^{S}
\end{align*}
for some $\tilde\kappa \in [\kappa,1]$. Examining the RHS of \eqref{Equation-MinmaxingIC}, observe that
\begin{align*}
	&\lim_{\delta\rightarrow 1}\left[	(1-\delta) b + \delta(1-\delta^{L(\delta)}) {v}_{j^*}(\cma_{C(j^*)}) +\delta^{L(\delta)+1} v_{j^{*}}^{C(j^{*})} \right]\\
 = & \lim_{\delta\rightarrow 1}\left[(1-\delta^{L(\delta)}){v}_{j^*}(\cma_{C(j^*)}) + \delta^{L(\delta)}v_{j^{*}}^{C(j^{*})}\right]\\
	= &\; (1-\kappa) {v}_{j^*}(\cma_{C(j^*)}) +\kappa v_{j^*}^{C(j^*)}\; \le \; (1-\kappa) \cm_{j^*} +\kappa v_{j^{*}}^{C(j^*)} \; \leq \; (1-\tilde\kappa) \cm_{j^*} + \tilde\kappa v_{j^*}^{C(j^*)},
\end{align*}
where the first equality follows from taking limits, the second from  $\lim_{\delta\rightarrow 1} \delta^{L(\delta)}=\kappa$, the first weak inequality follows from ${v}_{j^*}(\cma_{C(j^*)}) \le \cm_{j^*}$, the second weak inequality follows from $\tilde\kappa\geq \kappa$ and $\cm_{j^*}<v_{j^{*}}^{C(j^{*})}$. Since $\tilde\kappa\in  [\kappa,1]$ and $j^*\notin S$, \eqref{Equation-Kappa} delivers that $(1-\tilde{\kappa}) v_{j^{*}} (\cma_S)  + \tilde\kappa v_{j^{*}}^{S}$ is strictly higher than $(1-\tilde\kappa) \cm_{j^*}+\tilde\kappa v_{j^{*}}^{C(j^*)}$. This term guarantees that \eqref{Equation-MinmaxingIC} holds for sufficiently high $\delta$. \qed

\subsection{Proof of \cref{Theorem-Stationary} on p. \pageref{Theorem-Stationary}}
Since the stage game exhibits default-independent power, for each $C\in \coalitions$ there exists set $D(C)\subseteq \Re^{|C|}$ such that $\payoffs_C(a) = D(C)\cup \{v_C(a)\}$ for all $a\in A$. To study stationary PCEs, we define the analogue of the self-generation map  \citep{abreu1990toward}. In a stationary PCE, at any history the prescribed current-period payoff and continuation value are the same. Accordingly, we define the stationary self-generation map as follows. Let $\payoffs^{\circ} \equiv \left\{v(a):a\in \alternatives\right\}$ denote the set of stage-game payoff profiles. For any set $Y\subseteq \payoffs^{\circ} \subseteq \Re^n$, define
\[
\Phi_{\delta}(Y) \equiv \big\{y \in Y: \forall C\in \coalitions \text{ and } 
z_C\in D(C), \;\exists y'\in Y \text{ and }  i\in C \text{ s.t. } y_i \ge (1-\delta)z_i + \delta y'_i\big\}.
\]
The self-generation map identifies the set of discounted payoffs that are supportable when continuation payoffs must lie in the set $Y$, so for any blocking coalition $C$ contemplating payoff $z_C\in D(C)$ for its members, there is an alternative continuation payoff $y'$ that deters $C$ from doing so.

\begin{lemma} \label{Lemma-Self-Generation}
    If $Y\subseteq \payoffs^{\circ}$ and $Y \subseteq \Phi_\delta(Y)$, then every $y\in Y$ can be supported by a stationary PCE.
\end{lemma}
\begin{proof}
Consider any $y \in Y \subseteq \Phi_\delta(Y)$. We will construct a stationary PCE $\sigma: \bigcup_{\tau=0}^{\infty} \histories^{\tau} \rightarrow A$ such that $U(\emptyset|\sigma)= y$. Since $y\in \payoffs^{\circ}$, there exists $\tilde{a}\in \alternatives$ such that $v(\tilde{a}) = y$. Define $\sigma(\emptyset)= \tilde{a}$. We will extend $\sigma$'s domain to $\bigcup_{\tau=0}^{\infty}\histories^{\tau}$ while making sure that for all $\tau \geq 0, h^{\tau} \in H^{\tau}$, $\sigma$ satisfies (i) stationarity: $\sigma(h^\tau,  \sigma(h^\tau), \emptyset) = \sigma\left(h^\tau\right)$ and $v(\sigma(h^{\tau}))\in Y$, and (ii) no profitable block: for all $ C\in \coalitions$  and $a'\in E_C(\sigma(h^{\tau}))$, there exists $i\in C$ such that $v_i( \sigma(h^{\tau})) \ge (1-\delta)v_i(a' ) + \delta v_i(\sigma(h^{\tau},a', \{C\}))$.

Since $y \in \Phi_\delta(Y)$, we know that for all $C\in \coalitions$, and $a\in E_C(\tilde{a})$, there exists $y'[a,C] \in Y$ such that $y_i \ge (1-\delta)v_i(a) + \delta y'_i[a,C]$ for some $i\in C$. Furthermore since $y'[a,C] \in Y\subseteq \payoffs^{\circ}$, this implies the existence of $a'[a,C]\in \alternatives$ such that $y'[a,C]=v(a'[a,C])$. We extend $\sigma$ to the domain $\{\emptyset\} \cup \histories^1$ as follows: $\sigma(a,B)=\tilde{a}$ if $B$ is either an empty set or comprises more than one coalitions; and $\sigma(a,\{C\})=a'[a,C]$ for all $C\in \coalitions$ and $a\in E_C(\tilde{a})$. Clearly, this satisfies properties (i) and (ii) for $\tau=0$.

Now we complete the definition of the plan $\sigma$ through induction on $t$. Fix $t>1$, and assume we've defined the function $\sigma: \bigcup_{\tau=0}^{t-1} \histories^{\tau} \rightarrow A$ satisfying properties (i) and (ii) for $\tau=0, \ldots, t-2$ and all $h^\tau \in \histories^{\tau}$. Consider any $h^{t-1} \in \histories^{t-1}$. Since $v(\sigma(h^{t-1})) \in Y\subseteq \Phi_\delta(Y)$, we know that for all $C\in \coalitions$, and $a\in E_C( \sigma(h^{t-1}) )$, there exists $y^{h^{t-1}}[a,C] \in Y$ such that $v_i(\sigma(h^{t-1})) \ge (1-\delta)v_i(a) + \delta y^{h^{t-1}}_i[a,C]$ for some $i\in C$. In addition, since $y^{h^{t-1}}[a,C] \in Y\subseteq \payoffs^{\circ}$, this implies the existence of $a^{h^{t-1}}[a,C]$ such that $y^{h^{t-1}}[a,C]=v(a^{h^{t-1}}[a,C])$. Extend $\sigma$ to the domain $\histories^t$ by defining $\sigma(h^{t-1}, a, B )$ as follows: $\sigma(h^{t-1}, a,B)=\sigma(h^{t-1})$ if $B$ is either an empty set or comprises more than one coalitions; and $\sigma(h^{t-1}, a,\{C\})=a^{h^{t-1}}[a,C]$ for all $h^{t-1}\in \histories^{t-1}$, $C\in \coalitions$ and $a\in E_C(\sigma(h^{t-1}))$. Note that, by construction, the function satisfies properties (i) and (ii) for $\tau=0, \ldots, t-1$ and $h^\tau \in \histories^\tau$. This completes the induction step. 

By property (i), $\sigma$ is stationary, which implies that at any history $h^t$ it delivers the discounted payoff $U(h^t|\sigma) = v(\sigma(h^t))$. In particular, $U(\emptyset|\sigma)=y$, as required. Property (ii) then implies that $\sigma$ is a PCE. 
\end{proof}

\paragraph{Proof of \cref{Theorem-Stationary}.}
We show that any payoff that can be supported by a PCE can be supported by a stationary PCE (given that the converse holds by definition). Take a PCE $\sigma$, and let $\mathcal{U}(\sigma) \equiv \{U(h|\sigma):h\in \histories \}$ denote the set of continuation values associated with $\sigma$. Since $\payoffs^{\circ}$ is convex, it follows that $\mathcal{U}(\sigma)\subseteq \payoffs^{\circ}$. Given \cref{Lemma-Self-Generation}, it suffices to show that $\mathcal{U}(\sigma)\subseteq \Phi_{\delta}(\mathcal{U}(\sigma))$.

Consider any $y\in \mathcal{U}(\sigma)$ and let $h^t$ be some $t$-history such that $U(h^t|\sigma)=y$. Since $\sigma$ is a PCE, we know that for any $C\in \coalitions$ and any $a'\in E_C(\sigma(h^t))$, there exists $y'\in \mathcal{U}(\sigma)$ and $i\in C$ such that $(1-\delta)v_i(a') + \delta y'_i \le y_i$. Since the stage game exhibits default-independent power, this is equivalent to the statement that for all $ C\in \coalitions $ and $ z_C\in D(C)$, there exists $ y'\in \mathcal{U}(\sigma) $ and $ i\in C $ such that $ y_i \ge (1-\delta)z_i + \delta y'_i$, which implies $\mathcal{U}(\sigma) \subseteq \Phi_{\delta}(\mathcal{U}(\sigma))$. The claim then follows from \cref{Lemma-Self-Generation}.\qed

\subsection{Proof of \cref{Theorem-Symmetric} on p. \pageref{Theorem-Symmetric}}

\paragraph{A Preliminary Result.} Let $\feasible^S$ denote the set of discounted payoff profiles from strongly symmetric PCEs. The following lemma is useful for proving \cref{Theorem-Symmetric}.

\begin{lemma}\label{Lemma-Symmetric}
If $\feasible^S$ is nonempty, then $\feasible^S$ is the singleton set $\{\hat{v}\}$.    
\end{lemma}
\begin{proof}
Suppose $\feasible^S$ is nonempty. Since players accrue identical payoffs from symmetric action profiles, we have $v_i=v_j$ for all players $i,j$ and $v\in \feasible^S$. Let $\hat{x}\equiv \max_{a\in \alternatives^S}v_1(a)$ be the highest feasible symmetric payoff, so $\hat{v} = (\hat{x},\ldots,\hat{x})$. Let $\underline{x} \equiv \inf\{x: (x,\ldots,x)\in \feasible^S \}$ denote the lowest symmetric PCE payoff. 

Consider a sequence $\big( (x^k,\ldots,x^k)\big)_{k=1}^\infty \subseteq \feasible^S$ that converges to $(\underline{x},\ldots,\underline{x})$ and let $\sigma^k$ be the PCE that supports payoff profile $({x}^k,\ldots,{x}^k)$. As a PCE, $\sigma^k$ cannot be profitably blocked by the grand coalition $N$ choosing $\hat{a}\in \argmax_{a\in A^S} v_1(a)$, which would generate the stage-game payoff profile $(\hat{x},\ldots,\hat{x})$. So for each $k$ we have $x^k \ge (1-\delta)\hat{x} +\delta \underline{x}$. Since $x^k\rightarrow \underline{x}$, it follows that $\underline{x}\ge \hat{x}$. However, by definitions $\underline{x}  \le   \hat{x}$, so $\feasible^S = \{\hat{v}\}$.
\end{proof}

\paragraph{Proof of \cref{Theorem-Symmetric}.} 
For the ``only if'' direction: Suppose there exists a strongly symmetric PCE $\sigma$. By \cref{Lemma-Symmetric}, $\sigma$'s continuation values satisfy $U(h|\sigma) = \hat{v}$ for all $h\in\histories$. Since $\hat{v}$ is the maximal feasible payoff from symmetric action profiles, and $\sigma$ must prescribe symmetric action profiles after every history, it follows that the default action profile $\sigma(h)$ must satisfy $v(\sigma(h)) = \hat{v}$ for all $h\in \histories$.


Take an arbitrary $\hat{h}\in \histories$ and let $\hat{a}\equiv \sigma(\hat{h})$ be the default action profile. As argued above $\hat{a}$ is symmetric and $v(\hat{a}) = \hat{v}$, so it remains to show that $\hat{a}$ is a core alternative. Suppose otherwise, namely there exists coalition $C$ and $a'\in E_C(\hat{a})$ such that $v_i(a')>v_i(\hat{a})$ for all $i\in C$. Then,
\begin{align*}
U_i(\hat{h}|\sigma) & = (1-\delta) v_i(\hat{a}) + \delta U_i(\hat{h},\hat{a},\emptyset ) = (1-\delta) v_i(\hat{a}) + \delta \hat{v}_i\\
& < (1-\delta)v_i(a') + \delta \hat{v}_i = (1-\delta)v_i(a') + \delta U_i(\hat{h},a',\{C\}|\sigma)
\end{align*}
for all $i\in C$, which contradicts $\sigma$ being a PCE. 

For the ``if'' direction: suppose $\hat{v} = v(\hat{a})$ for some symmetric core alternative $\hat{a}$. A plan that specifies $\hat{a}$ as default at all histories is a strongly symmetric PCE that supports the discounted payoff $\hat{v}$. Uniqueness follows from \cref{Lemma-Symmetric}. \qed

\subsection{Proof of \cref{Theorem-Secret} on p. \pageref{Theorem-Secret}}

\paragraph{Preliminary Results.} To prove our claim, we first introduce the transferable-utility analogue of the concept of a blocking plan, as defined in \cref{Section-Proof-NTU}.

A \textit{(transferable-utility) blocking plan} by coalition ${C}$ from a plan $\sigma$ is a pair $(\alpha, \beta)$, where $\alpha:{\historiesTU} \rightarrow \alternatives$ and $\beta:\historiesTU\rightarrow \mathcal{T}$ satisfy $\alpha(h) \in E_{C} ( a(h|\sigma) )$ and $\beta_{-C}(h) = \hchi^C ({T}(h|\sigma))$ for every history ${h}\in {\historiesTU}$. After each history, the blocking plan $(\alpha,\beta)$ generates a path 
\[
\Big( 
\alpha(h), \beta(h),  
\; \alpha\big(h, \alpha(h), \{C\}, \beta(h) \big), \beta \big(h, \alpha(h), \{C\}, \beta(h)\big), \ldots  \Big)
\] 
that is distinct from the one generated by $\sigma$. We will use $U_i(h|\alpha, \beta)$ to denote player $i$'s discounted payoff from that path. The blocking plan $(\alpha, \beta)$ is {profitable} if there exists a history ${h}$ such that $U_i({h} |\alpha,  \beta) > U_i({h}|\sigma)$ for all $i\in C$.

\begin{lemma}\label{Lemma:One-Shot-Deviation-Secret}
If a plan $\sigma$ is a public PCE, then no coalition $C\in \seccoal\cup \sincoal$ has a profitable blocking plan.
\end{lemma}

\begin{proof}
Consider a {public} plan $\sigma$ from which coalition $C\in \seccoal\cup \sincoal$ has a profitable blocking plan $(\alpha,\beta)$. In particular, there exists a history $h \in \histories$ such that $U_{i}(h|\alpha,\beta) > U_i(h|\sigma)$ for every $i\in C$. We will show that this implies that coalition $C$ has a profitable block from the plan $\sigma$, so $\sigma$ is not a PCE.

By \cref{Assumption-Bounded}, the plan $\sigma$ has bounded continuation values. Moreover, as proven in \cref{lemma:coalition bounded wlog} in the Supplementary Appendix, it is without loss to assume that the blocking plan $(\alpha,\beta)$ also has bounded continuation values. Treating coalition $C$ a fictitious player whose payoff is the sum of those of its members, we can see that $C$ faces a decision tree with bounded values. Applying the standard one-shot deviation principle to the fictitious player $C$ yields the existence of $\hat{h}\in \historiesTU$ such that 
such that 
\begin{align*}
(1-\delta) \sum_{i\in C} u_i \Big ( \alpha(\hat{h}), \; \beta(\hat{h}) \Big ) + \delta \sum_{i\in C} U_i \Big ( \hat{h}, \; \alpha( \hat{h} ), \; \{C\}, \; \beta(\hat{h}) \Big| \sigma \Big ) > \sum_{i\in C} U_i(\hat{h}|\sigma).
\end{align*}
If $C \in \sincoal$, we have already shown that $\sigma$ is not a PCE, which completes the proof. If instead $C \in \seccoal$, showing that $C$ can profitably block $\sigma$ at $\hat{h}$ amounts to showing that the total payoff can be divided so that every member of $C$ is made better off.

Let $\transfers^*$ be the transfers matrix such that for all $(j,k) \notin C\times C$, $\transfers^*_{jk} = \beta_{jk}(\hat{h})$; but for $(j,k) \in C\times C$,  $\transfers^*_{jk}$ satisfies for every $i\in C$,
\begin{align}\label{inequality: transfers rearrange}
(1-\delta) u_i \Big (\alpha(\hat{h} ), \; \transfers^* \Big ) + \delta U_i \Big (\hat{h}, \; \alpha( \hat{h} ), \; \{C\}, \;\beta(\hat{h}) \Big| \sigma \Big ) > U_i(\hat{h}|\sigma).
\end{align} 
Consider the two histories $h^1 \equiv \big (\hat{h}, \; \alpha( \hat{h}  ), \; \{C\}, \; \beta(\hat{h} ) \big )$  and  $h^2 \equiv \big ( \hat{h}, \; \alpha( \hat{h}  ), \; \{C\}, \; T^* \big )$.

By the construction of $\transfers^*$ and the fact that $C\in \seccoal\cup \sincoal$, $h^1$ and $h^2$ share the same public component $h^1_p=h^2_p$. Since the plan $\sigma$ is public, it follows that for all $i\in C$, $U_i \big (\hat{h}, \, \alpha( \hat{h} ), \, \{C\}, \, \beta(\hat{h} ) \big| \sigma \big ) = U_i \big (\hat{h}, \, \alpha( \hat{h} ), \, \{C\}, \, T^* \big| \sigma \big )$. Inequality (\ref{inequality: transfers rearrange}) can therefore be re-written as $(1-\delta) u_i \big (\alpha(\hat{h}), \, \transfers^* \big ) + \delta U_i \big (\hat{h}, \, \alpha( \hat{h} ), \, \{C\}, \, T^* \big| \sigma \big ) > U_i(\hat{h}|\sigma)$ for every $i\in C$, which implies that $\sigma$ is not a PCE.
\end{proof}

The next result shows that for any payoff profile in $\fcrt(\seccoal)$, we can construct  ``${(\seccoal\cup \sincoal)}$-specific punishments'' for all coalitions in ${\seccoal\cup \sincoal}$.

\begin{lemma}\label{Lemma:Coalition-specific Punishment}
	For any $u^*\in \fcrt(\seccoal)$, there exist ${(\seccoal\cup \sincoal)}$-specific punishments $\{u^C: C \in {\seccoal\cup \sincoal}  \} \subseteq \fcrt(\seccoal)$ such that $\sum_{i \in C} u^C_i < \sum_{i\in C} u_i^* $ for all  $C\in \seccoal\cup \sincoal$, and $\sum_{i \in C} u_i^C < \sum_{i \in C} u_i^{C'}$ for all  $C,C'\in \seccoal\cup \sincoal, C' \ne C$.
\end{lemma}

\begin{proof}
	For any coalition $C \in \seccoal\cup \sincoal$, consider the vector $u^C$ defined by
	\[
	u^C_i = 
	\begin{cases}
	u_i^* -  \frac{\epsilon}{\abs{C}} & i\in C \\
	u_i^* +\frac{\epsilon}{ \abs{\players \backslash C}}  & i\notin C
	\end{cases}
	\]
Compared to the payoff vector $u^*$, in $u^C$ every player in $C$ is taxed equally, with a total summing up to $\epsilon$; by contrast, players outside of $C$ are paid equally, with a total also summing up to $\epsilon$. The $\epsilon$ may be set sufficiently small to ensure all $u^C$'s are in $\fcrt(\seccoal)$. 

We show that these vectors satisfy the required inequalities. By construction, $\sum_{i\in C} u_i^C = \sum_{i\in C} u^*_i - \epsilon < \sum_{i\in C} u^*_i $ for all $C \in \seccoal\cup \sincoal$, which verifies the first set of inequalities in the lemma. 

Now consider two coalitions $C, C' \in \seccoal\cup \sincoal$ with $C \ne C'$. Coalition $C$ can be partitioned as $C = (C \backslash C') \cup (C \cap C')$. Compared to $u^*$, $u^{C'}$ gives everyone outside $C'$ an extra $\frac{\epsilon}{\abs{N\backslash C'}}$, while lowering the payoff of everyone inside $C'$ by $\frac{\epsilon}{\abs{C'}}$, so
	\begin{equation*}
	\sum_{i\in C} u^{C'}_i = \sum_{i\in C\backslash C'} u^{C'}_i + \sum_{i\in C \cap C'} u^{C'}_i	 =\Big [ \sum_{i\in C\backslash C'} u^*_i + \frac{\abs{ C\backslash C' }}{\abs{ \players \backslash C'}} \epsilon  \Big] + \Big [ \sum_{i\in C \cap C'} u^*_i - \frac{\abs{C\cap C' }}{\abs{C'}} \epsilon \Big ]. 
	\end{equation*}
    Combining terms above yields $\sum_{i\in C} u^{C'}_i= \sum_{i\in C} u^*_i - \big[ \frac{\abs{C\cap C' }}{\abs{C'}} - \frac{\abs{ C\backslash C' }}{\abs{ \players \backslash C'}} \big] \epsilon$. Note that since $C \ne C'$, either $C \backslash C' \ne \emptyset$ or $C \cap C' \ne C'$ (or both) must be true; in other words, either $\frac{\abs{ C\backslash C' }}{\abs{ \players \backslash C'}} > 0$ or $\frac{\abs{C\cap C' }}{\abs{C'}} < 1$. In either case, $\sum_{i\in C} u^{C'}_i> \sum_{i\in C} u^*_i - \epsilon = \sum_{i\in C} u^C_i$, which gives us the second set of inequalities in the lemma.
\end{proof}

\paragraph{Proof of \cref{Theorem-Secret}.} The result comprises two parts. 

{\noindent \underline{Part 1}: \emph{For all $\delta \ge 0$, public PCEs give each player $i$ at least $\im_i$ and each coalition $C\in {\seccoal}$ a total payoff of at least $\cf_C$.}}\smallskip

{\noindent \underline{Part 2}: \emph{For every $u\in \fcrt(\seccoal)$, there $\underline{\delta}<1$ such that for all $\delta\in(\underline{\delta},1)$, there exists a public PCE supporting $u$.}}\medskip

We prove Part 1 here and Part 2 in the Supplementary Appendix. In fact, we establish a statement stronger than Part 1: every public PCE $\sigma$ guarantees that at every history $h\in \historiesTU$, $U_i (h|\sigma ) \ge \im_i$ for every $i\in \players$ and $\sum_{i \in C} U_i (h|\sigma ) \ge \cf_C$  for every $C\in \seccoal$. The fact that $U_i (h|\sigma ) \ge \im_i$ for every $i\in \players$ follows from similar arguments as in the proof of \cref{Theorem-NTU}. Towards a contradiction, suppose $\sigma$ is a public plan  such that there exists a coalition $C\in {\seccoal}$ and history $\hat{h}$ where $\sum_{i\in  C} U_i( \hat{h} |\sigma ) < \cf_C$. We prove that $\sigma$ cannot be a PCE. 

To this end, we construct a profitable blocking plan from $\sigma$ for coalition $C$.  At every history $h\in \historiesTU$, let $a(h|\sigma)$  denote the default and $\alpha(h) \in \argmax_{a\in E_C(a(h|\sigma)) } \sum_{i\in C} v_i(a)$ be an alternative in coalition $C$'s ``best response.'' By the definition of $\cf_{C}$, it follows that $\sum_{i\in C} v_i( \alpha(h) )  \ge  \cf_{C} > \sum_{i\in C } U_i (\hat{h}|\sigma )$, so we can find transfers among players in ${C}$ such that when combined with $\alpha(h)$, these transfers give each $i \in {C}$ higher payoff than $U_i(\hat{h}|\sigma )$. Formally, at every history $h\in \historiesTU$, there exist transfers $\tilde{\transfers}_{C}(h) \equiv [\tilde{\transfers}_{ij}(h) ]_{i\in {C}, j\in \players }$ such that $\tilde{\transfers}_{ij}(h) =0 $ for all $j \in \players\backslash {C}$, and $v_i ( \alpha(h) ) +\sum_{j\in {C}} \tilde{\transfers}_{ji}(h) - \sum_{j\in {C}} \tilde{\transfers}_{ij} (h) > U_i (\hat{h}|\sigma )$ for all $i\in C$. As a result, for each player $i \in {C}$, the experienced payoff from the stage-game outcome satisfies 
\begin{equation*}
u_i \Big(\alpha(h),\; \tilde{\transfers}_{C}(h), \hchi^{C} \big(\, T(h|\sigma ) \,\big) \Big)\ge v_i ( \alpha(h) ) +\sum_{j\in {C}} \tilde{\transfers}_{ji}(h) - \sum_{j\in {C}} \tilde{\transfers}_{ij} (h)> U_i (\hat{h}|\sigma ),
\end{equation*}
where the weak inequality follows because $\hchi^C_{ji}(\, T(h|\sigma)\, ) \ge 0$  and $ \tilde{\transfers}_{ij}(h) = 0 $ for all $j\in \players \backslash {C}$. Observe that the inequality above can hold at every history, including $\hat{h}$ and those that follow. These steps prove that the blocking plan $(\alpha,\beta)$ by coalition $C$, where $\beta(h) \equiv \big[\tilde{\transfers}_{C}(h) , \hchi^C(\, T(h| \sigma ))\big]$
for every history $h\in \historiesTU$, is profitable: $U_i(\hat{h}|\alpha,\beta ) > U_i(\hat{h}|\sigma )$ for every  $i\in C$. \cref{Lemma:One-Shot-Deviation-Secret} then implies that $\sigma$ is not a PCE.\medskip

\newpage

\section*{Supplementary Appendix}
\setcounter{secnumdepth}{2}

The Supplementary Appendix is organized as follows:
\begin{itemize}[noitemsep]
    \item \Cref{Section-SupplemenntaryPreliminary} completes the proof of \Cref{Theorem-Secret}.
    \item \Cref{Section-KCProofs} proves results for labor market matching (\Cref{Proposition-KC-Public,Proposition-KC-Secret,Proposition-Wages,Proposition-Comparative-Wages}).
    \item \Cref{Section-PCPE} formalizes the concept of Perfect Coalition Proof Equilibrium and establishes the counterpart of \Cref{Theorem-NTU}.
    \item \Cref{Section-Distributive} describes our application to bargaining games with veto players.
\end{itemize}

\section{Completing the Proof of \Cref{Theorem-Secret}}\label{Section-SupplemenntaryPreliminary}
We first prove a few preliminary results used in the argument. The first result shows that when checking profitable blocking plans, we can WLOG focus on those with bounded total continuation values.

\begin{lemma} \label{lemma:coalition bounded wlog}
Let $\sigma$ be a PCE. Suppose coalition $C$ has blocking plan $(\alpha,\beta)$ such that $\sum_{i\in C} U_i(\overline{h}| \alpha,\beta ) > \sum_{i\in C} U_i(\overline{h}| \sigma )$ for some $\overline{h}\in \historiesTU$, then $C$ has blocking plan $(\alpha',\beta')$ such that $\sum_{i\in C} U_i(\overline{h}| \alpha',\beta' ) > \sum_{i\in C} U_i(\overline{h}| \sigma )$, and $\{\sum_{i\in C} U_i(h| \alpha',\beta'): h\in  \historiesTU \}$ is bounded.
\end{lemma}

\begin{proof} We break this argument into two parts.\medskip

\noindent\underline{Part 1:} We show that the set $\{\sum_{i \in C} U_i(h| \alpha,\beta ): h\in  \historiesTU \}$ is bounded from above. To this end, it suffices to show that the set of stage-game payoffs from the blocking plan, $\{ \sum_{i\in C} u_i ( \alpha(h),\beta(h)): h\in \historiesTU \}$ is bounded from above.

Consider an arbitrary coalition $C\in\coalitions$ and an arbitrary history $h\in \historiesTU$. Let $\tilde{a} = a ( h |\sigma )$ denote the default alternative specified by $\sigma$ and $\tilde{\transfers} = \transfers( h|\sigma)$ denote the default transfers. By the definition of a blocking plan, $\alpha(h)\in E_{C}(\tilde{a})$ and $\beta(h) = ( {T}'_C, \hchi^C(\tilde{T}))$ for some ${T}'_C\in \mathcal{T}_C$. Since the transfers ${T}'_C$ may involve nonzero transfers to players outside of $C$, we have
\begin{align} \label{Equation-average-bound}
\sum_{i\in C} u_i(\alpha(h),\beta(h) ) = & \sum_{i\in C} v_i(\alpha(h)) + \sum_{i\in C, j\notin C} \hchi^C_{ji}(\tilde{\transfers}) - \sum_{i\in C,j\notin C} {T}'_{ij} \nonumber \\
\le & \sum_{i\in C} v_i(\alpha(h)) + \sum_{i\in C, j\notin C} \hchi^C_{ji}(\tilde{\transfers}) 
\end{align}

Now suppose the coalition $C$ blocks at history $h$ and chooses alternative $\alpha(h)\in E_C(\tilde{a})$; however, instead of $\beta(h) = ( {T}'_C, \hchi^C(\tilde{T}))$, $C$ chooses transfers $(T''_C,\hchi^C(\tilde{T}) )$, where the transfers $T''_C$ are such that members of $C$ make zero payment to players outside of $C$ while splitting the total payoff within $C$ evenly. If $C$ carries out this block, each member $i\in C$ obtains a discounted utility of at least
\[
    (1-\delta)\frac{1}{\abs{C}} \Big[\sum_{i\in C} v_i(\alpha(h)) + \sum_{i\in C, j\notin C} \hchi^C_{ji}(\tilde{\transfers})\Big] + \delta\; \inf_{ h \in \historiesTU,i\in \players }   U_i(h|\sigma ),
\]
whereas adhering to $\sigma$ at $h$ yields each member at most $\sup_{h \in \historiesTU,i\in \players} U_i(h|\sigma )$. Since $\sigma$ is a PCE, $\big(\,\alpha(h), T''_C,\hchi^C(\tilde{T})\,\big)$ cannot be a profitable block for $C$, so it must be true that
\begin{equation*} 
	(1-\delta)\frac{1}{\abs{C}} \Big[\sum_{i\in C} v_i(\alpha(h)) + \sum_{i\in C, j\notin C} \hchi^C_{ji}(\tilde{\transfers})\Big] + \delta\; \inf_{ h \in \historiesTU,i\in \players }   U_i(h|\sigma )   \le \sup_{h \in \historiesTU,i\in \players} U_i(h|\sigma ).
\end{equation*}
Combining the inequality above with \eqref{Equation-average-bound} yields
\[
	(1-\delta)\frac{1}{\abs{C}} \Big[\sum_{i\in C} u_i(\alpha(h),\beta(h) )\Big] + \delta\; \inf_{ h \in \historiesTU,i\in \players }   U_i(h|\sigma )   \le \sup_{h \in \historiesTU,i\in \players} U_i(h|\sigma ).
\]
Rearranging terms, we have
\begin{align*}
    \sum_{i\in C} u_i(\alpha(h),\beta(h) ) & \le  \frac{|C|}{1-\delta} \left[\sup_{h \in \historiesTU,i\in \players} U_i(h|\sigma ) - \delta\; \inf_{ h \in \historiesTU,i\in \players }   U_i(h|\sigma ) \right]\\
    & \le \frac{|C|}{1-\delta}  \abs{\sup_{h \in \historiesTU,i\in \players} U_i(h|\sigma )} +  \frac{|C|\delta}{1-\delta}\; \abs{\inf_{ h \in \historiesTU,i\in \players }   U_i(h|\sigma ) }.
\end{align*}
Since $\{ U(h|\sigma ): h \in \historiesTU  \}$ is bounded by \cref{Assumption-Bounded}, there exists $L>0$ such that 
\[
\abs{  \sup_{h \in \historiesTU,i\in \players}  U_i(h|\sigma ) } \le L \;\;\; \text{ and }\;\;\;  \abs{ \inf_{h \in \historiesTU,i\in \players}  U_i(h|\sigma ) } \le L. 
\]
Therefore,
\[
\sum_{i\in C} u_i(\alpha(h),\beta(h) ) \le \frac{1+\delta}{1-\delta} \abs{C}L.
\]
Note that the inequality above holds for all $h\in \historiesTU$ while the right hand side does not depend on $h$, so our claim follows.
\medskip

\noindent\underline{Part 2:} It is without loss to assume that $\{\sum_{i \in C} U_i(h| \alpha,\beta ): h\in  \historiesTU \}$ is bounded from below. If not, we can construct another blocking plan $(\alpha',\beta')$ such that $\sum_{i\in C} U_i(\overline{h}| \alpha',\beta' ) > \sum_{i\in C} U_i(\overline{h}| \sigma )$ while ensuring $\{\sum_{i\in C} U_i(h| \alpha',\beta' ): h\in  \historiesTU \}$ is bounded from below: if $\sum_{i\in C} U_i(\hat{h}|\alpha,\beta )$ falls below $\min_{a\in \alternatives} \sum_{i\in C} v_i(a)$ for some history $\hat{h}\in\historiesTU$, we will ask $C$ to refuse all outgoing transfers at all histories following $\hat{h}$.
	
Formally, for a history $\hat{h} \in \historiesTU$, let $F(\hat{h} ) \equiv \{ \hat{h}h: h\in \historiesTU \}$ denote the set of histories that can follow from $\hat{h}$. Let $\underline{H}_C \equiv \{ h \in \historiesTU: \sum_{i\in C} U_i (h |\alpha,\beta) < \min_{a \in \alternatives} \sum_{i\in C} v_i(a) \}$. Let $\textbf{0}_C$ denote the vector of zero-valued transfers made from players in $C$. Set $\alpha'=\alpha$, and define
	\[
	\beta'(h) =
	\begin{cases}
	\big(\textbf{0}_C, \hchi^{C}(T(h|\sigma) ) \big) & \text{if } h \in F(\hat{h} ) \text{ for some } \hat{h}\in \underline{H}_C,  \\
	\beta(h) & \text{otherwise.}
	\end{cases}
	\]
	By construction, the blocking plan $(\alpha', \beta')$ has continuation values bounded below by $\min_{a\in \alternatives} \sum_{i\in C} v_i(a)$. In addition, compared to  $(\alpha, \beta)$, the blocking plan $(\alpha', \beta')$ gives coalition $C$ weakly higher total continuation value following any history, so $\sum_{i\in C} U_i(\overline{h}| \alpha',\beta' ) > \sum_{i\in C} U_i(\overline{h}| \sigma )$.
\end{proof}

Next we argue that there exists a finite set of payoff vectors whose convex hull contains the set $\firt$.

\begin{lemma} \label{lemma:TUPM-vertices}
For each alternative $a\in \alternatives$ let $\feasiblet(a)\equiv \{u\in \Re^n: \sum_i u_i = \sum_i v_i(a)\}$ denote the set of payoff profiles that can be generated by playing alternative $a$ and redistributing through transfers. Let $\overline{a}\in \argmax_{a\in \alternatives}\sum_{i\in \players} v_i(a)$ and $\underline{a}\in \argmin_{a\in \alternatives}\sum_{i\in \players} v_i(a)$ be two alternatives that maximize and minimize players' total generated payoffs, respectively. There exist payoff vectors $ \{ \tilde{u}^1, \ldots, \tilde{u}^M \} \subseteq  \feasiblet(\overline{a} )\cup \, \feasiblet(\underline{a} ) $, such that $\firt \subseteq \conv(\tilde{u}^1, \ldots, \tilde{u}^M) $.
\end{lemma}

\begin{proof}
By definition,
\[
	\firt \subseteq  \overline{\mathcal{U}}_{IR} \equiv \left \{u\in \Re^n: \sum_{i \in \players}  v_i(\underline{a} ) \le \sum_{i \in \players }u_i \le \sum_{i \in \players } v_i(\overline{a} ) \text{ and } u_i \ge \im_i \forall i\in \players \right \}.
\]
Since $\overline{\mathcal{U}}_{IR}$ is a bounded polyhedron, it is also a polytope. Let $x^1,\ldots, x^K $ be its vertices. Any point inside $\firt$ can then be expressed as convex combinations of these vertices.  Since $x^k \in \conv (\,\feasiblet(\overline{a} ) \cup \,\feasiblet(\underline{a} )) $ for all $1\le k \le K$, for each $k$, there exist $\{ \tilde{u}^{k,1},  \ldots,  \tilde{u}^{k,m_k} \} \subseteq \feasiblet(\overline{a}) \cup \feasiblet(\underline{a} )$ such that $x^k \in \conv (\tilde{u}^{k,1}, \ldots,  \tilde{u}^{k,m_k})$. As a result $\firt \subseteq \conv  (\cup_{1\le k\le K}  \left\{ \tilde{u}^{k,1},  \ldots,  \tilde{u}^{k,m_k} \right\}) $.
\end{proof}

\paragraph{Proof of Part 2 of \Cref{Theorem-Secret}.} Here, we establish that {for every $u\in \fcrt(\seccoal)$, there $\underline{\delta}<1$ such that for all $\delta\in(\underline{\delta},1)$, there exists a public PCE supporting $u$.}

To economize on notation, we define $\cf_{\{i\}}\equiv \im_i$ for all  $i\in \players$, so $\fcrt(\seccoal)$ can be written as $\fcrt(\seccoal) = \left\{ u \in \feasiblet: \sum_{i \in C} u_i > \cf_C \text{ for every } C \in \seccoal \cup \sincoal\right\}$. For every $C\in {\seccoal\cup \sincoal} $, let $\underline{a}_C \in \argmin_{a\in  \alternatives} \max_{a'\in E_{C}(a)} \sum_{i\in C} v_i(a')$ be an alternative that can be used to minmax $C$. Observe that, by Berge's maximum theorem, $\underline{a}_C$ is well-defined because $A$ is compact, $v$ is continuous, and $E_{C}(\cdot)$ is continuous and compact-valued. By the definition of $\cf_C$, $\sum_{i\in C}v_i(a') \le \cf_C$ for all $a'\in E_C(\underline{a}_C)$. Given the reflexivity of effectivity correspondences, $\underline{a}_C\in E_C(\underline{a}_C)$, and therefore, $\sum_{i\in C} v_i(\underline{a}_C)~\le~\cf_C$.\footnote{This is the only step of the argument in which we invoke reflexivity. As this step shows, our proof does not require a global form of reflexivity but only that it holds for minmaxing alternatives. In our application to labor market matching, we establish that this weaker form of reflexivity holds, which allows us to apply the argument here. \label{footnote: reflexivity}}

Fix any payoff vector $u^* \in \fcrt (\seccoal)$, and let $\{u^C:C\in \seccoal\cup \sincoal\}$ be the ${(\seccoal\cup \sincoal)}$-specific punishments from \cref{Lemma:Coalition-specific Punishment}. Given these punishments, let $\kappa\in (0,1)$ be such that for every $\tilde\kappa \in [\kappa,1]$, the following is true for all $C\in {\seccoal\cup \sincoal}$ and $C'\in {\seccoal\cup \sincoal}\backslash\{C\}$:  
\begin{align}
(1-\tilde\kappa)\sum_{i\in C} v_i(\underline{a}_C)+\tilde\kappa \sum_{i\in C} u_i^C& > \cf_C \label{Equation-Kapppa-Some}\\
(1-\tilde\kappa)\sum_{i\in C'} v_{i}(\underline{a}_C)+\tilde\kappa \sum_{i\in C'} u_i^C &> (1-\tilde\kappa) \sum_{i\in C'} v_i(\underline{a}_{C'} ) +\tilde\kappa \sum_{i\in C'} u_i^{C'}.
\label{Equation-Kappa-Some}
\end{align} 
By an argument identical to that in \Cref{Theorem-NTU}, there exists $\kappa \in (0,1)$ such that the inequalities above hold for all $\tilde\kappa\in [\kappa,1]$, $C\in \seccoal\cup \sincoal$ and $C'\in \seccoal\cup \sincoal\backslash\{C\}$. Let $L(\delta)\equiv \left\lceil{\frac{\log \kappa}{\log \delta}}\right\rceil$. As before, we use the property that $\lim_{\delta\rightarrow 1} \delta^{L(\delta)}=\kappa$.

For each alternative $a\in \alternatives$ let $\feasiblet(a)\equiv \{u\in \Re^n: \sum_i u_i = \sum_i v_i(a)\}$ denote the set of payoff profiles that can be generated by playing alternative $a$ and redistributing through transfers. Let $\overline{a}\in \argmax_{a\in \alternatives}\sum_{i\in \players} v_i(a)$ and $\underline{a}\in \argmin_{a\in \alternatives}\sum_{i\in \players} v_i(a)$ be alternatives that maximize and minimize total payoffs, respectively. Since $\fcrt(\seccoal ) \subseteq \firt$, by \cref{lemma:TUPM-vertices} in the Supp. Appendix, there exist payoff vectors $\{ \tilde{u}^1, \ldots, \tilde{u}^M \} \subseteq \mathcal{U}(\overline{a}) \cup \mathcal{U}(\underline{a})$ such that $\fcrt(\seccoal) \subseteq \conv( \tilde{u}^1, \ldots, \tilde{u}^M)$, where each $\tilde{u}^m = u(\tilde{a}^m, \tilde{\transfers}^m )$ for some $\tilde{a}^m\in \{\overline{a}, \underline{a} \}$ and $\tilde{\transfers}^m$. Let $\tilde{\mathcal{T}} \equiv \{ \tilde{T}^m \}_{m=1}^M$ be the set comprising these transfer matrices.

By \cref{lemma:NTU-folk-sequence-approx}, for any $\epsilon >0$, there exists $\underline\delta\in (0,1)$ such that for all $\delta \in (\underline\delta,1) $, there exist sequences $\big \{ (a^{d,\tau}, \transfers^{d,\tau} )_{\tau =0}^\infty : d \in {\seccoal\cup \sincoal}\cup\{*\} \big \} \subseteq \big( \{\overline{a},\underline{a} \}\times \tilde{\mathcal{T}} \big)^{\infty}$ such that for each $d$ and $t$, $(1-\delta) \sum_{\tau =0}^\infty \delta^\tau u(a^{d, \tau}, \transfers^{d,\tau} ) = u^d$ and $\norm{u^d - (1-\delta )\sum_{\tau = t}^\infty \delta^{\tau-t} u(a^{d,\tau}, \transfers^{d,\tau} ) } < \epsilon$. We fix an $\epsilon$ such that
\begin{equation*}
    \epsilon < (1-\kappa) \min \Big \{\allowbreak   \min_{d \in {\seccoal\cup \sincoal}, d' \in {\seccoal\cup \sincoal}\cup\{*\}, d'\neq d} \Big(\sum_{i\in d} u_i^{d'} - \sum_{i\in d} u_i^d \big ), \min_{d\in {\seccoal\cup \sincoal} } \sum_{i\in d} u^d_i - \cf_d \Big \},
\end{equation*}
and given that $\epsilon$, consider $\delta$ exceeding the appropriate $\underline\delta$.

We now describe the public plan that we use to support $u^*$. Let $\mathbf{0}$ denote the transfer matrix where all players make no transfers. For any non-singleton coalition \(C \in \coalitions \setminus \sincoal\), let $\sincoal(C) \equiv \{\{i\} : i \in C\}$ denote the set of singleton coalitions formed by members of $C$. For any collection of blocking coalitions $B\in \mathcal{B}$, define $\hat{C}(B) = [ B\cap (\seccoal\cup \sincoal ) ] \cup [\cup_{C\in B\backslash(\seccoal\cup\sincoal)}\sincoal(C) ]$. Note that $B\cap(\seccoal\cup\sincoal)$ is the set of secret or individual coalitions in $B$, while $\cup_{C\in B\backslash(\seccoal \cup\sincoal)}\sincoal(C)$ consists of singleton coalitions converted from members of $\cup_{C\in B\backslash(\seccoal \cup\sincoal)}C$, so $\hat{C}(B)$ is the collection of ``players'' in $B$ if coalitions in $\seccoal\cup\sincoal$ are treated as fictitious players. Consider the plan represented by the automaton $(W, w(*,0),f,\gamma )$, where 
\begin{itemize}
	\item $W\equiv \big \{ w(d, \tau) | d\in { \seccoal\cup \sincoal} \cup\{*\}, \tau \ge 0 \big \} \cup\{ \underline{w}(S, \tau )\,|\, S \in {\seccoal\cup \sincoal} ,0\le \tau <L(\delta)\}$ is the set of possible states and $w(*,0)$ is the initial state;	
	\item $f:W \rightarrow \outcomesTU $ is the output function, where $f(w(d, \tau )) = (a^{d,\tau },\emptyset, \transfers^{d,\tau} ) $ and $f(\underline{w}(S, \tau ))= ( \underline{a}_{S}, \emptyset, \mathbf{0} )$;
	\item $\gamma: W \times \outcomesTU \rightarrow W$ is the transition function. For each $S\in \seccoal$ let $u_S(a,T)=\sum_{i\in S}u_i(a,T)$ denote the total utility accruing to $S$. For states of the form $\{\underline{w}(S,\tau) | 0 \le \tau < L(\delta)-1, S \in \sincoal\cup \seccoal\}$, the transition is
	\[
	\gamma (\underline{w}(S, \tau),(a, B, \transfers) )=
	\begin{cases}
	\underline{w}(S^*, 0) & \text{where } S^* \in \argmin_{ S'\in \hat{C}(B) \backslash\{S\} } u_{S'}(a,T),\\	
    & \;\text{ if } B \ne \emptyset \text{ but either }\{S \notin \hat{C}(B) \} \\
    &\; \text{ or } \{u_S(a, \transfers )  > \cf_S \} \text{ is true.} \\
    \underline{w}(S,0) &   \text{if } B \ne \emptyset \text{ and both }\{S \in \hat{C}(B) \} \\
    & \;\text{ and } \{u_S(a, \transfers )  \le \cf_S \}  \text{ are true.}\\
	\underline{w}(S, \tau + 1) & \text{if $B = \emptyset$.}
	\end{cases}
	\]
    For states of the form $\{\underline{w}(S,L(\delta)-1 )| S \in \sincoal\cup \seccoal\}$, the transition is
	\[
	\gamma (\underline{w}(S, L(\delta) -1),(a, B, \transfers) )=
	\begin{cases}
	\underline{w}(S^*, 0) & \text{where } S^* \in \argmin_{S'\in \hat{C}(B) \backslash\{S\} } u_{S'}(a,T),\\	
    & \;\text{ if } B \ne \emptyset \text{ but either }\{S \notin \hat{C}(B) \} \\
    & \;\text{ or } \{u_S(a, \transfers )  > \cf_S \} \text{ is true.} \\
    \underline{w}(S,0) &   \text{if } B \ne \emptyset \text{ and both }\{S \in \hat{C}(B) \} \\
    & \;\text{ and } \{u_S(a, \transfers )  \le \cf_S \}  \text{ are true.}\\
	{w}(S, 0) & \text{if }B = \emptyset.	
	\end{cases}
	\]

    For states of the form $w(d,\tau)$, the transition is
    \[
	\gamma \big( w(d,\tau), (a, B, \transfers) \big)=
    \begin{cases}
	\underline{w}(S^*,0) & \text{if } B\ne \emptyset, \\
    w(d, \tau+1) & \text{if $B = \emptyset$,} 
	\end{cases}
	\] 
    where $S^* \in \argmin_{S' \in \hat{C}(B)} u_{S'}(a,T)$.
\end{itemize}

The plan represented by this automaton yields payoff profile $u^*$. The plan is also public since the transition relies only on $B$ and $\{u_S(a,T): S\in \hat{C}(B)\}$, both of which are public information. By construction, $\norm{u^d - V(w(d,\tau ))} < \epsilon$ and $V(\underline{w}(S, \tau)) = (1-\delta^{L(\delta) - \tau})  v(\underline{a}_S ) + \delta^{L(\delta) - \tau} V( w(S,0) )$
 for all $\tau$ in $\{0,\ldots,L(\delta)-1\}$ and $S \in  {\seccoal\cup \sincoal}$. As the arguments from here on are standard, we verify in the Supplementary Appendix that no coalition can profitably block in any state of this automaton. 

Recall that $\tilde{\mathcal{T}} = \{ \tilde{T}^m \}_{m=1}^M$ and $\big \{ (a^{d,\tau}, \transfers^{d,\tau} ): \tau\ge 0, d \in {\seccoal\cup \sincoal}\cup\{*\} \big \} \subseteq \{\overline{a},\underline{a} \}\times \tilde{\mathcal{T}}$, so all default transfers in the plan are selected from a finite set. By \cref{Assumption-Bounded-Transfers}, if  a coalition $C$ blocks a default transfers matrix $T\in \tilde{\mathcal{T}}$, there exists $\tilde{b}>0$ such that 
\begin{equation} \label{Equation-Bounded-Transfers1}
\sum_{i\notin C,j\in C}\hchi_{ij}^C({T}) \le \tilde{b} \text{ for all } T\in \tilde{\mathcal{T}} \text{ and } C\in \coalitions.    
\end{equation}
In addition, since $\alternatives$ is compact and $v(.)$ is continuous, there exists $\hat{b}$ such that 
\begin{equation} \label{Equation-Bounded-Gen1}
    \max_{C\in \coalitions} \max_{a\in \alternatives} \sum_{i\in C} v_i(a) \le \hat{b}.
\end{equation}
We verify the incentives in the automaton states below.

{\flushleft \textbf{States of the form $w(d, \tau )$:}} Suppose coalition $C$ blocks and the outcome $(\hat{a}, \{C\} , \hat{\transfers} )$ is realized. Recall that $\hat{C}(B) = [ B\cap (\seccoal\cup \sincoal ) ] \cup [\cup_{C\in B\backslash(\seccoal\cup\sincoal)}\sincoal(C) ]$, so if $C\in \seccoal\cup\sincoal$, then $\hat{C}(\{C\})=\{ C\}$ is a singleton set containing $C$ as a unitary player; however, if $C\notin \seccoal\cup\sincoal$, then $\hat{C}(\{C\}) = \sincoal(C)$, which is a set of singleton coalitions made up of players in $C$. The plan punishes $S^* \in \argmin_{S' \in \hat{C}(\{C\})} u_{S'}(a,T)$, so $S^*$ is either $C\in\seccoal$ as a unitary player or some singleton coalition $\{i\}$ where $i\in C$. In either case, the (total) stage-game payoff for $S^*$ satisfies
\begin{align*} 
	u_{S^*}(\hat{a}, \hat{\transfers})  & \le \frac{1}{|\hat{C}(\{C\})|}  \sum_{S' \in \hat{C}(\{C\})} u_{S'}(\hat{a}, \hat{\transfers})   \le   \max_{a \in \alternatives} \sum_{j\in C} v_j(a) + \sum_{j \in C} \sum_{k \notin C} \hchi^C_{kj} (T^{d,\tau}) \le  \hat{b} + \tilde{b},
\end{align*}
where the first inequality follows since the minimum among a set of numbers is less than their average; the second inequality follows since $C$'s total payoff comes from the generated payoffs plus the net transfers paid by players outside of $C$; lastly, the third inequality follows from \eqref{Equation-Bounded-Transfers1} and \eqref{Equation-Bounded-Gen1} and the fact that all $\transfers^{d,\tau}$ are drawn from $\{\tilde{\transfers}^m\}_{m=1}^M$.

Thus, we can find a uniform bound $\overline{b}_1 \equiv \hat{b} + \tilde{b}$ such that the total stage-game payoff of $S^*$ satisfies $u_{S^*}(\hat{a}, \hat{\transfers}) \le \overline{b}$ for all $C$, $\delta$ and $(d,\tau)$. Following the same steps as those in the analogous part of \Cref{Theorem-NTU}, we can show that $S^*$ obtains lower total payoff after coalition $C$ blocks. Since $S^*\subseteq C$, there exists player $i\in C$ who is not better off, so this is not a profitable block for $C$.

{\flushleft \textbf{States of the form $\underline{w}(S, \tau )$ where $S\in \sincoal \cup \seccoal$:}} Suppose coalition $C$ blocks and the outcome is $(\hat{a}, \{C\}, \widehat{\transfers} )$. Just like above, depending on whether $C\in \seccoal\cup\sincoal$, $\hat{C}(\{C\})$ is either $\{C\}$ containing $C$ as a unitary player or the set $\sincoal(C)$ of singleton coalitions  made up of $C$'s members. There are 2 cases to consider.\smallskip

\noindent  \underline{Case I: $S \in \hat{C}(\{C\})$, and $u_S(\hat{a}, \hat{\transfers} )  \le \cf_S$}. In this case, the plan punishes the current scapegoat $S$, where $S$ is either $C$ or a singleton set containing a member of $C$. Using \eqref{Equation-Kapppa-Some} for sufficiently high $\delta$ and following steps identical to the analogous argument in \Cref{Theorem-NTU}, we can show that
\begin{align*}
(1-\delta^{L(\delta)- \tau}) v_{S} (\ima_S)  + \delta^{L(\delta) - \tau} u_{S}^{S}\geq 	(1-\delta) \cf_S + \delta(1-\delta^{L(\delta)})v_{S} (\ima_S)+\delta^{L(\delta)+1} u_{S}^{S},
\end{align*}
where $v_S(.) = \sum_{i\in S}v_i(.)$ and $u^S_S  = \sum_{i\in S}u^S_i$ denote $S$'s total payoff. If $S$ contains only a member of $C$, then the inequality above shows that the blocking is not profitable for $C$; if $S$ is $C$ itself, it implies that blocking does not improve $C$'s total value, so there exist $i\in C$ who is not better off, so again the blocking is not profitable for $C$.
\smallskip

\noindent \underline{Case II: either $S \notin \hat{C}(\{C\})$ or $u_S(\hat{a}, \hat{\transfers} )  > \cf_S$}. 
In this case the plan punishes $S^* \in \argmin_{S'\in \hat{C}(\{C\}) \backslash\{S\} } u_{S'}(\hat{a},\hat{T})$ as scapegoat. 

First observe that if $C$ blocks in state $\underline{w}(S,\tau)$ and the stage-game payoff satisfies $u_S(\hat{a}, \hat{\transfers} )  > \cf_S$, then no matter if $S\in \sincoal$ or $S\in \seccoal$, it must be that $C\ne S$ and therefore $\hat{C}(\{C\}) \ne \{S\}$; otherwise the definition of $\cf_S$ would ensure $u_S(\hat{a}, \hat{\transfers} ) \le \cf_S$. As a result, under either of the conditions defining the current case (i.e. $S \notin \hat{C}(\{C\})$ or $u_S(\hat{a}, \hat{\transfers} )  > \cf_S$), $\hat{C}(\{C\}) \ne \{S\}$ must be true, so the scapegoat $S^* \in \argmin_{S'\in \hat{C}(\{C\}) \backslash\{S\} } u_{S'}(\hat{a},\hat{T})$ is well defined (i.e. the $\argmin$ is not taken over an empty set). 

We show that the (total) stage-game payoff of $S^*$  is bounded. If $S \notin \hat{C}(\{C\})$, then 
\[
S^* \in \argmin_{S'\in  \hat{C}(\{C\}) \backslash\{S\}  }   u_{S'}(\hat{a},\hat{T}) = \argmin_{ S' \in \hat{C}(\{C\}) }   u_{S'}(\hat{a},\hat{T}), \text{ and }
\] 
\begin{align} \label{Equation-j*-Bounded1-Some}
	u_{S^*}(\hat{a},\hat{\transfers} )  & \le \frac{1}{| \hat{C}(\{C\}) |} \sum_{S' \in \hat{C}(\{C\})} u_{S'}(\hat{a},\hat{\transfers} )  \le \frac{1}{|\hat{C}(\{C\})|} \max_{a\in \alternatives} \sum_{j \in C} v_j(a),
\end{align}
where the last inequality above follows from \cref{Assumption-Bounded-Transfers} and the fact that the default transfers are $0$ in states $\underline{w}(S,\tau)$. Alternatively, if $S\in \hat{C}(\{C\})$ and $u_S( \hat{a},\hat{\transfers}) > \cf_S$, then
\begin{align*}
	u_{S^*}(\hat{a},\hat{\transfers} )  & \le \frac{1}{|\hat{C}(\{C\})| - 1} \sum_{S' \in \hat{C}(\{C\})\backslash\{S\} } u_{S'}(\hat{a},\hat{\transfers} ) \nonumber \\
    & =  \frac{1}{|\hat{C}(\{C\})|-1} \Big[ \sum_{S' \in \hat{C}(\{C\})} u_{S'}( \hat{a},\hat{\transfers} )  -  u_S(\hat{a},\hat{\transfers}) \Big] \nonumber \\
    & \le \frac{1}{|\hat{C}(\{C\})| - 1} \Big[ \sum_{S' \in \hat{C}(\{C\})} u_{S'}(\hat{a},\hat{\transfers})  - \cf_S \Big],
\end{align*}
where the last inequality follows because we are considering the case $u_{S}( \hat{a},\hat{\transfers}) > \cf_S$. Since the plan specifies zero default transfers, \cref{Assumption-Bounded-Transfers} ensures
\begin{align} \label{Equation-j*-Bounded2-Some}
	u_{S^*}(\hat{a},\hat{\transfers} ) \le \frac{1}{|\hat{C}(\{C\})|-1} \Big[ \max_{a\in \alternatives} \sum_{i \in C} v_{i}(a)  - \cf_S \Big].
\end{align}
Comparing the RHS of \eqref{Equation-j*-Bounded1-Some} and \eqref{Equation-j*-Bounded2-Some} to the bounds obtained in \eqref{Equation-Bounded-Transfers1} and \eqref{Equation-Bounded-Gen1},  it follows that we can find $\overline{b}_2$ such that $u_{S^*}(\hat{a},\hat{\transfers} ) <\overline{b}_2$.
 
Finally, to show that the blocking by $C$ is not profitable, note that the (total) payoff of $S^*$ is not improved by blocking if
	\begin{align*}
		(1-\delta^{L(\delta)- \tau}) v_{S^{*}} (\ima_S)  + \delta^{L(\delta) - \tau}u_{S^{*}}^{S}\geq 	(1-\delta) \overline{b}_2 + \delta(1-\delta^{L(\delta)}) {v}_{S^{*}}(\ima_{S^*}) + \delta^{L(\delta)+1} u_{S^{*}}^{S^{*}}.
		\end{align*}
This inequality follows for sufficiently high $\delta$ from the same steps as that of the analogous part of \Cref{Theorem-NTU}. Based on the same arguments as in previous cases, the blocking is not profitable for $C$.\qed

\section{Proofs for Labor Market Matching}\label{Section-KCProofs}

\subsection{Proof of \cref{Proposition-KC-Public} on p. \pageref{Proposition-KC-Public}}\label{Section-SupplementaryProp1}
To establish this conclusion, we apply \cref{Theorem-PTU}, or more specifically, the proof of \cref{Theorem-Secret} in which $\seccoal=\emptyset$. Note that for each player $i\in \firms\cup \workers$, the individual minmax is $\im_i=0$. In this setting, to minmax a player, it suffices to use any assignment in which that player is unmatched. For specificity, we consider the assignment in which all players are unmatched, and denote it by $\phi^\circ$. All steps of the argument of \Cref{Theorem-Secret} go through without adaptation but one, namely that which invokes the reflexivity of the effectivity correspondence. Observe that, as discussed in \cref{footnote: reflexivity}, the weaker local reflexivity holds for the minmaxing alternative: for every player $i$, $E_{\{i\}}(\phi^\circ)=\{\phi^\circ\}$. Therefore, this step also goes through.  \qed

\subsection{Proof of \cref{Proposition-KC-Secret} on p. \pageref{Proposition-KC-Secret}} \label{section: proof of prop 2}\label{Section-SupplementaryProp2}

As we argued in the proof of \Cref{Proposition-KC-Public}, the minmax for an individual player $i$ is $\im_i=0$, attained using the assignment $\phi^\circ$. Like the proof of \cref{Theorem-Secret}, we use $\cf_{\{i\}}\equiv \im_i = 0$ to denote the minmax for singleton coalitions consisting of individual firms and workers, so $\cf_C=0$ for all $C\in \sincoal$. For an essential coalition $C =  \{f\}\cup W \in \esscoal$, $\cf_C =  \sum_{w\in W} v_w(\{f\}) + v_f(W)$, which is the total value generated when $C$ matches.

\paragraph{Preliminary Results.}
We establish several results that allow us to characterize $\mathcal{K}$.

\begin{lemma} \label{Lemma-K-Equivalence-to-KC}
A matching $(\phi,T)$ is a core allocation, as defined in \cref{Definition-KC-Core}, if and only if it admits no profitable blocking by any singleton firm, singleton worker, or essential coalition.
\end{lemma}
\begin{proof}
The ``only if'' direction is immediate. We prove the ``if'' direction by contradiction.
Suppose that the matching $(\phi,T)$ admits no profitable blocking by any coalition $C\in \esscoal\cup \sincoal$ but can be profitably blocked by some larger coalition $C\notin \esscoal\cup \sincoal$. It follows then that there exists an assignment $\phi' \in E_C(\phi)$ and transfers $T'_C = [T'_{ij}]_{i\in C, j\in \players}$ such that every member of $C$ is better off from the matching $(\phi',T'_C,\hchi^C(T))$, i.e. $u_i(\phi',T'_C,\hchi^C(T)) > u_i(\phi,T)$ for all $i\in C$. By the definition of the effectivity function $E_C$, $\phi'$ induces a partition $\pi'_C$ of $C$ such that $\pi'_C \subseteq \esscoal\cup \sincoal$. Let $u_{C'}(m) \equiv \sum_{i\in C'}u_i(m)$ denote the total utility a coalition $C'$ obtains from a matching $m$, we have
\[
\sum_{C'\in \pi'_C} u_{C'}(\phi',T'_C,\hchi^C(T)) > \sum_{C'\in \pi'_C} u_{C'}(\phi,T),
\]
so there exists some $\hat{C}\in \esscoal\cup \sincoal$ such that $ u_{\hat{C}}(\phi',T'_C,\hchi^C(T)) > u_{\hat{C}}(\phi,T)$. Since in the matching $(\phi',T'_C,\hchi^C(T))$ transfers are only made between matched firm and workers, $ u_{\hat{C}}(\phi',T'_C,\hchi^C(T)) = \sum_{i\in \hat{C}}v_i(\phi'(i)) $, which is the total value generated by coalition $\hat{C}$. This value can be secured by $\hat{C}$ when it blocks alone, so $\hat{C}\in \esscoal\cup \sincoal$ can profitably block the matching $(\phi,T)$, which is a contradiction.
\end{proof}

\begin{lemma} \label{Lemma-K-Efficiency}
Let $\hat{x}= \max_{\phi\in A} \sum_{i\in \players} v_i(\phi)$. If $u\in \Re^n$ satisfies $\sum_{i\in \players}u_i \le \hat{x}$ and $\sum_{i\in C} u_i  \ge \cf_C $ for all $C\in \esscoal\cup\sincoal$, then $\sum_{i\in \players}u_i = \hat{x}$.
\end{lemma}
\begin{proof}

Take any $\tilde{u}\in \Re^n$ satisfying $\sum_{i\in \players} \tilde{u}_i \le \hat{x}$ and $\sum_{i\in C} \tilde{u}_i  \ge \cf_C$ for all $C\in \esscoal\cup \sincoal$. Towards a contradiction, suppose that $\tilde{u}$ is not utilitarian efficient; that is, suppose $\sum_{i\in \players} \tilde{u}_i < \hat{x}$. Then there exists an assignment $\phi' \in A$ such that $\sum_{i \in \players} v_i(\phi') > \sum_{i \in \players} \tilde{u}_i$. Let $\pi'$ denote the partition of players induced by the matching $\phi'$. Note that $\pi'$ consists of either essential coalitions or singletons so $\pi'\subseteq \esscoal\cup \sincoal$. It follows that there exists $C' \in \pi' \subseteq \esscoal\cup \sincoal$ such that $\cf_{C'} = \sum_{i \in C'} v_i(\phi') > \sum_{i \in C'} \tilde{u}_i$, which is a contradiction to the assumption that $\sum_{i\in C} \tilde{u}_i \ge \cf_C$ for all $C\in \esscoal\cup \sincoal$. So $\tilde{u}$ must be utilitarian efficient.
\end{proof}

\begin{lemma}\label{Lemma-K-Characterization}
Let $\hat{x}= \max_{\phi\in A} \sum_{i\in \players} v_i(\phi)$. The set $\mathcal{K}$ is characterized by
\begin{equation} \label{Equation-K-Characterization}
    \mathcal{K} = \{u\in \Re^n: \sum_{i\in \players}u_i = \hat{x}, \, \sum_{i\in C} u_i  \ge \cf_C \text{ for all } C\in \esscoal\cup \sincoal \}.
\end{equation}
\end{lemma}
\begin{proof}
Take any $\tilde{u} \in \mathcal{K}$. Suppose, for the sake of contradiction, that there exists some $C \in \esscoal\cup\sincoal$ such that $\sum_{i \in C} \tilde{u}_i < \cf_C$, then $\tilde{u}$ would be blocked by $C$, which contradicts the assumption that $\tilde{u} \in \mathcal{K}$. So $\sum_{i \in C} \tilde{u}_i \geq \cf_C$ must hold for all $C \in \esscoal\cup\sincoal$. \cref{Lemma-K-Efficiency} then implies that $\tilde{u}$ is utilitarian efficient, so $\tilde{u}$ satisfies the conditions in \eqref{Equation-K-Characterization}.

For the converse, take any $\tilde{u}$ that satisfies the conditions in \eqref{Equation-K-Characterization}. We will show that $\tilde{u} \in \mathcal{K}$, i.e., there exists a core allocation $(\phi, T)$ such that $\tilde{u} = u(\phi, T)$. 

By \cref{Lemma-K-Equivalence-to-KC}, the set of core allocations is the set of matchings that cannot be profitable blocked by singletons can essential coalitions, which is nonempty under our assumptions on firm preferences \citep{kelsocrawford1982}. Since $\mathcal{K}$ is nonempty, there exists a core allocation $(\tilde{\phi}, \tilde{T})$, which by the arguments above must satisfy $\sum_{i \in \players} v_i(\tilde{\phi}) = \hat{x}$. Since $\sum_{i\in \players}\tilde{u}_i = \hat{x}$, there exists $\tilde{T}' \in \mathcal{T}$ such that $\tilde{u} = u(\tilde{\phi}, \tilde{T}')$. Note however that $\tilde{T}'$ may involve nonzero transfers between players who are not in an employment relationship, so $(\tilde{\phi}, \tilde{T}')$ may not constitute a matching. Nevertheless, let $\tilde{\pi}$ denote the partition of players induced by $\tilde{\phi}$. For every $C \in \tilde{\pi}$, it must hold that
\[
\sum_{i \in C, j \notin C} \tilde{T}'_{ij} - \sum_{i \in C, j \notin C} \tilde{T}'_{ji} = 0,
\]
for otherwise we would have $\sum_{i \in C'} \tilde{u}_i < \sum_{i \in C'} v_i(\tilde{\phi})$ for some $C' \in \tilde{\pi}$, contradicting the fact that $\tilde{u}$ satisfies \eqref{Equation-K-Characterization}. Therefore, we can construct $\tilde{T}'' \in \mathcal{T}$ such that
\[
\tilde{u} = u(\tilde{\phi}, \tilde{T}''), \quad \text{and} \quad \tilde{T}''_{ij} \neq 0 \text{ only if } i = \tilde{\phi}(j) \text{ or } i \in \tilde{\phi}(j),
\]
so $(\tilde{\phi}, \tilde{T}'')$ is a matching that induces payoff profile $\tilde{u}$. Since $\sum_{i\in C} \tilde{u}_i  \ge \cf_C \text{ for all } C\in \esscoal\cup\sincoal$,  $(\tilde{\phi}, \tilde{T}'')$ cannot be blocked by any coalition $C\in \esscoal\cup\sincoal$, so by \cref{Lemma-K-Equivalence-to-KC} $(\tilde{\phi}, \tilde{T}'')$ is a core allocation, and therefore $\tilde{u} \in \mathcal{K}$.
\end{proof}

\begin{lemma} \label{Lemma-FCR-K-Equivalence}
Let
\[
\feasiblet^{\mathcal{M}}\equiv  \conv \Big( \Big \{ u\in \Re^n:\exists (\phi,T) \in\mathcal{M}\text{ such that }u=u(\phi,T) \Big\}\Big)
\]
denote the convex hull of all feasible matching payoffs. Then
\[
\Big\{u\in \feasiblet^{\mathcal{M}}: \sum_{i\in C}u_i \ge \cf_C \text{ for all } C\in \esscoal\cup\sincoal \Big\} = \mathcal{K}.
\]
\end{lemma}
\begin{proof}
The fact that $\mathcal{K} \subseteq \big\{u\in \feasiblet^{\mathcal{M}}: \sum_{i\in C}u_i \ge \cf_C \text{ for all } C\in \esscoal\cup\sincoal \big\} $ follows from the definition of $\mathcal{K}$. To show $\big\{u\in \feasiblet^{\mathcal{M}}: \sum_{i\in C}u_i \ge \cf_C \text{ for all } C\in \esscoal\cup\sincoal \big\} \subseteq \mathcal{K}$, take any $\tilde{u}\in \feasiblet^{\mathcal{M}}$, since $\tilde{u}$ is a convex combination of feasible payoff vectors, it must be that 
\[
\sum_{i\in \players}\tilde{u}_i \le \hat{x}\equiv \max_{\phi\in A} \sum_{i\in \players} v_i(\phi).
\]
\cref{Lemma-K-Efficiency} then implies that $\sum_{i\in \players}\tilde{u}_i = \hat{x}$, so $\tilde{u}\in \mathcal{K}$ by \cref{Lemma-K-Characterization}.
\end{proof}

\paragraph{Proof of \cref{Proposition-KC-Secret}.} A model in which all essential coalitions can offer private wage terms corresponds to restricting attention to public PCEs in the model with secret transfers (\cref{Section-SecretTransfers}) that do not condition on any past wage terms. 

We first prove that every payoff vector in $\mathcal{K}$ can be supported by a public PCE that satisfies this measurability condition. For any $\tilde{u}\in \mathcal{K}$, there exists a core allocation $(\phi,T)$ such that $\tilde{u}= u(\phi,T)$. Consider the plan $\tilde{\sigma}$ in which $\tilde{\sigma}(h) =  (\phi,T)$ for every history $h\in \historiesTU$. This plan is public and does not condition on past wages, and produces discounted payoff profile $\tilde{u}$. Given that $(\phi,T)$ is a core allocation, $\tilde{\sigma}$ is also a PCE.

We now prove that for every $\delta\ge 0$, every public PCE in the model with private wages implements a discounted payoff profile in $\mathcal{K}$. Given that every such public PCE is also a public PCE with secret transfers, it suffices to show that this conclusion holds for public PCE with secret transfers. We apply \cref{Theorem-Secret} to establish this conclusion. All steps of the proof of \Cref{Theorem-Secret} carry over to the matching environment without adaptation except the step that invokes reflexivity of the effectivity correspondences (see \cref{footnote: reflexivity}). However, the assignment $\phi^{\circ}$ in which all players are unmatched satisfies $\max_{\phi'\in E_C(\phi^\circ)} \sum_{i\in C}v_i(\phi') = \min_{\phi\in A}\max_{\phi'\in E_C(\phi)} \sum_{i\in C}v_i(\phi') = \cf_C$ for every coalition $C$. Hence, the PCE can specify $\phi^{\circ}$ to minmax any coalition $C \in \seccoal \cup \sincoal$. Moreover, $\phi^{\circ} \in E_C(\phi^{\circ})$ for every coalition $C$. Thus, the argument for this step also goes through. Since $\esscoal\subseteq \seccoal$, \Cref{Theorem-Secret} implies that for every $\delta\ge 0$, every discounted payoff profile $\tilde{u}$ produced by a public PCE must satisfy $\sum_{i\in C}\tilde{u}_i \ge \cf_C \text{ for all } C\in \esscoal\cup\sincoal$, so $\tilde{u}\in \big\{u\in \feasiblet^{\mathcal{M}}: \sum_{i\in C}u_i \ge \cf_C \text{ for all } C\in \esscoal\cup\sincoal \big\}$. By \cref{Lemma-FCR-K-Equivalence}, $\tilde{u}$ must be an element of $\mathcal{K}$. \qed

\subsection{Proof of \cref{Proposition-Wages} on p. \pageref{Proposition-Wages}}\label{Section-SupplementaryProp3}

\paragraph{Preliminary Results.}

\begin{lemma} \label{Lemma-Core-Wage}
    All static stable matchings fill slots $\{(f,l):\rho(f,l)\ge  \max\{0, \eta(L)\}\}$ while leaving other slots vacant; more over, all workers receive the same payoff $r$ where $\max\{0,\eta(L+1)\} \le r \le  \max\{0,\eta(L)\}$.
\end{lemma}

\begin{proof}
We break down the proof into two parts.

{ \noindent \underline{Part 1}: All static stable matchings fill slots $\{(f,l):\rho(f,l)\ge  \max\{0, \eta(L)\}\}$ while leaving other slots vacant.}

Let $m$ be any static stable matching. We first show that $m$ must be utilitarian efficient. Suppose, for the sake of contradiction, that $m$ is not utilitarian efficient. Then there exists a reassignment of players that increases players' total payoff, which implies the existence of $f\in \firms$ and $W\subseteq \workers$ such that $v_f(W) +\sum_{w\in W} v_w(f) > u_f(m)+ \sum_{w\in W} u_w(m)$. But this implies that $m$ is profitably blocked by $(f,W)$, contradicting the stability of $m$.

Next, we show that since $m$ is utilitarian efficient, it fills all slots in $\{(f,l):\rho(f,l)\ge  \max\{0, \eta(L)\}\}$. Suppose, for the sake of contradiction, that there exists a slot $(\tilde{f}, \tilde{l}) \in \{(f,l):\rho(f,l)\ge  \max\{0, \eta(L)\}\}$ that is not filled. Let $\tilde{l}^* = \min \{ l : (\tilde{f}, l) \text{ is unfilled} \}$ denote the first unfilled position at firm $\tilde{f}$. Since firms have diminishing marginal products, we have $\rho(\tilde{f}, \tilde{l}^*) \geq \rho(\tilde{f}, \tilde{l})$, so $(\tilde{f}, \tilde{l}^*)$ is an open slot in $\{(f,l):\rho(f,l)\ge  \max\{0, \eta(L)\}\}$ that is immediately accessible by workers. Since there are $L$ workers in total, if not all slots in $\{(f,l):\rho(f,l)\ge  \max\{0, \eta(L)\}\}$ are filled, there exists ${w}' \in W$ who is either unemployed or filling a slot outside of $\{(f,l):\rho(f,l)\ge  \max\{0, \eta(L)\}\}$. In the first scenario, matching ${w}'$ to the unfilled slot $(\tilde{f}, \tilde{l}^*)$ would strictly increase the total surplus. In the second scenario, let $(\hat{f},\hat{l})$ be the slot filled by $w'$, and let $\hat{l}^* = \max \{ l : (\hat{f}, l) \text{ is filled} \}$ denote the last occupied slot at firm $\hat{f}$, and $\hat{w}^*$ denote the worker filling $(\hat{f},\hat{l}^*)$. It follows from decreasing marginal product that $(\hat{f},\hat{l}^*)$ is also outside of $\{(f,l):\rho(f,l)\ge  \max\{0, \eta(L)\}\}$, so matching $\hat{w}^*$ to the unfilled slot $(\tilde{f}, \tilde{l}^*)$ instead would strictly increase the total surplus, again contradicting the utilitarian efficiency of $m$. Thus, all stable matchings must fill the slots in $\{(f,l):\rho(f,l)\ge  \max\{0, \eta(L)\}\}$.

To show that all slots outside of $\{(f,l):\rho(f,l)\ge  \max\{0, \eta(L)\}\}$ are vacant, there are two cases to consider. If $\eta(L)>0$, we know from the arguments above that the $L$ slots in $\{(f,l):\rho(f,l)\ge   \eta(L)\}$ are filled, so all other slots must be vacant. If $\eta(L)<0$, then the set $\{(f,l):\rho(f,l)\ge  \max\{0, \eta(L)\}$ becomes $\{(f,l):\rho(f,l)\ge 0 \}$, and let us suppose, for the sake of a contradiction, that some slot $(\tilde{f}, \tilde{l})$ with $\rho(\tilde{f}, \tilde{l})< 0$ is filled. Let $\tilde{l}^* = \max \{ l : (\tilde{f}, l) \text{ is filled} \}$ denote the last filled slot at firm $\tilde{f}$, and let $\tilde{w}$ denote the worker matched to this position. Due to decreasing marginal returns, we have $\rho(\tilde{f}, \tilde{l}^*) <0$ as well, so simply unmatching $\tilde{w}$ from $(\tilde{f}, \tilde{l}^*)$ will increase the total surplus. This contradicts the efficiency of $m$, which implies that $m$ would not not stable. It follows that all stable matchings must leave slots outside $\{(f,l):\rho(f,l)\ge  \max\{0, \eta(L)\}\}$ vacant.
\medskip

{ \noindent \underline{Part 2}: All workers receive the same payoff $r$, where $\max\{0,\eta(L+1)\} \le r \le  \max\{0,\eta(L)\}$.}

First we establish that all workers receive the same payoff. Take any static stable matching $m$. From Part 1, all positions in $\{(f,l):\rho(f,l)\ge  \max\{0, \eta(L)\}\}$ are filled. We prove that all workers have the same payoff under two separate cases. 

First, suppose $\eta(L)<0$. It follows that $|\{(f,l):\rho(f,l)\ge  \max\{0, \eta(L)\}\}|<L$, so in the stable matching $m$ there exists a worker $\tilde{w}$ who is unmatched. This means $\tilde{w}$ receives $0$ payoff in the stable matching $m$. It then follows that any other employed worker must also receive $0$ payoff, since otherwise there is a profitable block where their employer replaces them with $\tilde{w}$.

Second, suppose $\eta(L)>0$. Since by assumption $\rho(f,L)<\max\{0,\eta(L)\}$, there exists ${f}_1$ and ${f}_2$ such that both $f_1$ and $f_2$ employ workers in $m$. Since workers are identical, each worker working for $f_1$ must receive the same payoff as any worker at $f_2$ in $m$. This implies that workers at $f_1$ and $f_2$ all have the same payoff. The same argument applies to workers employed by any other firm, so all workers receive the same payoff.

Let $r$ denote the payoff that workers receive, we next show that $\max\{0,\eta(L+1)\} \le r \le  \max\{0,\eta(L)\}$. It is obvious that $r\ge 0$ by workers' individual rationality. To complete the arguments, it suffices to demonstrate the validity of three statements: A. $r \ge \eta(L+1)$ if $\eta(L+1)>0$; B. $r \le \eta(L)$ if $\eta(L)>0$; and C. $r=0$ if $\eta(L)\le 0$.

Statement A: if $\eta(L+1)>0$, then decreasing marginal return implies $\eta(L)>0$, so from Part 1 we know all $L$ workers are assigned to $\{(f,l):\rho(f,l)\ge  \max\{0, \eta(L)\}\}= \{(f,l):\rho(f,l)\ge  \eta(L)\}$. Let $(\tilde{f},\tilde{l})$ denote the slot with value $\rho(\tilde{f},\tilde{l}) = \eta(L+1)$. By decreasing marginal return, any slot in $\{(\tilde{f},{l}): l< \tilde{l} \}$ at $\tilde{f}$ is in $\{(f,l):\rho(f,l)\ge  \eta(L)\}$ and already filled. It follows that $r\ge \eta(L+1)$, since otherwise $\tilde{f}$ can profitably block $m$ by poaching a worker from other firms, which generates additional surplus $\eta(L+1)$, while offering her a payoff $r'$ satisfying $\eta(L+1) > r' > r$.

Statement B: if $\eta(L)>0$, again from Part 1 we know that all $L$ workers are assigned to $\{(f,l):\rho(f,l)\ge  \eta(L)\}$. Let $(\tilde{f},\tilde{l})$ be the slot such that $\rho(\tilde{f},\tilde{l})= \eta(L)$. By decreasing marginal return we know that $(\tilde{f},\tilde{l})$ must be the last filled slot at firm $\tilde{f}$. It follows that workers' payoff is no more than $\eta(L)$ since otherwise $\tilde{f}$ can profitably block by firing the worker matched to the slot $(\tilde{f},\tilde{l})$.

Statement C: if $\eta(L) < 0$, then there are at most $(L-1)$ slots with a positive surplus, which by Part 1 implies that in any stable matching there exists a worker $\tilde{w}$ who is unmatched. In this case, workers' payoff must be $0$ since otherwise the matching is profitably blocked by a firm replacing one of its employees with worker $\tilde{w}$.

Combining statements A, B, and C lets us conclude that $\max\{0,\eta(L+1)\} \le r \le  \max\{0,\eta(L)\}$.
\end{proof}

\paragraph{Proof of \cref{Proposition-Wages}.} The first half of \cref{Proposition-Wages} follows from \cref{Proposition-KC-Public}, while the second half of \cref{Proposition-Wages} follows from combining \cref{Proposition-KC-Secret} and \cref{{Lemma-Core-Wage}}.
\qed

\subsection{Proof of \cref{Proposition-Comparative-Wages} on p. \pageref{Proposition-Comparative-Wages}}\label{Section-SupplementaryProp4}
Since by assumption both markets $M_1$ and $M_2$ satisfy $\eta_i(L+1)>0$, \cref{Lemma-Core-Wage} implies that in each market $M_i$, where $i=1$ or $2$, all static stable matchings fill the slots in $\{(f,l):\rho_i(f,l)\ge  \max\{0, \eta_i(L)\}\}=\{(f,l):\rho_i(f,l)\ge \eta_i(L)\}$. Moreover, the workers' payoff $r$ in market $M_i$ satisfies $ \eta_i(L+1) \le r\le \eta_i(L)$. 
Recall that the total surplus is $\Pi_i\equiv \sum_{\ell=1}^L \eta_i(\ell)$, while the set of potential workers' total surplus is $\Pi^{\mathcal W}_{i} = [L\eta_i(L+1),L\eta_i(L)]$, and the set of potential firms' total surplus is
\[
   \Pi_i^{\mathcal F} = \Pi_{i}  - \Pi^{\mathcal W}_{i} =\Big[\sum_{\ell=1}^L \eta_i(\ell)-L\eta_i(L), \, \sum_{\ell=1}^L \eta_i(\ell) - L\eta_i(L+1) \Big].
\]
To simplify notation let us denote $\underline{b}_i^{\workers} \equiv L \eta_i(L+1)$ and $\overline{b}_i^{\workers} \equiv L\eta_i(L)$, so $\Pi^{\mathcal W}_{i} = [\underline{b}^{\workers}_i, \overline{b}^{\workers}_i]$. Similarly, let $\underline{b}^{\firms}_i \equiv \sum_{\ell=1}^L \eta_i(\ell) - L\eta_i(L)$ and $\overline{b}^{\firms}_i = \sum_{\ell=1}^L \eta_i(\ell) - L\eta_i(L+1)$,
so $\Pi_i^{\mathcal F} = [\underline{b}^{\firms}_i, \overline{b}^{\firms}_i]$.

Let $s\equiv \eta_1(1) = \eta_2(1)$. For each $2 \le \ell \le L$, define $\Delta^i_{\ell} \equiv \eta_i(\ell-1)-\eta_i(\ell)$, so $\eta_i(\ell) = s- \sum_{k=2}^{\ell} \Delta^i_k$ for all $\ell \ge 2$.
It follows that 
\begin{align*}
     \sum_{\ell=1}^L \eta_i(\ell) =& s L - \sum_{\ell=2}^L (L+1-\ell)\Delta^i_{\ell}, \\
     L\eta_i(L) = sL - L\sum_{\ell=2}^L\Delta^i_{\ell}, \;\; &\text{ and }
     L\eta_i(L+1) = sL - L\sum_{\ell=2}^{L+1}\Delta^i_{\ell}.
\end{align*}
This allows us to express the bounds for firms' and workers' aggregate surplus in terms of $s$ and $\Delta^i_{\ell}$'s, yielding
\begin{alignat}{3} 
\underline{b}^{\workers}_i &=  sL - L\sum_{\ell=2}^{L+1}\Delta^i_{\ell}, &\text{ and }   \overline{b}^{\workers}_i &= sL - L\sum_{\ell=2}^L\Delta^i_{\ell},    \label{Equation-Wsurplus-Bounds}\\
    \underline{b}^{\firms}_i &= \sum_{\ell=2}^L (\ell-1)\Delta^i_{\ell}, &\text{ and }\overline{b}^{\firms}_i &=  \sum_{\ell=2}^{L+1}(\ell-1)\Delta^i_{\ell}.    \label{Equation-Fsurplus-Bounds}
\end{alignat}

Market $M_2$ exhibiting more steeply decreasing returns than $M_1$ is equivalent to $\Delta^2_{\ell} \ge \Delta^1_{\ell}$ for all $2\le \ell \le L+1$, which implies $\eta_2(\ell) \le \eta_1(\ell)$ for all $1 \le \ell \le L+1$,  so $\Pi_2 \le \Pi_1$.

In \eqref{Equation-Wsurplus-Bounds}, all the $\Delta^i_{\ell}$'s enter the bounds for worker surplus with negative coefficients, so $\underline{b}^{\workers}_2 \le \underline{b}^{\workers}_1$ and  $\overline{b}^{\workers}_2 \le \overline{b}^{\workers}_1$, where the inequalities are strict if $M_2$ has strictly more steeply decreasing returns than $M_1$. By contrast, in \eqref{Equation-Fsurplus-Bounds} the $\Delta^i_{\ell}$ terms enter the bounds with positive coefficients, so $\underline{b}^{\firms}_2 \ge   \underline{b}^{\firms}_1$ and $\overline{b}^{\firms}_2 \ge \overline{b}^{\firms}_1$, where, again, the inequalities are strict if $M_2$ has strictly more steeply decreasing returns than $M_1$. Together, the directions of change for these bounds imply $\Pi^{\mathcal{W}}_2 \preceq_{SSO} \Pi^{\mathcal{W}}_1$ and $\Pi^{\mathcal{F}}_2\succeq_{SSO} \Pi^{\mathcal{F}}_1$, with strict set orders if $M_2$ has strictly more steeply decreasing returns than $M_1$.\qed

\section{Perfect Coalition-Proof Equilibrium}\label{Section-PCPE}
In this subsection, we develop a solution concept that combines ingredients of Perfect Coalitional Equilibrium and coalition-proofness in the sense of \cite{bernheim1987coalition}. We focus on strategic form games (\Cref{Example-NormalFormGame}) and show that \cref{Theorem-NTU} continues to hold without change.\footnote{We consider strategic form games for two reasons. First, \cite{bernheim1987coalition} formulate their solution concept for this class of games. Second, our definition below exploits the product structure of the alternative set $A$ in a strategic form game. A similar construction would also apply to characteristic function and matching games (\cref{Example-CoalitionalGame}), wherein a player's payoff depends only on the identify of her coalition partners.} 

Consider any strategic form game $G$ with player set $\players$, action set $A \equiv \times_{i=1}^n A_i$, where each player $i$ has payoff function $v_i : A \rightarrow \Re$. For an action profile $a\in A$ and proper coalition $C\in \coalitions\backslash\{\players\}$, let $G|a_{-C}$ denote the induced game with player set $C$, action set $A_C = \times_{i\in C}A_i$, and payoff functions $\{v^C_i\}_{i\in C}$ where each $v^C_i:A_C\rightarrow \Re$ is defined by $v^C_i(a'_C) = v_i(a'_C,a_{-C})$ for all $a'_C\in A_C$. We first restate \citeauthor{bernheim1987coalition}'s definition of coalition-proof Nash equilibrium.

\begin{definition}\label{Definition-CPE}
\begin{enumerate}
    \item  In a single player game $G$, action $a_i^*$  is a \textbf{coalition-proof Nash Equilibrium} (CPNE) if and only if $a_i^*$ maximizes $v_i(a_i)$.
    \item Suppose the number of players is $n>1$ and assume that coalition-proof Nash equilibrium has been defined for games with fewer than $n$ players. Then,
\begin{enumerate}
    \item For any game $G$ with $n$ players, $a^* \in A$ is \textbf{self-enforcing} if, for all proper sub-coalitions $C \in \coalitions\backslash \{\players \}, a_C^*$ is a coalition-proof Nash equilibrium in the game $G| a^*_{-C}$.
    \item For any game $G$ with $n$ players, $a^* \in A$ is a coalition-proof Nash equilibrium if it is self-enforcing and if there does not exist another self-enforcing action profile $a \in A$ such that $v_i(a)>v_i(a^*)$ for all $i=1, \ldots, n$.
\end{enumerate} 
\end{enumerate}    
\end{definition}
We use this definition to formulate a coalition-proof analog of PCE. In the repeated game, fix a plan $\sigma$ and history $h$. Let $G(\sigma,h)$ denote the continuation game $G(\sigma,h)$, defined as the strategic form game in each player $i$ chooses an action from $A_i$ and has the payoff function $V_i(a) \equiv (1-\delta) v_i(a) + \delta U_i(h, a |\sigma)$.\footnote{Because we operate in a strategic form game, we simplify our notation by recording only the past action profiles and not the identity of blocking coalitions.} 

\begin{definition} \label{Definition-PCPE}
    A plan $\sigma$ is a \textbf{perfect coalition-proof equilibrium} (PCPE) if, at every history $h$, the default action profile prescribed at history $h$, $\sigma(h)$, is a CPNE in the continuation game $G(\sigma,h)$.
\end{definition}

Translating this definition to profitable blocking, a block by coalition $C$ invalidates a plan $\sigma$ from being a PCPE if it is profitable (as in \cref{Definition-PCE1}) and is itself immune to any further ``credible'' blocks by a sub-coalition of $C$. Hence, every PCE is a PCPE but the converse need not hold. The following result shows, however, that the set of PCPE payoffs coincides with the set of PCE payoffs as $\delta\rightarrow 1$. 

\begin{rtheorem}{\ref{Theorem-NTU}$^*$}\label{Theorem-PCPE}
    For every $\delta \geq 0$, every PCPE gives each player $i$ a payoff of at least $\cm_i$. Moreover, for every $v\in {\fcr}$, there is a $\underline{\delta}<1$ such that for every $\delta\in(\underline{\delta},1)$, there exists a PCPE with discounted payoff equal to $v$.
\end{rtheorem}

\Cref{Theorem-PCPE} establishes that for every discount factor $\delta$, no player $i$ can be driven below her coalitional minmax $\cm_i$ (defined on p. \pageref{Equation-coalitionminmax}). Recall that $C(i)$ is the (potentially singleton) set of players whose utilites are equivalent to player $i$'s, and $\cm_i$ corresponds to the minmax $ \min_{a\in \alternatives} \max_{a'\in E_{C(i)}(a)} v_i(a')$ that would come if coalition $C(i)$ behaved as a unitary agent. This lower bound for PCPE payoffs coincides with that which \Cref{Theorem-NTU} obtained for PCE payoffs. The second sentence of \Cref{Theorem-PCPE} asserts that every feasible payoff profile that gives each player strictly more than her coalitional minmax is supportable as a PCPE. 

\begin{proof}[Proof of \Cref{Theorem-PCPE}]
    We observe that the second sentence of \Cref{Theorem-PCPE} follows from \Cref{Theorem-NTU} and the fact that every PCE is a PCPE. Therefore, it suffices to establish the first sentence of \Cref{Theorem-PCPE}. 
    
     We establish this claim by proving its contrapositive: let $\sigma$ be a plan, and suppose there exists a player ${i}$ that satisfies $U_{i}(\emptyset|\sigma) < \cm_{i}$, we will prove that $\sigma$ cannot be a PCPE. Part 1 of the proof of \Cref{Theorem-NTU} establishes that coalition $C(i)$ has a profitable blocking plan: choosing its myopic best response to the default alternative at every history would result in a payoff no less than $\cm_{i}$. \Cref{Lemma:One-Shot-Deviation-NTU} shows that if the coalition $C(i)$ has a profitable blocking plan, then there exists some history $h$ at which $C(i)$ can profitably block. Let $a^*\equiv \sigma(h)$ denote the action profile prescribed at that history by the plan $\sigma$. The existence of a profitable block by coalition $C(i)$ implies that there exists $a'_{C(i)}\in A_{C(i)}$ such that for all $j\in C(i)$,
\begin{align}\label{Inequality-PCPE}
\begin{aligned}
&(1-\delta) v_j(a'_{C(i)},a^*_{-C(i)} ) + \delta U_j(h,a'_{C(i)},a^*_{-C(i)} |\sigma)\\ >&\, (1-\delta) v_j(a^*_{C(i)},a^*_{-C(i)} ) + \delta U_j(h,a^*_{C(i)},a^*_{-C(i)} |\sigma).
\end{aligned}
\end{align}
Observe that the RHS corresponds to $V_j(a^*)$ in the game $G(\sigma,h)$. We use the following lemma---which we prove afterwards---that this inequality cannot hold in a PCPE. 
     \begin{lemma} \label{Lemma-PCPEalignment}
    Suppose $G$ is a strategic form game in which all players share equivalent utilities.\footnote{In other words, for every pair of players $i,j$, there exist some $\alpha >0$ and $\beta\in \Re$ such that $v_i(a) = \alpha v_j(a) + \beta$ for all $a\in A$.} An action profile $a^*$ is a coalition-proof Nash equilibrium of $G$ if and only if $a^*\in \argmax_{a\in A}v_1(a)$.
\end{lemma}
\Cref{Lemma-PCPEalignment} asserts that in a strategic form game in which all players share equivalent utilities, every CPNE maximizes players' payoffs. \cref{Definition-PCPE}
implies that were $\sigma$ a PCPE, $a^*=\sigma(h)$ must be a CPNE in $G(\sigma, h)$, so then by \Cref{Definition-CPE}(b)(i), $a^*_{C(i)}$ must be a CPNE in the game $G(\sigma,h)|a^*_{-C(i)}$. Applying \Cref{Lemma-PCPEalignment} to the game $G(\sigma,h)|a^*_{-C(i)}$ yields that $a^*_{C(i)}$ maximizes $V_j(a_{C(i)}, a^*_{-C(i)})$ for all $j\in C(i)$. This property contradicts \eqref{Inequality-PCPE}, implying that $\sigma$ is not a PCPE.
\end{proof}

\begin{proof}[Proof of \Cref{Lemma-PCPEalignment}]
We first prove that every $a^* \in \argmax_{a \in A} v_1(a)$ is a coalition-proof Nash equilibrium by induction over the number of players $k$. 

Suppose $k=2$. Given any $a^* \in \argmax_{a \in A} v_1(a)$, since players have equivalent utilities, each player $i$ maximizes $v_i$ by playing $a^*_i$. Hence $a^*$ is self-enforcing in the sense of  \cref{Definition-CPE}(b)(i). Moreover, since $a^*$ is Pareto efficient among all action profiles, it cannot be strictly Pareto dominated by another self-enforcing profile. Therefore, $a^*$ is a coalition-proof Nash equilibrium.

Suppose that $G$ has $|\players|= n$ players. Suppose also that we have established that for any game $G(k)$ with $k<n$ players, action space $A^k$, and payoff functions $\{v^k_i\}_{i=1}^k$, every element of $\argmax_{a^k \in A^k} v^k_1(a^k)$ is a coalition-proof Nash equilibrium.  Let $a^*\in \argmax_{a \in A} v_1(a)$. Let $C \in \coalitions \backslash \{\players\}$ be any proper coalition and consider the induced $|C|$-player game $G|a^*_{-C}$. By definition, $a^*_C \in \argmax_{a_C \in A_C} v_1(a_C, a^*_{-C})$ so the inductive hypothesis implies that $a^*_C$ is a coalition-proof Nash equilibrium in the game $G|a^*_{-C}$. Hence $a^*$ is self-enforcing in $G$ in the sense of  \cref{Definition-CPE}(b)(i). Furthermore, $a^*$ cannot be strictly Pareto dominated by another self-enforcing profile, so $a^*$ is a coalition-proof Nash equilibrium in $G$.

Having proved that every element of $\argmax_{a \in A} v_1(a)$ is a coalition-proof Nash equilibrium, it then follows from \cref{Definition-CPE}(b)(ii) that no action profile outside of $\argmax_{a \in A} v_1(a)$ can be a coalition-proof Nash equilibrium.
\end{proof}

\section{An Application to Distributive Politics}\label{Section-Distributive}

This section studies a repeated distribution problem, in which the players repeatedly choose how to divide a dollar. Such division problems feature prominently in the political economy literature \citep[e.g.][]{baron1989bargaining} and relate to the \emph{simple games} \citep{von1945theory} studied in cooperative game theory. 

In this game, $A\equiv\{a \in \Re^{\players}_+: \sum_{i\in \players} a_i = 1\}$, where player $i$'s payoff from alternative $a$ is $a_i$. Divisions are chosen by a ``winning'' coalition: $\mathcal{W}$ is a set of coalitions such that for every coalition $C$ in $\mathcal{W}$, $E_C(a)=A$ for every division $a$, and for every coalition $C$ not in $\mathcal{W}$, $E_C(a)=\{a\}$. As standard, $\mathcal W$ is monotone and proper.\footnote{In other words, if $C$ is in $\mathcal W$, then $\mathcal{W}$ contains every superset of $C$ but not its complement.} A simple-majority rule protocol corresponds to $\mathcal W\equiv \{C\in \coalitions:|C|\geq \frac{n+1}{2}\}$. This formulation also allows for \emph{veto power}: if a player belongs to every winning coalition ($\cap_{C\in \mathcal{W}} C$), then effectively no block can happen without her approval. We denote the set of veto players by $D \equiv \cap_{C\in \mathcal{W}} C$. 

 \cite{BS2009} study simple majority rule, emphasizing how Dynamic Condorcet Winners exist although the stage game lacks a Condorcet Winner. We focus instead on settings with at least one veto and one non-veto player, and in which veto players are not dictators ($D\notin \mathcal W$). Absent history dependence, these settings lead to highly unequal splits: the veto players steal the entire dollar, emerging as \emph{de facto} dictators of the game. Formally, the set of core alternatives of the stage game is $\mathcal{K}\equiv \{a\in \Re^{\players}_+:\sum_{i\in D} a_i=1\}$. The logic is that any division that gives a positive share to a non-veto player would be profitably blocked by a winning coalition who would extract that share and divide it among themselves.

 Against this backdrop, we evaluate how history dependence can counter this tendency towards unequal splits. Consider a three-player example in which player $1$ alone has veto power; however, she needs the support of at least one other player to block. Player $1$ captures the entire dollar in the stage-game core. Nevertheless, relatively simple schemes in the repeated game can promote equal splits. Consider a core reversion plan that prescribes $\left(\frac{1}{3},\frac{1}{3},\frac{1}{3}\right)$ every period if that has been the division up to now and switches to the stage-game core otherwise. On the equilibrium path, even if player $1$ offers the entire dollar to either player $2$ or $3$, neither finds it profitable to block with her if $(1-\delta)(1)+\delta(0)\leq \frac{1}{3}$. Going further, core-reversion can support any division in the triangle formed by the vertices $\{(2\delta-1,1-\delta,1-\delta),(0,\delta,1-\delta),(0,1-\delta,\delta)\}$, which converges to the unit simplex as $\delta\rightarrow 1$. 

One could go beyond core-reversion to characterize all PCE payoffs. Because the game is convex and exhibits default-independent power, \Cref{Theorem-Stationary} implies that all PCE payoffs can be supported by stationary PCE. Using this result, we find that if players are sufficiently patient, then every payoff in which each non-veto player obtains up to $\delta$ can be supported in a PCE. We depict these outcomes in \Cref{Figure-APS}. 

\begin{figure}[t!]
          \begin{subfigure}{.5\linewidth}
            \centering
            \includegraphics[width=2.5in]{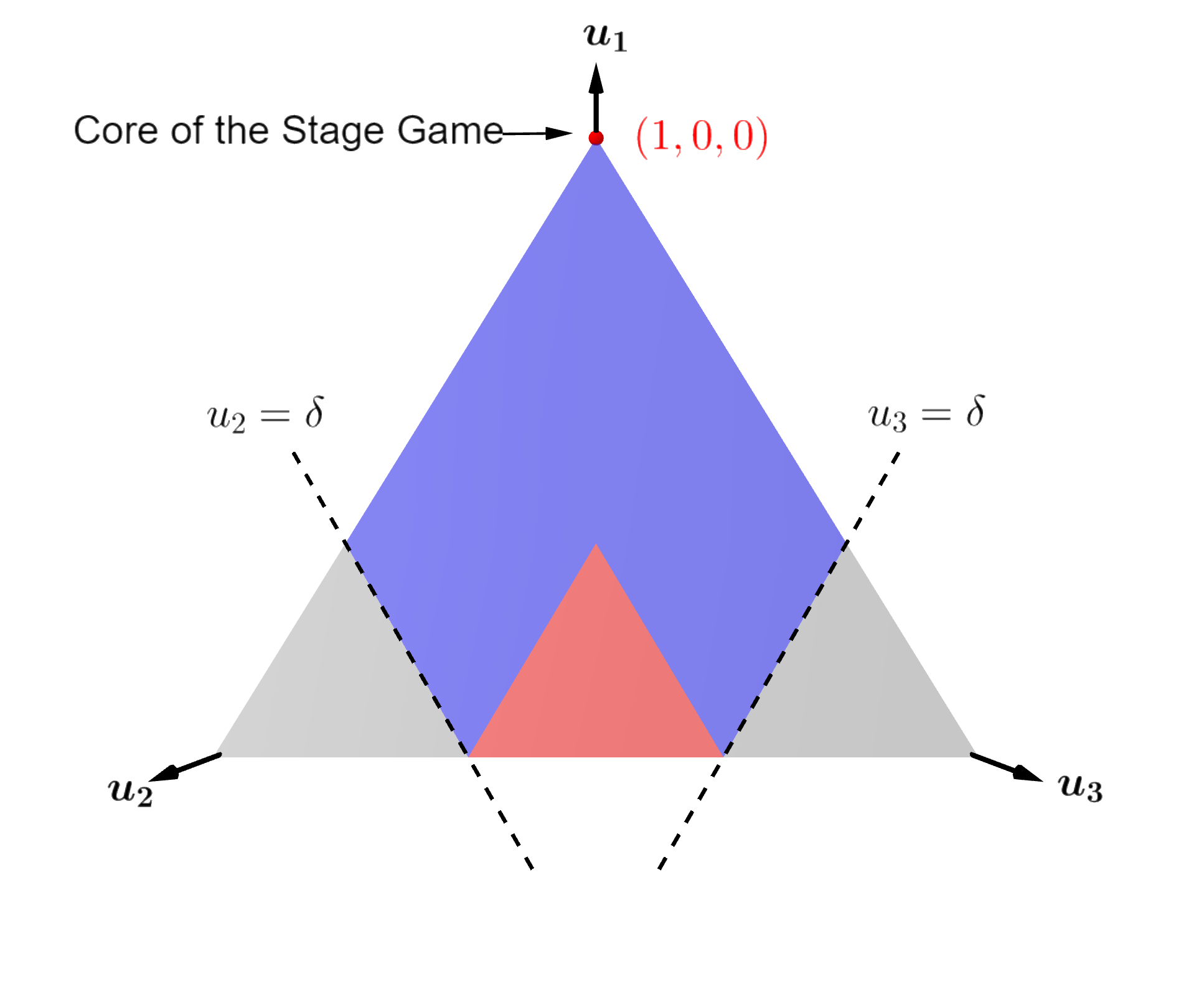}
            \vspace{-0.15in}
            \caption{Perfect monitoring for $\delta>1/2$}\label{Figure-APS}
          \end{subfigure}%
          \begin{subfigure}{.5\linewidth}
            \centering
            \includegraphics[width=2.5in]{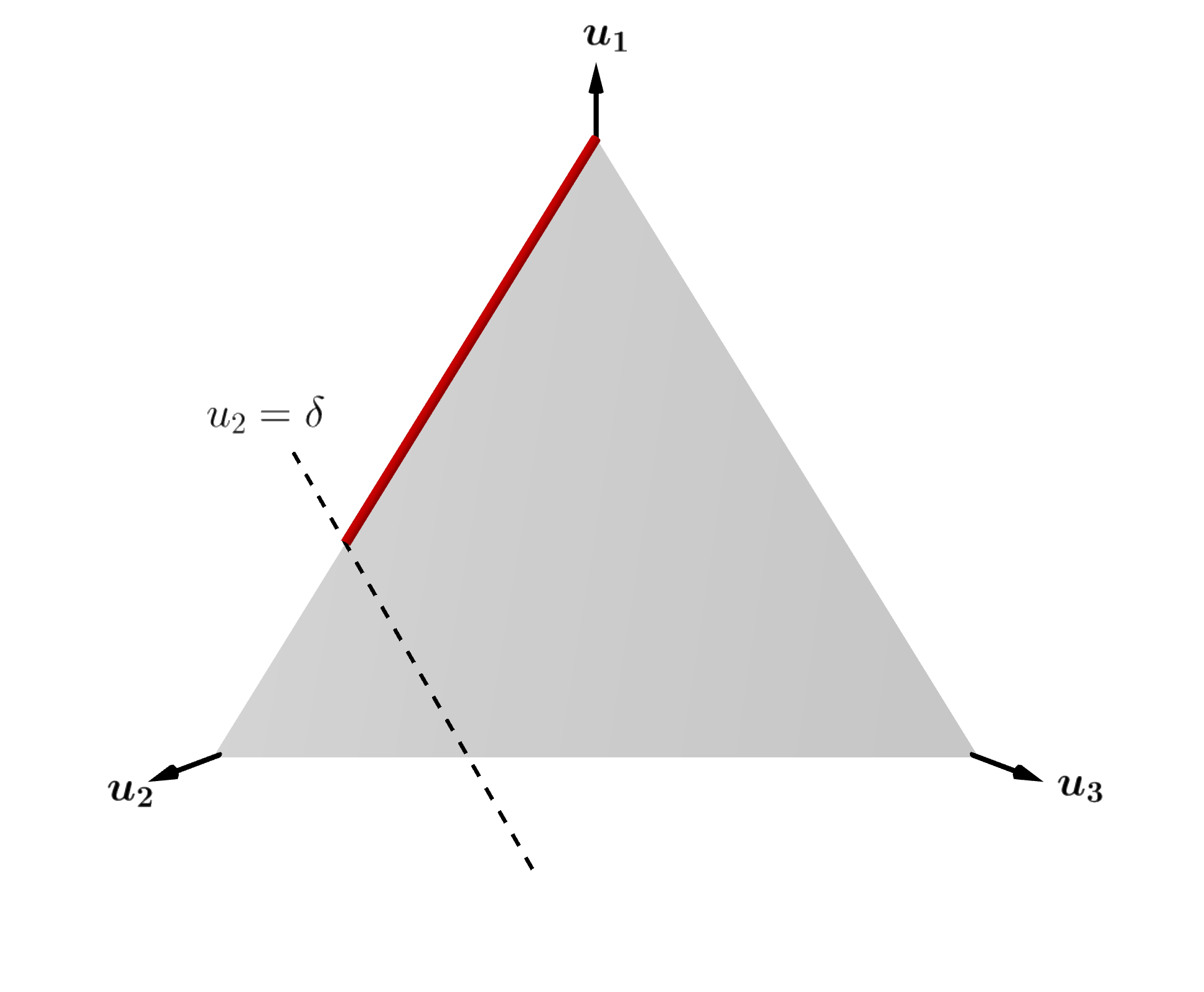}
            \vspace{-0.15in}
            \caption{Secret transfers between $1$ and $2$}\label{Figure-SecretTransfer}
          \end{subfigure}\vspace{0.1in}
          \caption{(A) depicts the set of supportable outcome. The red region depicts payoffs supported by core-reversion, and the blue region illustrates those from other PCE. (B) shows the set of supportable payoffs once coalition $\{1,2\}$ can make secret transfers; player $3$ then obtains $0$.}\label{Figure-SimpleGames}
\end{figure}

These schemes collapse if the veto player can make and receive secret side-payments. Suppose players $1$ and $2$ can transfer utility under the table. \Cref{Theorem-Secret} implies that player $3$ then obtains $0$ in all PCE payoffs, as illustrated in \Cref{Figure-SecretTransfer}. Even worse, player $1$ takes the entire dollar in every period if she can make secret side-payments with each player.
 
These intuitions generalize to $n$-player games in which there are at least one veto and one non-veto player, and veto players are not dictators. We call a coalition $C$ a \emph{minimal winning coalition} if $C$ is a winning coalition and every proper subset is not.

\begin{proposition} \label{Proposition:Simple-Games-Payoffs}
The following hold:
\begin{enumerate}[label=\emph{(\alph*)}]
\item Absent secret transfers, there exists $\underline\delta\geq 0$ such that if $\delta \geq \underline\delta$, the set of supportable payoffs are those that give at least $(1-\delta)$ to each winning coalition. 
\item A winning coalition $C$ obtains the entire dollar in every period in every PCE, regardless of $\delta$, if it can make secret transfers. 
\item The veto players obtain the entire dollar in every period in every PCE, regardless of $\delta$, if every minimal winning coalition can make secret transfers.
\end{enumerate}
\end{proposition}

\Cref{Proposition:Simple-Games-Payoffs}(a) highlights how egalitarian schemes can be supported by history dependence. We use \Cref{Theorem-Stationary} to obtain this fixed discount factor characterization; it turns out that $\underline\delta=0$ if there are at least two veto players so the characterization then is complete. \Cref{Proposition:Simple-Games-Payoffs}(b) and (c) elucidate how secret side-payments destabilize egalitarian schemes: the veto players regain \emph{de facto} dictatorial power if every minimal winning coalition can make secret transfers.

Below, we prove \cref{Proposition:Simple-Games-Payoffs}. We will use an alternative $a\in \alternatives$ to also represent its generated payoff profile $v(a)$. We establish two preliminary results. \Cref{lemma:single-veto-zero-vectors} establish the existence of ``punishment PCEs'' $\{\sigma^i\}_{i=1}^n$ that guarantee $U_i(\emptyset|\sigma^i)=0$ for each player $i$.  \Cref{lemma:OPC} proves that any PCE can be enforced by punishments where every member of a deviating coalition simultaneously obtains $0$.

\begin{lemma}\label{lemma:single-veto-zero-vectors}
	Under perfect monitoring, for every player $i\in \players$, there is a PCE $\sigma^i$ such that $U_i(\emptyset|\sigma^i)=0$ when $\delta > \frac{n -2}{n -1}$.
\end{lemma}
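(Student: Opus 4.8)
The plan is to encode all $n$ conventions $\sigma^1,\dots,\sigma^n$ in a single automaton with state set $\{w^i:i\in N\}$, letting $\sigma^i$ be the run started from $w^i$. In state $w^i$ the recommendation is the allocation $a^{(i)}\in\Delta$ with $a^{(i)}_i=0$ and $a^{(i)}_j=\frac{1}{n-1}$ for $j\neq i$ (and zero transfers, in the TU case). Since following $\sigma^i$ keeps the run in $w^i$ forever, $U_i(\emptyset\mid\sigma^i)=a^{(i)}_i=0$, which is the conclusion once stability is verified. The transition is a scapegoat rule: if coalition $C$ blocks in $w^i$ and the realized outcome is $(a',C,T')$, move to $w^{j^\ast}$, where $j^\ast=i$ if $i\in C$ and $i$'s realized experienced payoff satisfies $u_i(a',T')\le 0$, and otherwise $j^\ast$ is a member of $C\setminus\{i\}$ minimizing the realized experienced payoff $u_j(a',T')$. (For the grand coalition any transition works; a losing $C$ cannot change the allocation and is a degenerate sub-case, so below $C$ may be taken to be a winning coalition strictly contained in $N$.)

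First I would record the two elementary facts that drive everything. (i) $\delta>\frac{n-2}{n-1}$ is equivalent to $1-\delta<\frac{1}{n-1}$; since every alternative lies in $\Delta$, a blocking coalition's realized one-period gain to any single member is at most $1-\delta<\frac{1}{n-1}$ when that member's continuation value is $0$. (ii) Because the unique veto player lies in every winning coalition and, by non-dictatorship, every winning coalition has at least two members, $C\setminus\{i\}$ is always nonempty, so $j^\ast$ is well defined; moreover, since players outside $C$ make the recommended (zero) transfers and transfers within $C$ cancel, $\sum_{j\in C}u_j(a',T')\le\sum_{j\in N}a'_j=1$, so the smallest realized experienced payoff among any $k$ members of $C$ is at most $1/k\le 1$. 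Then I would check stability state by state, writing $V_i(w^j)$ for player $i$'s continuation value in state $w^j$ under the convention. If $C=N$, the per-period total experienced payoff equals $1$ and every state's continuation-value vector also sums to $1$, so the members' payoffs cannot all strictly increase. If $i\in C$ and $u_i(a',T')\le 0$, then $j^\ast=i$, the run returns to $w^i$, and $i$'s deviation payoff is $(1-\delta)u_i(a',T')+\delta\cdot 0\le 0=V_i(w^i)$, so $i$ does not strictly gain. In the remaining case $j^\ast\in C\setminus\{i\}$, so $V_{j^\ast}(w^i)=a^{(i)}_{j^\ast}=\frac{1}{n-1}$, while $V_{j^\ast}(w^{j^\ast})=a^{(j^\ast)}_{j^\ast}=0$ and $u_{j^\ast}(a',T')\le 1$ by (ii); hence $j^\ast$'s deviation payoff is at most $(1-\delta)\cdot 1<\frac{1}{n-1}=V_{j^\ast}(w^i)$, so $j^\ast$ strictly loses. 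In every case some member of $C$ fails to strictly gain, so no profitable one-shot coalitional deviation exists and $\sigma^i$ is stable. The NTU case is the identical argument with each $u_j(a',T')$ replaced by the generated payoff $v_j(a')=a'_j$, and the computations only simplify.

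The step I expect to be the real obstacle is the TU case in which the veto player $i$ is herself part of the blocking coalition and funds large transfers to a partner to make everyone strictly better off; the naive choice of scapegoat fails there. The resolution is exactly to measure the scapegoat after transfers and to choose it within $C\setminus\{i\}$, falling back to punishing $i$ herself precisely when $u_i(a',T')\le 0$ — the two options are always simultaneously available because the coalition's aggregate experienced payoff cannot exceed $1$ while $V_i(w^i)=0$ is the lowest conceivable on-path value for $i$. I would also remark that the threshold $\frac{n-2}{n-1}$ is tight: in the game where the veto player together with any one other player forms a winning coalition, stability of a convention that gives the veto player $0$ forces every other player to receive at least $1-\delta$ at every on-path history, which is feasible in $\Delta$ only when $\delta\ge\frac{n-2}{n-1}$.
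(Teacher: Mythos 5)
Your construction is correct, but it is not the paper's. The paper builds the $n$ conventions asymmetrically: for each non-veto player $i$ it simply plays the stage-game core allocation $(1,0,\ldots,0)$ after every history (trivially stable, and $i$ gets $0$), and only for the veto player does it do any real work, using core-reversion --- equal split among the non-veto players on path, permanent reversion to $(1,0,\ldots,0)$ after any block, with the TU refinement that a block is ignored whenever the veto player's realized experienced payoff is negative. Your single $n$-state automaton, in which state $w^i$ zeroes out player $i$ and the punishment for a block is to jump to the zero-state of a scapegoat chosen \emph{after transfers} from $C\setminus\{i\}$ (falling back to $i$ herself exactly when $u_i(a',T')\le 0$), is a genuinely different and more symmetric construction; the verification goes through because every winning coalition contains the veto player and at least one other member, the blocking coalition's aggregate experienced payoff is capped at $1$, and $1-\delta<\frac{1}{n-1}$. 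What the paper's route buys is that its punishment phase is absorbing and is a stage-game core outcome, so its stability is free and only the on-path constraint needs checking; what your route buys is uniformity --- one automaton and one argument covering all $n$ conventions --- at the cost of having to verify stability in every state, since none of your states is a core allocation and each state's stability leans on the availability of the others as punishments. Your $u_i\le 0$ fallback plays exactly the role of the paper's ``ignore the block if $u_1(a',T')<0$'' clause, so you have correctly identified and resolved the one TU subtlety. Two small points worth tightening: losing coalitions can still rearrange transfers in the TU game, so they are not quite degenerate --- though your aggregate bound $\sum_{j\in C}u_j(a',T')\le\sum_{j\in C}a'_j$ disposes of them by the same scapegoat computation --- and ties in the minimization should be broken by a fixed rule so that the transition is a well-defined function.
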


\begin{proof}
We consider two case, $|D|=1$ and $|D|\ge 2$. The case where $|D|=1$ requires the discount factor to be sufficiently high. The case where there are two or more veto players ($|D|\geq 2$) applies for every discount factor. 
\smallskip

\noindent  \underline{Case 1: $\abs{D}=1$.}
    Suppose without loss of generality that $D$ consists of player $1$. Let $\hat{a} \equiv (1, 0, \ldots, 0)$ denote the unique core alternative, and $\overline{a} \equiv (0, \frac{1}{n-1}, \ldots, \frac{1}{n-1})$ denote the alternative that equally divides the total payoff among all non-veto players.

    For $i\ne 1$, let $\sigma^i$ be the plan that specifies the core alternative $\hat{a}$ as default after every history, so each $\sigma^i$ is a PCE that satisfies $U_i(\emptyset|\sigma^i)=0$

	For $i=1$, let $\sigma^1$ be the plan that specifies $\overline{a}$ on path, and $\hat{a}$ at any history where an alternative distinct from $\overline{a}$ has been chosen in the past. Note that $U_1(\emptyset|\sigma^1)=0$. We will verify that $\sigma^1$ is a PCE. No coalition can profitably block once continuation play reverts back to the core alternative. On the path of play, consider a winning coalition $C \in \mathcal{W}$ blocking and choosing alternative $a'$. Since the game is non-dictatorial, if $C$ is a winning coalition, player $1$ cannot be its only member. Let $j\ne 1$ be a player in $C$. Since $a'_j \le 1$, we have 
	\[
	(1-\delta)a'_j + \delta 0 \le 1-\delta \le \frac{1}{n-1}
	\]
	so player $j$ prefers following the plan $\sigma^1$ over blocking and reverting to the core alternative. As a result, no coalition $C$ can profitably block the plan $\sigma^1$ at any history, so $\sigma^1$ is a PCE.
\smallskip

    \noindent  \underline{Case 2: $\abs{D}\ge 2$.} Without loss of generality, suppose $\{1,2\} \subseteq D$. Let $a^1 \equiv (1, 0, \ldots, 0)$ and $a^2 \equiv(0, 1, 0, \ldots, 0)$ be two alternatives that allocate all payoff to player $1$ and $2$, respectively. It follows that both $a^1$ and $a^2$ are core alternatives. Let $\sigma^1$ be the plan that specifies $a^2$ at all histories; for all $i\ne 1$, let $\sigma^i$ be the plan that specifies $a^1$ at all histories. Each $\sigma^i$ is a PCE, and $U_i(\emptyset|\sigma^i)=0$ for every $i\in \players$.
\end{proof}

\begin{lemma} \label{lemma:OPC}
	Suppose $\mathcal{U}$ is the set of PCE-supportable payoff profiles. For each player $i\in \players$, let $\underline{u}_i \equiv \min_{u\in \mathcal{U} }u_i$ be player $i$'s smallest possible payoff from PCEs. There is a stationary PCE with payoff profile $a$ if and only if for every coalition $C$ and alternative $a' \in E_C(a)$, there is a player $i\in C$ such that	
	\begin{equation} \label{inequality:simple game incentive constraints}
	(1-\delta)a'_i + \delta \underline{u}_i \le a_i
	\end{equation}
\end{lemma}
\begin{proof}
To see the ``only if'' direction, suppose \eqref{inequality:simple game incentive constraints} fails for some coalition $C$ and $a'\in E_C(a)$. In other words, suppose there exists a  coalition $C$ and alternative $a'$ such that $(1-\delta)a'_i + \delta \underline{u}_i > a_i \text{ for all } i\in C$. Towards a contradiction, suppose also that there exists a stationary PCE $\sigma$ that supports payoff $a$. Since $\sigma$ is a PCE, it follows that $U_i(h|\sigma ) \ge \underline{u}_i$ for every $i\in C$ and all $h\in \histories$. As a result, for every $i\in C$,
\begin{align*}
	(1-\delta)a'_i + \delta U_i(a',\{C\}|\sigma) \ge (1-\delta)a'_i + \delta \underline{u}_i > a_i.
\end{align*}
Moreover, since $\sigma$ is stationary, it always plays $a$ on path. The inequality above then implies that $(a',C)$ is a profitable block for coalition $C$ on path, contradicting $\sigma$ being a stationary PCE.

For the ``if'' direction, \eqref{inequality:simple game incentive constraints} implies that for every coalition $C$ and alternative $a' \in E_C(a)$, there exits a player $i{[a',C]}$ and a PCE $\sigma{[a',C]}$ such that 
\begin{equation} \label{inequality:aps enforcement on path}
(1-\delta)a'_{i[a',C]} + \delta U_{i[a',C]}\big(\emptyset\,|\,\sigma[a',C]\big) \le a_{i[a',C]}.
\end{equation}
Since the stage game exhibits default-independent power, by \cref{Theorem-Stationary}, we can without loss assume that each $\sigma{[a',C]}$ is a stationary PCE.

Consider a plan $\sigma^*$ that specifies $a$ on path, but switches to $\sigma[a',C]$ if coalition $C$ blocks to implement $a'$. Inequality \eqref{inequality:aps enforcement on path} implies that on path, no coalition can find profitably block. In addition, the fact that each $\sigma[a',C]$ is a PCE ensures that after any off-path history, no coalition can profitably block. Finally, $\sigma^*$ is also stationary since it is stationary on path, and each $\sigma[a',C]$ is also stationary. Therefore, $\sigma^*$ is a stationary PCE that supports payoff $a$.
\end{proof}

\paragraph{Proof of \cref{Proposition:Simple-Games-Payoffs}.} 

\noindent  \underline{Statement (a).} Set $\underline{\delta}=\frac{n-2}{n-1}$. By \cref{lemma:single-veto-zero-vectors}, there exist PCEs $\{\sigma^i: i\in \players \}$ satisfying $U_i(\emptyset|\sigma^i)=0$ for all $i\in \players$. It is straightforward to see that no players shared aligned payoffs in the stage game; in addition, 
no single player can form a winning coalition since the game is non-dictatorial. It follows that each player $i$'s individual minmax is $\im_i = 0$. Moreover, this minmax payoff is achieved by the PCE $\sigma^i$.

By \cref{lemma:OPC}, in order for a payoff profile $u$ to be supported by a stationary PCE, it is necessary and sufficient that for every winning coalition $C \in \mathcal{W}$, there exist no alternative $a'\in E_C(u)$ such that
\begin{equation} \label{inequality:simple game incentive constraints NTU}
(1-\delta)a'_i + \delta \cdot 0 = (1-\delta)a'_i > u_i \text{ for all }  i\in C.
\end{equation}
Note that the condition above is equivalent to $\sum_{i\in C} u_i \ge  1-\delta \text{ for every }  C\in\mathcal{W}$, since if $\sum_{i\in C} u_i < (1 - \delta) \cdot 1 $ for some coalition $C\in \mathcal{W}$, there would be a certain $a'\in E_C(u)$ representing a division of total payoff $1$ among players in $C$, such that (\ref{inequality:simple game incentive constraints NTU}) holds for every $i\in C$. It follows that a payoff profile $u$ is supportable by a stationary PCE if and only if $\sum_{i\in C} u_i \ge 1-\delta$ for every $C \in \mathcal{W}$. Finally, \cref{Theorem-Stationary} implies that this same set is also the set of PCE-supportable payoff profiles.\smallskip

\noindent  \underline{Statement (b).} Because a winning coalition can obtain the entire dollar by blocking, its minmax value is $1$. This statement then follows immediately from \Cref{Theorem-Secret}.\smallskip

\noindent  \underline{Statement (c).} Let $\hat{\mathcal{W}}$ denote the set of minimal winning coalitions. By definition, $\hat{\mathcal{W}}\subseteq \mathcal{W}$ so $\cap_{C\in \mathcal{W}} C \subseteq \cap_{C\in \hat{\mathcal{W}}} C $. Furthermore, $\cap_{C\in \hat{\mathcal{W}}} C \subseteq \cap_{C\in \mathcal{W}} C$, since otherwise there exists $i\in \cap_{C\in \hat{\mathcal{W}}} C$ and $\tilde{C}\in \mathcal{W}$ such that $i\notin \tilde{C}$, but this would lead to a contradiction since $\tilde{C}$ must contains a minimal winning coalition $\hat{C}$, and $i\in \hat{C}$. So $\cap_{C\in \hat{\mathcal{W}}} C = \cap_{C\in \mathcal{W}} C = D$. By \Cref{Theorem-Secret}, every $C\in \hat{\mathcal{W}}$ obtains total payoff $1$.  This implies the total payoff for players in $D$ is $1$.\qed
\end{document}